\numberwithin{equation}{section}
\newtheorem{theorem}{Theorem}[section]
\newtheorem{proposition}{Proposition}[section]
\newtheorem{identity}{Identity}
\newtheorem{prop}{Proposition}[section]
\theoremstyle{remark}
\newtheorem{rem}{\sl Remark}
\newcommand{\vast}{\bBigg@{3}}
\newcommand{\Vast}{\bBigg@{4}}
\newcommand{\bra}[1]{\langle\,#1\,|}
\newcommand{\ket}[1]{|\,#1\,\rangle}
\newcommand{\moy}[1]{\langle\,#1\,\rangle}
\def\tr{\operatorname{tr}}
\newcommand{\End}{\operatorname{End}}
\DeclareMathOperator*{\Motimes}{\text{\raisebox{0.25ex}{\scalebox{0.8}{$\bigotimes$}}}}
\DeclareMathSymbol{\Alpha}{\mathalpha}{operators}{"41}
\DeclareMathSymbol{\Beta}{\mathalpha}{operators}{"42}
\DeclareMathSymbol{\Epsilon}{\mathalpha}{operators}{"45}
\DeclareMathSymbol{\Zeta}{\mathalpha}{operators}{"5A}
\DeclareMathSymbol{\Eta}{\mathalpha}{operators}{"48}
\DeclareMathSymbol{\Iota}{\mathalpha}{operators}{"49}
\DeclareMathSymbol{\Kappa}{\mathalpha}{operators}{"4B}
\DeclareMathSymbol{\Mu}{\mathalpha}{operators}{"4D}
\DeclareMathSymbol{\Nu}{\mathalpha}{operators}{"4E}
\DeclareMathSymbol{\Omicron}{\mathalpha}{operators}{"4F}
\DeclareMathSymbol{\Rho}{\mathalpha}{operators}{"50}
\DeclareMathSymbol{\Tau}{\mathalpha}{operators}{"54}
\DeclareMathSymbol{\Chi}{\mathalpha}{operators}{"58}
\DeclareMathSymbol{\omicron}{\mathord}{letters}{"6F}
\DeclareMathOperator{\thd}{\vartheta}
\DeclareMathOperator{\ths}{\theta}
\newcommand{\mathsc}[1]{{\normalfont\textsc{#1}}}
\def\barE{\underline{\mathbf E}}
\newcommand\restrict[1]{\raisebox{-.5ex}{$|$}_{#1}}
\begin{document}

\begin{flushright}
LPENSL-TH-07/25
\end{flushright}

\bigskip

\bigskip

\begin{center}



\textbf{\Large On correlation functions of the open XYZ spin 1/2 chain with boundary fields related by a constraint} \vspace{45pt}
\end{center}

\begin{center}
{\large \textbf{G. Niccoli}\footnote{{Univ Lyon, Ens de Lyon, Univ
Claude Bernard, CNRS, Laboratoire de Physique, F-69342 Lyon, France;
giuliano.niccoli@ens-lyon.fr}} }, {\large \textbf{V. Terras}\footnote{{Université Paris-Saclay, CNRS,  LPTMS, 91405, Orsay, France; veronique.terras@universite-paris-saclay.fr}} }
\end{center}

\begin{center}
\vspace{45pt}

\today
\vspace{45pt}
\end{center}

\begin{abstract}

In this paper, we consider the quantum XYZ open spin-1/2 chain with boundary fields. We focus on the particular case in which the six boundary parameters are related by a single constraint enabling us to describe part of the spectrum by standard Bethe equations. We derive for this model exact representations for a set of elementary blocks of correlation functions, hence generalising to XYZ  the results obtained in the XXZ open case in \cite{NicT23}.
Our approach is also similar to the approach proposed in the XXZ case \cite{NicT23}:
we solve the model by Sklyanin's version of the quantum Separation of Variables, using Baxter's Vertex-IRF transformation;
in this framework, we identify a basis of local operators with a relatively simple action on the transfer matrix eigenstates; we then use the solution of the quantum inverse problem and our recent formulae on scalar products of separate states \cite{NicT24} to compute some of the corresponding matrix elements, which can therefore be considered as elementary building blocks for the correlation functions.
The latter are expressed in terms of multiple sums in the finite chain, and as multiple integrals in the thermodynamic limit.
Our results evidence that, once the basis of local operators is properly chosen, the corresponding building blocks for correlation functions have a similar structure in the XXX/XXZ/XYZ open chains, and do not require any insertion of non-local “tail operators”.


\end{abstract}

\parbox{12cm}{\small }\newpage

\tableofcontents
\newpage

\section{Introduction}

In this paper, we extend our approach \cite{NicPT20,Nic21,NicT22,NicT23} to compute correlation functions in the framework of the quantum Separation of Variables (SoV) \cite{Skl85,Skl85a,Skl90,Skl92,Skl95,Skl96,BabBS96,Smi98a,Smi01,DerKM01,DerKM03,DerKM03b,BytT06,vonGIPS06,FraSW08,AmiFOW10,NicT10,Nic10a,Nic11,FraGSW11,GroMN12,GroN12,Nic12,Nic13,Nic13a,Nic13b,GroMN14,FalN14,FalKN14,KitMN14,NicT15,LevNT16,NicT16,KitMNT16,JiaKKS16,KitMNT17,MaiNP17,KitMNT18}, from the XXX/XXZ spin chains to the case of the XYZ open spin chain with the most general integrable boundary conditions up to one boundary constraint, as described in \cite{NicT24}. 

Let us recall that the quantum XYZ spin 1/2 chain has historically represented a more involved integrable quantum model, in particular with respect to the non-completely anisotropic XXX and XXZ Heisenberg chains. In fact, the Bethe Ansatz approach \cite{Bet31} for the spectrum cannot be directly applied to this model due to the non-conservation of the spin. 
Forty years after Bethe's work, a breakthrough was realised by Baxter 
\cite{Bax71a,Bax71b,Bax72,Bax73a,Bax76,Bax77,Bax82L,Bax02,Bax04}, who managed to describe the eigenvalues and eigenstates of the eight-vertex model, and of the associated XYZ spin chain,  in terms of those of the Interaction-Round-a-Face (IRF) elliptic solid-on-solid (SOS) model, by introducing the so-called Vertex-IRF transformation. Indeed, for the latter model, the Bethe Ansatz spectrum analysis can be done both in its Coordinate \cite{Bax73a,AlcBBBQ87,Bax02} and Algebraic (ABA) \cite{FadT79,FelV96a,FelV96b,Fel95} version. Nevertheless, Baxter's solution produces representations for the XYZ eigenstates which are much more complicated than in the XXX/XXZ cases: these eigenstates are written as integrals of non-local transformations of elliptic SOS Bethe eigenstates, which reduce to finite sums only in the cyclic cases \cite{FadT79,FelV96a,FelV96b,Fel95}.

Remarkable progresses on the exact computation of correlation functions of quantum integrable models, and in particular of quantum integrable spin chains, have been achieved in the last decades. The first results were obtained by the Kyoto group \cite{JimM95L,JimMMN92,JimM96}, who computed correlation functions for the XXX/XXZ spin 1/2 quantum chains directly in infinite volume, at zero temperature and at zero magnetic field, in the framework of the $q$-vertex operator approach. A few years later, the Lyon group developed, instead, a genuine finite lattice approach to compute correlation functions from the study of the model in finite volume in the framework of the Quantum Inverse Scattering Method (QISM) \cite{KitMT99,MaiT00,KitMT00,KitMST02a,KitMST05a,KitMST05b,KitKMST07}. The  main ingredients of the latter approach are the simple monomial form of the Bethe states in the ABA approach \cite{FadS78,FadST79}, the resolution of the quantum inverse problem \cite{KitMT99,MaiT00,GohK00} (which allows to compute the action of local operators on Bethe states by Yang-Baxter commutation relations), and simple determinant representations for the scalar products of Bethe states \cite{Sla89,KitMT99}. 
This method has been developed for the periodic and some specific open boundary conditions for the XXX/XXZ spin-1/2 Heisenberg chain  \cite{KitMT99,MaiT00,KitMT00,KitMST02a,KitMST05a,KitMST05b,KitKMST07,KitKMNST07,KitKMNST08}, for some higher spin chains \cite{Kit01,Deg12}, for the cyclic SOS model \cite{LevT13a,LevT13b,LevT14a}, and also for the quantum non-linear Schr\"odinger model \cite{KitKMST07}.
On one side, these correlation functions for the periodic case have confirmed the results of the $q$-vertex approach\footnote{It is also worth mentioning the interesting results on correlation functions obtained in \cite{BooJMST05,BooJMST06,BooJMST06a,BooJMST06b,BooJMST06c,BooJMST07,BooJMST09,JimMS09,JimMS11,MesP14,Poz17} through the study of hidden Grassmann structure.} and generalized them to the case of a non-zero global magnetic field.
On the other side, they have been at the basis of a breakthrough on the analytical study of long distances two-point and multi-point correlation functions \cite{KitKMST09a,KitKMST09b,KitKMST09c,KozMS11a,KozMS11b,KozT11,KitKMST11a,KitKMST11b,KitKMST12,DugGK13,KitKMT14} as well as of comparison with experimental settings like neutron scattering \cite{KenCTVHST02} thanks to numerical study of the dynamical structure factors in \cite{CauHM05,CauM05,PerSCHMWA06}. Generalisations of these results to non-zero-temperature correlation functions were also obtained by the Wuppertal group \cite{GohKS04,GohKS05,BooGKS07,GohS10,GohSS10,DugGKS15,GohKKKS17} in the framework of the so-called quantum transfer matrix approach.

Despite these progresses, the study of the correlation functions of the completely anisotropic XYZ spin chain is still a challenging problem. On the one hand, some results were obtained, directly in the infinite volume limit, in the $q$-vertex operator approach, in \cite{LasP98,LasP98b,Har00,Las02}: there, the XYZ correlation functions were expressed in terms of those of the SOS model by using the free boson description of the SOS models \cite{LukP96} and by means of Baxter's Vertex-IRF transformation; the latter results into the appearance of  non-local “tail operators”, which complexifies considerably the obtention of generic final expressions for the XYZ correlation functions.
On the other hand,  the study of the correlation functions of the finite size XYZ quantum spin 1/2 chain has so far remained out of reach of the Lyon group method based on ABA, even if the resolution of the inverse problem has been known for a long time \cite{MaiT00}, and despite the recent derivation of Slavnov’s type determinant representations for the scalar products of Bethe states  in the closed chain under periodic boundary conditions \cite{SlaZZ20,KulS23}\footnote{See nevertheless the study of \cite{KulS24} in the limit of free fermions.}. The central reason for it is the aforementioned complexity of the XYZ eigenstate representations due to the Vertex-IRF transformation.

In our previous papers  \cite{NicPT20,Nic21,NicT22,NicT23}, in the framework of the SoV approach \cite{Skl85,MaiN18}, we were able to develop a method that allows to compute correlation functions of the XXX and XXZ spin 1/2 chains under general integrable boundary conditions, which are not accessible within ABA. Here, we extend our method to the XYZ case, overcoming the  challenge to define a genuine finite lattice method for this type of XYZ spin chains, and we do this directly in the case of general integrable open boundary conditions.
Indeed, the study of spin chains (XXX, XXZ or XYZ) with general integrable boundary conditions is a longstanding problem which has attracted a large research activity \cite{AlcBBBQ87,Skl88,GhoZ94,JimKKKM95,JimKKMW95,FanHSY96,Nep02,Nep04,CaoLSW03,YanZ07,Bas06,BasK07,BasB13,BasB17,KitKMNST07,KitKMNST08,CraRS10,CraRS11,FilK11,FraSW08,FraGSW11,Nic12,CaoYSW13b,FalN14,FalKN14,KitMN14,BelC13,Bel15,BelP15,BelP15b,AvaBGP15,BelP16,GriDT19,QiaCYSW21,XinCYW24}. This problem is also related to the study of classical stochastic models, like asymmetric simple exclusion models \cite{deGE05}, and to out-of-equilibrium and transport properties in the spin chains, see e.g. \cite{Pro11} and related references. 

Let us recall that the quantum Separation of Variables is a non-Ansatz exact method, first developed by Sklyanin \cite{Skl85,Skl85a,Skl90,Skl92,Skl95,Skl96}, in the framework of Quantum Inverse Scattering Method \cite{FadS78,FadST79,FadT79,Skl79,Skl79a,FadT81,Skl82,Fad82,Fad96,BogIK93L}. It has recently  been reinvented relying only on the integrable structure of the models, i.e. its abelian algebra of conserved charges and fusion relations. The SoV approach has proven to have a large range of application to various integrable quantum models \cite{Skl85,Skl85a,Skl90,Skl92,Skl95,Skl96,BabBS96,Smi98a,Smi01,DerKM01,DerKM03,DerKM03b,BytT06,vonGIPS06,FraSW08,AmiFOW10,NicT10,Nic10a,Nic11,FraGSW11,GroMN12,GroN12,Nic12,Nic13,Nic13a,Nic13b,GroMN14,FalN14,FalKN14,KitMN14,NicT15,LevNT16,NicT16,KitMNT16,JiaKKS16,KitMNT17,MaiNP17,KitMNT18,RyaV19,MaiN19,MaiN19d,MaiN19b,MaiN19c,MaiNV20,RyaV20}, leading by construction to a complete characterization of the spectrum (eigenvalues and eigenstates). Moreover, it overcomes the restricted applicability of the ABA approach, then representing the natural framework in which to extend the original Lyon group approach beyond ABA.

From a more technical point of view,  the choices here made to study the open XYZ chain and to use the SoV approach are at the basis of our ability to compute the action of local operators on transfer matrix eigenstates and then to compute their matrix elements. In fact, the SoV approach has been already developed for the open XYZ spin chain \cite{FalN13,NicT24} and it has proven to lead to simple representations of its spectrum and of the scalar product of the class of the so-called separate states, which contains as special instances the transfer matrix eigenstates. 

More in details, the SoV representation of each eigenstate can intrinsically be rewritten in a boundary Bethe state (i.e. ABA-like) form, as a monomial of gauged $B$-operators acting on a defined “reference state”. While this construction will be detailed in the bulk of the paper, here, we want to stress that the “reference states” are simple tensor product states and the gauged $B$-operators are simple linear combinations of the elements of the boundary monodromy matrix.
This simplicity is central to derive the first fundamental result of this paper, i.e. the extension to the XYZ case of the so-called boundary-bulk decomposition of boundary Bethe states. By using it and the known solution of the quantum inverse problem, we are able to obtain our second fundamental result, i.e. to define a basis of local operators for which we can easily compute their action on boundary Bethe states, so also on transfer matrix eigenstates, as a multiple sum of boundary Bethe states. 
Then, the restrictions on the boundary parameters (one-constraint) and on the set of local operators (conservation of the number of gauged $B$-operators in the boundary Bethe states) are here imposed to use our rewriting \cite{NicT24} of the SoV scalar products of separate states as some generalized Slavnov’s type determinant, which is the third fundamental result needed to compute correlation functions. By using it, we derive multiple sum representations for the corresponding building blocks for correlation functions on the finite lattice, and then, taking the thermodynamic limit, we obtain multiple integral representations.

The paper is organized in the following way. In Section~\ref{sec-model}, we recall the general algebraic framework for the quantum XYZ open spin 1/2 chain. In Section~\ref{sec-gauge}, we  introduce the Vertex-IRF transformation, the gauge transformed version of the boundary monodromy matrix, their algebra and the representation of the transfer matrix in terms of them as well as their boundary-bulk decomposition. In Section~\ref{sec-SoV}, we define the SoV-basis as the pseudo-eigenbasis of the gauged $\mathcal{B}_-$ boundary operators and explain how to characterize the transfer matrix spectrum and eigenstates in this SoV basis. We in particular discuss the Bethe-like rewriting of the separate states and the functional representation of the spectrum by ordinary $TQ$-equation in the case with one constraint. In Section~\ref{sec-act}, we derive two fundamental tools for the computation of correlation functions, i.e. the boundary-bulk decomposition of gauged Bethe-like boundary states and the action on them of a basis of local operators. In Section~\ref{sec-corr}, we compute the matrix elements on the finite chain, and we present our main results on the infinite chain. In Section~\ref{sec-conclusion}, we put forward some conclusions and outlooks.
Some technical details (properties of the bulk gauged Yang-Baxter generators, trigonometric limit) are gathered in Appendices.

\section{The XYZ open spin chain: general algebraic framework}
\label{sec-model}

We consider here the XYZ open spin-1/2 chain with boundary fields. The Hamiltonian of this model is given by
\begin{equation}\label{Ham}
   H=
   \sum_{a\in\{x,y,z\}}\!\Bigg[\sum_{n=1}^N  J_a \sigma_n^a\sigma_{n+1}^a +h_+^a\sigma_1^a+h_-^a\sigma_N^a\Bigg],
\end{equation}
in which $\sigma_n^{x,y,z}\in\End\mathcal{H}_n$ denote the usual Pauli matrices, acting as local spin-1/2 operators on the local quantum space $\mathcal{H}_n=\mathbb{C}^2$ at site $n$ of the chain. We set $\mathcal H=\otimes_{n=1}^N\mathcal{H}_n$.
$J_x,J_y,J_z$ are three different coupling constants, parametrized as
\begin{equation}\label{coupling}
   J_x=\frac{\ths_4(\eta)}{\ths_4(0)}, \qquad
   J_y=\frac{\ths_3(\eta)}{\ths_3(0)},\qquad
   J_z=\frac{\ths_2(\eta)}{\ths_2(0)}, 
\end{equation}
in terms of two complex parameters, the so-called crossing parameter $\eta$ and an elliptic parameter $\omega$ such that $\Im(\omega) >0$. We shall moreover suppose that $\mathbb{Z}\eta\cap (\mathbb{Z}\pi+\mathbb{Z}\pi\omega)=\emptyset$.
Here and in the following, 
the functions $\ths_j(\lambda)\equiv\theta_j(\lambda|\omega)$, $j=1,2,3,4$, denote the usual theta functions \cite{GraR07L} with quasi-periods $\pi$ and $\pi\omega$, and $\ths(\lambda)\equiv\ths_1(\lambda)$. The spins at site 1 and at site $N$ are coupled to some boundary fields $h_+$ and $h_-$. We parametrize the components  $h_\pm^x,\ h_\pm^y,\ h_\pm^z$ of these boundary fields as
\begin{alignat}{2}
   &h_\pm^x=c^x_\pm \,\frac{\ths_1(\eta)}{\ths_4(0)}
   \qquad & &\text{with} \quad 
   c^x_\pm=\prod_{\ell=1}^3\frac{\ths_4(\alpha_\ell^\pm)}{\ths_1(\alpha_\ell^\pm)},
   \quad \label{h^x}\\
   &h_\pm^y=i c^y_\pm \,\frac{\ths_1(\eta)}{\ths_3(0)}
   \qquad & &\text{with}\quad 
   c^y_\pm=- \prod_{\ell=1}^3\frac{\ths_3(\alpha_\ell^\pm)}{\ths_1(\alpha_\ell^\pm)},
   \quad \\
   &h_\pm^z=c^z_\pm \,\frac{\ths_1(\eta)}{\ths_2(0)}
   \qquad & &\text{with}\quad  
   c^z_\pm=\prod_{\ell=1}^3\frac{\ths_2(\alpha_\ell^\pm)}{\ths_1(\alpha_\ell^\pm)}, \label{h^z}
\end{alignat}
in terms of six additional parameters $\alpha_\ell^\sigma$, $\ell=1,2,3$, $\sigma=\pm$, that we shall call boundary parameters.

The problem of the diagonalization of the Hamiltonian \eqref{Ham} can be formulated in the algebraic framework of the Quantum Inverse Scattering Method \cite{FadST79}, or more precisely in the boundary version proposed in \cite{Skl88}.
It relies on the R-matrix $R(\lambda)$ of the eight-vertex model ($\lambda$ being the so-called spectral parameter), and on the elliptic scalar boundary K-matrix $K(\lambda)$ found in \cite{InaK94,HouSFY95}, satisfying with $R(\lambda)$ the reflection equation formulated in \cite{Che84}.
The eight-vertex R-matrix is the elliptic solution of the Yang-Baxter equation, given as
\begin{align}\label{R-mat}
R(\lambda )
&=
\begin{pmatrix}
  \mathrm{a}(\lambda) & 0 & 0 & \mathrm{d}(\lambda ) \\ 
0 & \mathrm{b}(\lambda ) & \mathrm{c}(\lambda ) & 0 \\ 
0 & \mathrm{c}(\lambda ) & \mathrm{b}(\lambda ) & 0 \\ 
\mathrm{d}(\lambda ) & 0 & 0 & \mathrm{a}(\lambda )
\end{pmatrix}
\in\End(\mathbb{C}^2\otimes\mathbb{C}^2),
\end{align}
with
\begin{alignat}{2}
 &\mathrm{a}(\lambda )
  =\frac{2\thd_4(\eta)\, \thd_1(\lambda+\eta )\,\thd_4(\lambda  )}{\ths_2(0)\, \thd_4(0)}, 
  & \qquad
 &\mathrm{b}(\lambda )
  =\frac{2\thd_4(\eta )\, \thd_1(\lambda)\, \thd_4(\lambda +\eta  )}  {\ths_2(0)\, \thd_4(0)}, 
            \label{a-b-R} \\
 &\mathrm{c}(\lambda ) 
  =\frac{2\thd_1(\eta  )\, \thd_4(\lambda )\, \thd_4(\lambda +\eta  )} {\ths_2(0)\, \thd_4(0)}, 
  &
  &\mathrm{d}(\lambda )
   =\frac{2\thd_1(\eta )\,\thd_1(\lambda +\eta )\, \thd_1(\lambda  )} {\ths_2(0 )\, \thd_{4}(0)}.
              \label{c-d-R}
\end{alignat}
Here $\thd_j(\lambda)\equiv\theta_j(\lambda|2\omega)$, $j=1,2,3,4$, denote the theta functions with quasi-periods $\pi$ and $2\pi\omega$.
The corresponding scalar boundary K-matrix $K(\lambda)$ is \cite{InaK94,HouSFY95},
\begin{equation}\label{mat-K}
  K(\lambda)=\frac{\ths_1(2\lambda)}{2\ths_1(\lambda)}
   \left[ \mathbb{I}+c^x\frac{\ths_1(\lambda)}{\ths_4(\lambda)}\, \sigma^x+i c^y\frac{\ths_1(\lambda)}{\ths_3(\lambda)}\, \sigma^y+c^z\frac{\ths_1(\lambda)}{\ths_2(\lambda)}\, \sigma^z\right]
   \in\End(\mathbb{C}^2),
\end{equation}
where $c^x, c^y, c^z$ are some (a priori arbitrary) coefficients. In the following, we consider two such boundary K-matrices, encoding the components of the two boundary fields $h_-$ and $h_+$ in \eqref{Ham}-\eqref{h^z}:
\begin{equation}\label{K-}
   K_-(\lambda)=K(\lambda-\eta/2;\alpha_1^-,\alpha_2^-,\alpha_3^-)
\end{equation}
is the matrix \eqref{mat-K} in which $\lambda$ has been replaced by $\lambda-\eta/2$ and $c^x, c^y, c^z$ correspond to the coefficients $c^x_-,c^y_-,c^z_-$ given in terms of the boundary parameters $\alpha_\ell^-$, $1\le \ell \le 3$, as in \eqref{h^x}-\eqref{h^z};
\begin{equation}\label{K+}
   K_+(\lambda)=K(\lambda+\eta/2;\alpha_1^+,\alpha_2^+,\alpha_3^+)
\end{equation}
is the matrix \eqref{mat-K} in which $\lambda$ has been replaced by $\lambda+\eta/2$ and $c^x, c^y, c^z$ correspond to the coefficients $c^x_+,c^y_+,c^z_+$ given in terms of the boundary parameters $\alpha_\ell^+$, $1\le \ell \le 3$, as in \eqref{h^x}-\eqref{h^z}.

Let $\mathcal{H}_0$ be a 2-dimensional auxiliary space. As usual, we define the bulk monodromy matrix $T(\lambda)\in \End(\mathcal{H}_0\otimes\mathcal{H})$ as the following product of R-matrices\footnote{We have used here the same order as in our previous articles \cite{NicT22,NicT23,NicT24}.}, 
\begin{equation}\label{mat-T}
  T(\lambda)=R_{01}(\lambda-\xi_1-\eta/2)\ldots R_{0N}(\lambda-\xi_N-\eta/2) 
\end{equation}
in which $R_{0n}(\lambda)\in\End(\mathcal{H}_0\otimes\mathcal{H}_n)$. We have moreover introduced arbitrary inhomogeneity parameters $\xi_1,\ldots,\xi_N$. Following \cite{Skl88}, we can then construct the boundary monodromy matrix  $\mathcal{U}_-(\lambda)\in \End(\mathcal{H}_0\otimes\mathcal{H})$ as
\begin{equation}\label{def-U-}
  \mathcal{U}_-(\lambda)=T(\lambda)\, K_-(\lambda)\, \hat T(\lambda) ,
\end{equation}
and the boundary transfer matrix $\mathcal{T}(\lambda)\in \End \mathcal{H} $ as
\begin{equation}\label{transfer}
  \mathcal{T}(\lambda)
  =\tr_{\mathcal{H}_0}\left[ K_+(\lambda)\, T(\lambda)\, K_-(\lambda)\,\hat T(\lambda)\right]
  =\tr_{\mathcal{H}_0}\left[ K_+(\lambda)\,\mathcal{U}_-(\lambda)\right],
\end{equation}
in which we have considered that the boundary K-matrices $K_\pm(\lambda)$ in \eqref{def-U-} and in \eqref{transfer} act on $\mathcal{H}_0$.
In \eqref{def-U-}-\eqref{transfer}, we have also defined
\begin{equation}\label{mat-That}
   \hat T(\lambda)
   =(-1)^N \sigma_0^y\, T^{t_0}(-\lambda)\, \sigma_0^y
   =R_{0N}(\lambda+\xi_N-\eta/2)\ldots R_{01}(\lambda+\xi_1-\eta/2).
\end{equation}
The boundary monodromy matrix $\mathcal{U}_-(\lambda)$ can be conveniently written as a $2\times 2$ matrix on the auxiliary space $\mathcal{H}_0$, with entries being quantum operators acting on $\mathcal{H}$:
\begin{equation}\label{mat-U-}
  \mathcal{U}_-(\lambda)=\begin{pmatrix} \mathcal{A}_-(\lambda) & \mathcal{B}_-(\lambda) \\ \mathcal{C}_-(\lambda) & \mathcal{D}_-(\lambda) \end{pmatrix},
  \qquad
  \mathcal{A}_-(\lambda),\ \mathcal{B}_-(\lambda),\ \mathcal{C}_-(\lambda),\ \mathcal{D}_-(\lambda)\in\End \mathcal{H}.
\end{equation}

It follows from this construction that the boundary monodromy matrix \eqref{def-U-} satisfies the reflection equation,
\begin{equation}\label{refl-eq}
   R_{0'0}(\lambda-\mu)\, \mathcal{U}_{-,0}(\lambda)\, R_{00'}(\lambda+\mu-\eta)\, \mathcal{U}_{-,0'}(\mu) 
   = \mathcal{U}_{-,0'}(\mu)\, R_{0'0}(\lambda+\mu-\eta)\, \mathcal{U}_{-,0}(\lambda)\, R_{00'}(\lambda-\mu).
\end{equation}
This equation has to be understood on $\mathcal{H}_0\otimes\mathcal{H}_{0'}\otimes \mathcal{H}$, in which $\mathcal{H}_0$ and $\mathcal{H}_{0'}$ are two 2-dimensional auxiliary spaces: the indices in \eqref{refl-eq} indicate on which auxiliary space(s) the corresponding operator acts, and $R_{0'0}(\lambda)=P_{00'}R_{00'}(\lambda)P_{00'}$, $P_{00'}$ being the permutation operator on $\mathcal{H}_0\otimes\mathcal{H}_{0'}$. 
Hence, the operator entries $\mathcal{A}_-(\lambda),\ \mathcal{B}_-(\lambda),\ \mathcal{C}_-(\lambda),\ \mathcal{D}_-(\lambda)$ of $\mathcal{U}_-(\lambda)$ satisfy the commutation relations of the reflection algebra following from \eqref{refl-eq}.
Moreover, we have the following inversion relation
\begin{equation}\label{inv-U-}
   \mathcal{U}_-^{-1}(\lambda+\eta/2)
   =\frac{\ths(2\lambda-2\eta)}{\det_q\mathcal{U}_-(\lambda)}\,\mathcal{U}_-(-\lambda+\eta/2)
   =\frac{\widetilde{\mathcal U}_-(\lambda-\eta/2)}{ \det_q \mathcal{U}_-(\lambda)},
\end{equation}
in which $\widetilde{\mathcal U}_-$ is the 'algebraic adjunct' of $\mathcal{U}_-$,
\begin{align}\label{adjU-}
   \widetilde{\mathcal U}_-(\lambda)
    &=
    \begin{pmatrix}
     -\mathrm{c}(2\lambda)\,\mathcal{A}_-(\lambda)+\mathrm{b}(2\lambda)\,\mathcal{D}_-(\lambda) &
     -\mathrm{a}(2\lambda)\,\mathcal{B}_-(\lambda)+\mathrm{d}(2\lambda)\,\mathcal{C}_-(\lambda) \\
      -\mathrm{a}(2\lambda)\,\mathcal{C}_-(\lambda)+\mathrm{d}(2\lambda)\,\mathcal{B}_-(\lambda)   &
      -\mathrm{c}(2\lambda)\,\mathcal{D}_-(\lambda)+\mathrm{b}(2\lambda)\,\mathcal{A}_-(\lambda)
   \end{pmatrix},
\end{align}
and $\det_q \mathcal{U}_-$ its quantum determinant. The latter is given in terms of the quantum determinant of the bulk monodromy matrix $T(\lambda)$
\begin{align}\label{qdet-T}
  &{\det}_qT(\lambda) 
 =a(\lambda+\eta/2)\, d(\lambda-\eta/2), \\
 &a(\lambda)=\prod_{n=1}^N\theta(\lambda-\xi_n+\eta/2),
  \qquad
  d(\lambda)=a(\lambda-\eta)=\prod_{n=1}^N\theta(\lambda-\xi_n-\eta/2),
  \label{a-d}
\end{align}
and of the boundary matrix $K_-(\lambda)$,
\begin{equation}\label{qdet-K}
  {\det}_q K_-(\lambda)=\ths(2\lambda-2\eta)\, \prod_{\ell=1}^3 \frac{\ths(\alpha_\ell^--\lambda)\,\ths(\alpha_\ell^-+\lambda)}{\ths^2(\alpha_\ell^-)}.
\end{equation}
as
\begin{equation}
    {\det}_q\,\mathcal{U}_-(\lambda)={\det}_q T(\lambda)\,{\det}_q T(-\lambda)\,{\det}_qK_-(\lambda).
\end{equation}
Finally, the monodromy matrix \eqref{def-U-} satisfies the quasi-periodicity relations
\begin{align}
   &\mathcal{U}_-(\lambda+\pi)=-\sigma^z\, \mathcal{U}_-(\lambda)\, \sigma^z, \label{q-per-U1}\\
   &\mathcal{U}_-(\lambda+\pi\omega)=(-e^{-2i\lambda-i\pi\omega})^{2N+3}\, e^{3i\eta} \, \sigma^x\, \mathcal{U}_-(\lambda)\, \sigma^x, \label{q-per-U2}
\end{align}
which can easily be deduced from the quasi-periodicity properties of the R and K-matrices \eqref{R-mat} and \eqref{mat-K}:
\begin{alignat}{2}
   &R_{12}(\lambda +\pi )=-\sigma_{1}^{z}\, R_{12}(\lambda )\,\sigma _{1}^z,
   &\qquad
   &R_{12}(\lambda +\pi \omega )
   =-e^{-2i\lambda -i\pi\omega-i\eta }\, \sigma _{1}^{x}\, R_{12}(\lambda)\, \sigma _{1}^{x},
   \label{quasi-per-R}\\
   &K(\lambda +\pi )=-\sigma^{z}\, K(\lambda )\,\sigma^z,
   &\qquad
   &K(\lambda +\pi \omega )
   =(-e^{-2i\lambda -i\pi\omega })^3\, \sigma^{x}\, K(\lambda)\, \sigma^{x},
   \label{quasi-per-K}
\end{alignat}

It also follows from this construction that the transfer matrices \eqref{transfer} commute between themselves  for different values of the spectral parameter, and that the Hamiltonian \eqref{Ham} can be obtained in terms of a logarithmic derivative of the transfer matrix \eqref{transfer} in the homogeneous limit in which all inhomogeneity parameters vanish ($\xi_n=0$, $n=1,\ldots,N$):
\begin{equation}\label{H-transfer}
  H=\frac{\ths_1(\eta)}{\ths'_1(0)}\left[ \left. \frac{d}{d\lambda}\ln\mathcal{T}(\lambda) \right|_{\substack{\lambda=\eta/2, \xi_1=\ldots=\xi_N=0}}-2\frac{\ths_1'(2\eta)}{\ths_1(2\eta)}-(N-1)\frac{\ths_1'(\eta)}{\ths_1(\eta)}\right].
\end{equation}
Hence, the problem of the diagonalization of the Hamiltonian \eqref{Ham} is reduced in this framework to the problem of the diagonalization of the transfer matrix \eqref{transfer} independently of the spectral parameter $\lambda$.

The eigenstates of the transfer matrix can be constructed in the general framework of the new Separation of Variable Approach (SoV) introduced in \cite{MaiN19}, see \cite{NicT24}. In our prospect of computing the correlation functions, it is however more convenient to choose the particular SoV basis given by the Vertex-IRF transformation of the dynamical six-vertex model \cite{FalN14} : as shown in the following, this produces separate states which can be directly related to generalised Bethe states, and on which we are able to explicitly compute the action of local operators, similarly as in \cite{NicT22,NicT23}. Therefore, in the following section, we recall the generalized gauge transformation which enabled the authors of \cite{FalN14} to generalize Sklyanin's SoV approach \cite{Skl85,Skl90} to the open XYZ case.


\section{Gauge transformation of the model}
\label{sec-gauge}

In this section, we transform the reflection algebra by means of the Vertex-IRF transformation, a generalized gauge transformation first introduced  in \cite{Bax73a1} to relate the Boltzmann weights of the eight-vertex model to those of a particular face model solvable by Bethe Ansatz. This Vertex-IRF transformation hence opened the way to the solution by generalized Bethe ansatz of the eight-vertex model (or equivalently of the XYZ spin chain) with periodic boundary conditions, see \cite{Bax73a1,FadT79}. It was later used in \cite{FanHSY96} to construct an algebraic Bethe Ansatz solution of the open XYZ chain under some constraints on the boundary parameters. Here, we shall use it to formulate, following \cite{FalN14} (but in slightly different notations), the SoV solution of the model in a generalization of Sklyanin's SoV approach \cite{Skl85,Skl90}.

\subsection{The Vertex-IRF transformation}

Let us introduce the following R-matrix,
\begin{equation}\label{mat-R6VD}
   R^{\mathsf{(6VD)}}(\lambda |\beta )
   =\begin{pmatrix}
    \mathsf{a}(\lambda ) & 0 & 0 & 0 \\ 
    0 & \mathsf{b}(\lambda |\beta ) & \mathsf{c}(\lambda |\beta ) & 0 \\ 
    0 & \mathsf{c}(\lambda |-\beta ) & \mathsf{b}(\lambda |-\beta ) & 0 \\ 
    0 & 0 & 0 & \mathsf{a}(\lambda )
    \end{pmatrix} \in \End(\mathbb{C}^2\otimes\mathbb{C}^2),
\end{equation}
with
\begin{equation}
\mathsf{a}(\lambda )=\theta (\lambda +\eta ),\quad 
\mathsf{b}(\lambda |\beta)=\frac{\theta (\lambda )\theta ((\beta +1)\eta )}{\theta (\beta \eta )},\quad 
\mathsf{c}(\lambda |\beta )=\frac{\theta (\eta )\theta (\beta \eta+\lambda )}{\theta (\beta \eta )},
\end{equation}
which satisfies a dynamical version of the Yang-Baxter equation on $\mathbb{C}^2\otimes\mathbb{C}^2\otimes\mathbb{C}^2$ \cite{GerN84,Fel95}:
\begin{multline}\label{YBdyn}
   R_{12}^{\mathsf{(6VD)}}(\lambda_1-\lambda_2 |\beta +\sigma _3^{z})\,
   R_{13}^{\mathsf{(6VD)}}(\lambda_1-\lambda_3|\beta )\,
   R_{23}^{\mathsf{(6VD)}}(\lambda_2-\lambda_3 |\beta+\sigma _1^{z})
   \\
   =R_{23}^{\mathsf{(6VD)}}(\lambda_2-\lambda_3 |\beta )\,
   R_{13}^{\mathsf{(6VD)}}(\lambda_1-\lambda_3|\beta +\sigma_2^{z})\,
   R_{12}^{\mathsf{(6VD)}}(\lambda_1-\lambda_2|\beta ).
\end{multline}
In this context, the parameter $\beta$ is usually called the dynamical parameter. It corresponds to the height parameter of the face model considered in \cite{Bax73a1}.

The relation between  the eight-vertex R-matrix \eqref{R-mat} and the dynamical R-matrix \eqref{mat-R6VD} takes the following form on $\mathbb{C}^2\otimes\mathbb{C}^2$: 
\begin{equation}\label{Vertex-IRF}
R_{12}(\lambda_1-\lambda_2)\, S_{1}(\lambda _{1}|\alpha ,\beta)\, S_{2}(\lambda _{2}|\alpha ,\beta +\sigma _1^z)
=S_{2}(\lambda_{2}|\alpha ,\beta )\, S_{1}(\lambda _{1}|\alpha ,\beta +\sigma_{2}^{z})\, R_{12}^{\mathsf{(6VD)}}(\lambda_1-\lambda_2|\beta ).
\end{equation}
The matrix $S(\lambda | \alpha,\beta)$ involved in this transformation is called the Vertex-IRF matrix. It depends on the spectral parameter $\lambda$, on the dynamical parameter $\beta$, and on an additional parameter $\alpha$ which corresponds to an arbitrary shift of the spectral parameter. This matrix can be conveniently written in terms of two vectors $Y_{\alpha+\beta}(\lambda)$ and $Y_{\alpha-\beta}(\lambda)$ as
\begin{equation}\label{mat-S}
  S(\lambda | \alpha,\beta)
  =\begin{pmatrix} Y_{\alpha+\beta}(\lambda) & Y_{\alpha-\beta}(\lambda) \end{pmatrix},
\end{equation}
in which we have defined
\begin{equation}\label{vectY}
   Y_\gamma(\lambda)=\begin{pmatrix} \thd_2(\lambda-\gamma\eta) \\ \thd_3(\lambda-\gamma\eta) \end{pmatrix}.
\end{equation}
Its inverse reads explicitly:
\begin{align}\label{Sinv}
  S^{-1}(\lambda | \alpha,\beta)
  &=\frac{1}{\det S(\lambda|\alpha,\beta)}\, S^\mathrm{Adj}(\lambda|\alpha,\beta)
  \qquad \text{with} \quad
  S^\mathrm{Adj}(\lambda|\alpha,\beta)=
  \begin{pmatrix}
  \bar Y_{\alpha-\beta}(\lambda) \\ -\bar Y_{\alpha+\beta}(\lambda)
  \end{pmatrix},
\end{align}
in which
\begin{align}\label{vectYbar}
   &\bar Y_\gamma (\lambda) 
   =\begin{pmatrix} 
   \thd_3(\lambda-\gamma\eta) & -\thd_2(\lambda-\gamma\eta)
   \end{pmatrix},\\
   &\det S(\lambda|\alpha,\beta) = \theta(\lambda-\alpha\eta)\theta(\beta\eta). \label{detS}
\end{align}
Note that the vectors \eqref{vectY} and the covectors \eqref{vectYbar} satisfy the relations
\begin{equation}\label{YbarY}
  \bar Y_\gamma(u)\, Y_\delta(v) =\ths\big(\frac{u+v-(\gamma+\delta)\eta}2\big)\ths\big(\frac{u-v-(\gamma-\delta)\eta}2\big),
\end{equation}
so that in particular
\begin{align}
   &\bar Y_{\alpha-\beta}(\lambda)\, Y_{\alpha+\beta}(\lambda)=-\bar Y_{\alpha+\beta}(\lambda)\, Y_{\alpha-\beta}(\lambda)
   =\det S(\lambda|\alpha,\beta),\\
   &\bar Y_{\alpha-\beta}(\lambda)\, Y_{\alpha-\beta}(\lambda)=\bar Y_{\alpha+\beta}(\lambda)\, Y_{\alpha+\beta}(\lambda)=0.
\end{align}
Note also that
\begin{align}\label{q-per-S}
  S(\lambda+\pi|\alpha,\beta)=-\sigma^z\, S(\lambda|\alpha,\beta),
  \qquad
  S(\lambda+\pi\omega |\alpha,\beta)= e^{-i\lambda-i\frac{\pi\omega}2+i\alpha\eta}\, \sigma^x\, S(\lambda|\alpha,\beta)\, e^{i\beta\eta\sigma^z}.
\end{align}
%

\subsection{The gauged-transformed boundary monodromy matrix}

Using the Vertex-IRF matrix \eqref{mat-S}, let us define from $\mathcal{U}_{-}(\lambda)$ \eqref{def-U-} a gauge transformed boundary monodromy matrix $\mathcal{U}_{-}(\lambda |\alpha ,\beta )$ as
\begin{align}
  \label{gauged-U}
  \mathcal{U}_{-}(\lambda |\alpha ,\beta )
  & =S_{0}^{-1}(-\lambda+\eta /2|\alpha ,\beta )\ \mathcal{U}_{-}(\lambda )\ S_{0}(\lambda -\eta /2|\alpha,\beta )  
   \notag \\
     &
 = \begin{pmatrix}
  \mathcal{A}_{-}(\lambda |\alpha ,\beta ) & \mathcal{B}_{-}(\lambda |\alpha,\beta ) \\ 
  \mathcal{C}_{-}(\lambda |\alpha ,\beta ) & \mathcal{D}_{-}(\lambda |\alpha,\beta )%
  \end{pmatrix}.
\end{align}
It satisfies,  with the R-matrix \eqref{mat-R6VD}, the dynamical version of the reflection equation,
\begin{multline}
  R_{0'0}^{\mathsf{(6VD)}}(\lambda -\mu |\beta )\,\mathcal{U}_{-,0}(\lambda|\alpha ,\beta +\sigma _{0'}^{z})\,
  R_{00'}^{\mathsf{(6VD)}}(\lambda +\mu -\eta |\beta )\,\mathcal{U}_{-,0'}(\mu |\alpha ,\beta +\sigma _{0}^{z}) 
   \\
  =\mathcal{U}_{-,0'}(\mu |\alpha ,\beta +\sigma _{0}^{z})\,R_{0'0}^{\mathsf{(6VD)}}(\lambda +\mu -\eta |\beta )\,
  \mathcal{U}_{-,0}(\lambda |\alpha,\beta +\sigma _{0'}^{z})\,R_{00'}^{\mathsf{(6VD)}}(\lambda -\mu |\beta ),
\end{multline}
which can be obtained from \eqref{refl-eq}.
The matrix $\mathcal{U}_{-}(\lambda |\alpha,\beta )$ \eqref{gauged-U} also satisfies the inversion relation
\begin{equation}\label{inv-Ugauge}
   \mathcal{U}_{-}(\lambda +\eta /2|\alpha ,\beta )\,
   \mathcal{U}_{-}(-\lambda+\eta /2|\alpha ,\beta )
   =\frac{\det_{q}\mathcal{U}_{-}(\lambda )}{\ths(2\lambda -2\eta )},
\end{equation}
which is a direct consequence of the inversion relation \eqref{inv-U-} for the boundary monodromy matrix $\mathcal{U}_{-}(\lambda )$.
From \eqref{inv-U-}, one can also deduce the following useful parity relations:
\begin{align}
  &\mathcal{B}_-(-\lambda|\alpha,\beta)=\frac{\ths(\lambda+(\alpha-\frac 12)\eta)}{\ths(\lambda-(\alpha-\frac 12)\eta)} 
    \frac{\ths(2\lambda+\eta)}{\ths(2\lambda-\eta)}\, \mathcal{B}_-(\lambda|\alpha,\beta),
  \label{parity-Bbeta}\\
  &\mathcal{C}_-(-\lambda|\alpha,\beta)=\frac{\ths(\lambda+(\alpha-\frac 12)\eta)}{\ths(\lambda-(\alpha-\frac 12)\eta)}
    \frac{\ths(2\lambda+\eta)}{\ths(2\lambda-\eta)}\, \mathcal{C}_-(\lambda|\alpha,\beta),
  \label{parity-Cbeta}
\end{align}
\begin{multline}\label{parity-Abeta}
  \mathcal{A}_-(\lambda|\alpha,\beta-1)
  =\frac{\ths(\eta)\ths(\beta\eta-2\lambda)}{\ths(2\lambda)\ths((\beta-1)\eta)}\,\mathcal{D}_-(\lambda|\alpha,\beta+1)
  \\
   -\frac{\ths(2\lambda-\eta)\ths(\beta\eta)}{\ths(2\lambda)\ths((\beta-1)\eta)}\frac{\ths(\lambda-(\alpha-\frac 12)\eta)}{\ths(\lambda+(\alpha-\frac 12)\eta)}\, \mathcal{D}_-(-\lambda|\alpha,\beta+1) ,
\end{multline}
\begin{multline}\label{parity-Dbeta}
  \mathcal{D}_-(\lambda|\alpha,\beta+1)
  =\frac{\ths(\eta)\ths(2\lambda+\beta\eta)}{\ths(2\lambda)\ths((\beta+1)\eta)}\,\mathcal{A}_-(\lambda|\alpha,\beta-1) 
  \\
  -\frac{\ths(2\lambda-\eta)\ths(\beta\eta)}{\ths(2\lambda)\ths((\beta+1)\eta)}\frac{\ths(\lambda-(\alpha-\frac 12)\eta)}{\ths(\lambda+(\alpha-\frac 12)\eta)}\, \mathcal{A}_-(-\lambda|\alpha,\beta-1).
\end{multline}
From \eqref{q-per-U1}-\eqref{q-per-U2} and \eqref{q-per-S}, we get the quasi-periodicity relations:
\begin{align}
   &\mathcal{U}_{-}(\lambda+\pi |\alpha ,\beta )=- \mathcal{U}_{-}(\lambda |\alpha ,\beta ), \label{q-per-gaugeU1}\\
   & \mathcal{U}_{-}(\lambda+\pi\omega |\alpha ,\beta )= (-e^{-2i\lambda-i\pi\omega})^{2N+3}\, e^{3i\eta+2i\alpha\eta}\, e^{i\eta\beta\sigma^z}\, \mathcal{U}_{-}(\lambda |\alpha ,\beta )\, e^{i\eta\beta\sigma^z}.  \label{q-per-gaugeU2}
\end{align}
Let us finally note that, by definition, the matrix elements $\mathcal{B}_{-}(\lambda |\alpha ,\beta )$ and $\mathcal{C}_{-}(\lambda |\alpha ,-\beta )$ can be written as the following linear combinations of the matrix elements of the boundary monodromy matrix $\mathcal{U}_{-}(\lambda)$:
\begin{align}
\mathcal{B}_{-}(\lambda |\alpha ,\beta )
&=  \mathcal{C}_{-}(\lambda |\alpha ,-\beta )
=\frac{\bar Y_{\alpha-\beta}(\eta/2-\lambda)\,\mathcal{U}_-(\lambda)\, Y_{\alpha-\beta}(\lambda-\eta/2)}{\det S(\eta/2-\lambda|\alpha,\beta)},
\end{align}
which, up to the factor $\det S(\eta/2-\lambda|\alpha,\beta)$, depend only on the difference $\alpha -\beta $ of the gauge parameters $\alpha$ and $\beta$. Hence,  it will be convenient, in the following, to consider a rescaled version of these matrix elements,
\begin{align}\label{Bhat}
  &\widehat{\mathcal{B}}_{-}(\lambda |\alpha -\beta ) 
  =\det S(\eta/2-\lambda|\alpha,\beta)\, \mathcal{B}_{-}(\lambda |\alpha,\beta ), \\
 &\widehat{\mathcal{C}}_{-}(\lambda |\alpha +\beta ) 
  =\det S(\eta/2-\lambda|\alpha,\beta)\, \mathcal{C}_{-}(\lambda |\alpha,\beta ),
\end{align}
which depends only on the difference $\alpha -\beta $, respectively on the sum $\alpha +\beta $.

\subsection{The transfer matrix}

The transfer matrix \eqref{transfer} can be rewritten as
\begin{equation}\label{transfer-bis}
        \mathcal{T}(\lambda )=\tr_0 [ \check K_{+}(\lambda |\alpha,\beta )\, \check{\mathcal{U}}_{-}(\lambda|\alpha,\beta)],
\end{equation}
in which we have defined
\begin{align}
  \check{\mathcal{U}}_{-}(\lambda |\alpha ,\beta )
  & =S_{0}^{-1}(-\lambda+\eta /2|\alpha+1 ,\beta )\ \mathcal{U}_{-}(\lambda )\ S_{0}(\lambda -\eta /2|\alpha-1,\beta )  
   \notag \\
  & =%
  \begin{pmatrix}
  \check{\mathcal{A}}_-(\lambda |\alpha ,\beta ) & \check{\mathcal{B}}_-(\lambda |\alpha,\beta ) \\ 
  \check{\mathcal{C}}_-(\lambda |\alpha ,\beta ) & \check{\mathcal{D}}_-(\lambda |\alpha,\beta )%
  \end{pmatrix},
  \label{gauged-Utilde}
\end{align}
so that in particular
\begin{align}
   \check{\mathcal{A}}_-(\lambda |\alpha ,\beta )
   &=\frac{\bar Y_{\alpha-\beta+1}(\eta/2-\lambda)\,\mathcal{U}_-(\lambda)\, Y_{\alpha+\beta-1}(\lambda-\eta/2)}{\det S(\eta/2-\lambda|\alpha+1,\beta)}   ,\nonumber\\
   &=\frac{\det S(\eta/2-\lambda|\alpha,\beta-1)}{\det S(\eta/2-\lambda|\alpha+1,\beta)}\,
   {\mathcal{A}}_-(\lambda |\alpha ,\beta-1 ),
\\
   \check{\mathcal{D}}_-(\lambda |\alpha ,\beta )
   &=\frac{- \bar Y_{\alpha+\beta+1}(\eta/2-\lambda)\,\mathcal{U}_-(\lambda)\, Y_{\alpha-\beta-1}(\lambda-\eta/2)}{\det S(\eta/2-\lambda|\alpha+1,\beta)} 
   \nonumber\\
   &=\frac{\det S(\eta/2-\lambda|\alpha,\beta+1)}{\det S(\eta/2-\lambda|\alpha+1,\beta)}\,
   {\mathcal{D}}_-(\lambda |\alpha ,\beta+1 ).
\end{align}
We have also defined
\begin{align}\label{def-tildeK+}
    \check K_{+}(\lambda |\alpha,\beta )
        &=S^{-1}(\lambda -\eta /2|\alpha-1,\beta ) \,K_{+}(\lambda )\, S(\eta /2-\lambda |\alpha+1,\beta ) \nonumber\\
        &=S^{-1}(\lambda +\eta /2|\alpha,\beta ) \,K_{+}(\lambda )\, S(-\lambda-\eta /2 |\alpha,\beta ) \nonumber\\
    &=\begin{pmatrix}
  \check a_+(\lambda|\alpha,\beta) & \check b_+(\lambda|\alpha,\beta) \\
  \check c_+(\lambda|\alpha,\beta) & \check d_+(\lambda|\alpha,\beta)
  \end{pmatrix}.
\end{align}

The matrix elements of \eqref{def-tildeK+} can easily be computed using the explicit expression of the Vertex-IRF matrix \eqref{mat-S}. Setting $\lambda^+=\lambda+\eta/2$, we obtain
\begin{align}\label{b_+gauge}
 &\check b_+(\lambda|\alpha,\beta) =\check c_+(\lambda|\alpha,-\beta)\nonumber\\
 &\quad=\frac{1}{2\det S(\lambda^+|\alpha,\beta)} \, \frac{\ths_1(2\lambda^+)}{\prod_{\ell=1}^3\ths_1(\alpha_\ell^+)}
 \bigg[\sum_{k\not=3}\prod_{\ell=1}^3\ths_k(\alpha_\ell^+)\ths_k((\beta-\alpha)\eta)-\prod_{\ell=1}^3\ths_3(\alpha_\ell^+)\ths_3((\beta-\alpha)\eta)\bigg]
  \nonumber\\
 &\quad=\frac{\ths_1(2\lambda^+)\ths_1\!\big(\frac{(\alpha-\beta)\eta+\sum_{\ell=1}^3\alpha^+_\ell}2\big)}{\det S(\lambda^+|\alpha,\beta)\prod_{\ell=1}^3\ths_1(\alpha_\ell^+)}
 \prod_{k=1}^3\ths_1\!\!\left(\frac{-2\alpha^+_k+\sum_{\ell=1}^3\alpha^+_\ell+(\beta-\alpha)\eta}2\right).
\end{align}
Hence, we can impose $\check c_+(\lambda|\alpha,\beta)=0$ by choosing $\alpha$ and $\beta$ such that
\begin{equation}\label{cond-c+=0}
  (\alpha+\beta)\eta=-\sum_{\ell=1}^3\epsilon_{\alpha^+_\ell} \alpha^+_\ell\mod(2\pi,2\pi\omega),
  \qquad\text{with}\ \
  \epsilon_{\alpha^+_\ell} =\pm 1\ \ \text{such that}\ \ \epsilon_{\alpha^+_1} \epsilon_{\alpha^+_2} \epsilon_{\alpha^+_3} =1.
\end{equation}
Similarly, we can impose $\check b_+(\lambda|\alpha,\beta)=0$ by choosing $\alpha$ and $\beta$ such that
\begin{equation}\label{cond-b+=0}
  (\alpha-\beta)\eta=-\sum_{\ell=1}^3\tilde\epsilon_{\alpha^+_\ell}  \alpha^+_\ell\mod(2\pi,2\pi\omega),
  \qquad\text{with}\ \
  \tilde\epsilon_{\alpha^+_\ell} =\pm 1\ \ \text{such that}\ \ \tilde\epsilon_{\alpha^+_1} \tilde\epsilon_{\alpha^+_2} \tilde\epsilon_{\alpha^+_3}=1.
\end{equation}
We also have
\begin{align}
   \check a_+(\lambda|\alpha,\beta) 
   &=\check d_+(\lambda|\alpha,-\beta) \nonumber\\
   &=\frac{\ths(2\lambda^+)}{2\det S(\lambda^+|\alpha,\beta)} \, 
    \bigg[ \frac{\ths(\lambda^++\beta\eta)\ths(-\alpha\eta)}{\ths(\lambda^+)}
      + \prod_{\ell=1}^3\frac{\ths_4(\alpha^+_\ell)}{\ths_1(\alpha_\ell)} \frac{\ths_4(\lambda^++\beta\eta)\ths_4(-\alpha\eta) }{\ths_4(\lambda^+)}
      \nonumber\\
      &\hspace{0.5cm}
      -\prod_{\ell=1}^3\frac{\ths_3(\alpha^+_\ell)}{\ths_1(\alpha^+_\ell)}\frac{\ths_3(\lambda^++\beta\eta)\ths_3(-\alpha\eta) }{\ths_3(\lambda^+)}
      +\prod_{\ell=1}^3\frac{\ths_2(\alpha^+_\ell)}{\ths_1(\alpha^+_\ell)}\frac{\ths_2(\lambda^++\beta\eta)\ths_2(-\alpha\eta) }{\ths_2(\lambda^+)} \bigg].
\end{align}
Under the condition \eqref{cond-c+=0}, this expression can be factorized as
\begin{align}\label{a+constr}
    \check a_+(\lambda|\alpha,\beta) 
   &= \check d_+(\lambda|\alpha,-\beta) 
   =\prod_{i=1}^3\frac{\ths(\lambda^+-\epsilon_{\alpha^+_i}\alpha^+_i)}{\ths(-\epsilon_{\alpha^+_i}\alpha^+_i)} ,
\end{align}
whereas, under the condition \eqref{cond-b+=0}, we have
\begin{align}\label{d+constr}
    \check a_+(\lambda|\alpha,-\beta) 
   &= \check d_+(\lambda|\alpha,\beta) 
   =\prod_{i=1}^3\frac{\ths(\lambda^+-\tilde\epsilon_{\alpha^+_i}\alpha^+_i)}{\ths(-\tilde\epsilon_{\alpha^+_i}\alpha^+_i)} .
\end{align}

Note that both constraints  \eqref{cond-c+=0} and  \eqref{cond-b+=0} can be satisfied simultaneously only if\footnote{The invertibility of the Vertex-IRF matrix in \eqref{gauged-Utilde} imposes $\eta\beta\not=0\mod (\pi,\pi\omega)$. }
there exists $i_+\in\{1,2,3\}$ such that 
\begin{align}\label{signs+}
 &\epsilon_{\alpha^+_{i_+}}=\tilde\epsilon_{\alpha^+_{i_+}} \quad \text{and} \quad \epsilon_{\alpha^+_\ell} =-\tilde\epsilon_{\alpha^+_\ell}  \ \text{for }\ \ell\not=i_+,
\end{align}
which leads to the following solution of \eqref{cond-c+=0}-\eqref{cond-b+=0}:
\begin{align}\label{alpha-beta+}
 &\alpha\eta= -\epsilon_{\alpha^+_{i_+}}\alpha^+_{i_+}, \quad \beta\eta = -\sum_{\ell\not= i_+}\epsilon_{\alpha^+_\ell} \alpha_\ell^+=\sum_{\ell\not= i_+}\tilde\epsilon_{\alpha^+_\ell} \alpha_\ell^+.
\end{align}

Hence, under both constraints  \eqref{cond-c+=0} and  \eqref{cond-b+=0} on the gauge parameters $\alpha$ and $\beta$, the matrix \eqref{def-tildeK+} can be made diagonal, and the transfer matrix \eqref{transfer} can be written in terms of $ {\mathcal{A}}_-(\lambda |\alpha ,\beta-1 )$ and $ {\mathcal{D}}_-(\lambda |\alpha ,\beta+1 )$ only:
\begin{align}\label{expr-transfer1}
  \mathcal{T}(\lambda)
  &= \check a_+(\lambda|\alpha,\beta)\, \check{\mathcal{A}}_-(\lambda |\alpha ,\beta )
  + \check d_+(\lambda|\alpha,\beta)\, \check{\mathcal{D}}_-(\lambda |\alpha ,\beta ),
  \nonumber\\
  &=\check a_+(\lambda|\alpha,\beta)\,\frac{\det S(\eta/2-\lambda|\alpha,\beta-1)}{\det S(\eta/2-\lambda|\alpha+1,\beta)}\,
   {\mathcal{A}}_-(\lambda |\alpha ,\beta-1 )
   \nonumber\\
   &
   \hspace{4cm}
   +\check d_+(\lambda|\alpha,\beta)\,\frac{\det S(\eta/2-\lambda|\alpha,\beta+1)}{\det S(\eta/2-\lambda|\alpha+1,\beta)}\,
   {\mathcal{D}}_-(\lambda |\alpha ,\beta+1 ).
\end{align}
By using  the parity relations \eqref{parity-Abeta} and \eqref{parity-Dbeta} and the fact that $\mathcal{T}(-\lambda)=\mathcal{T}(\lambda)$, this expression can be rewritten in terms of ${\mathcal{A}}_-(\lambda |\alpha ,\beta-1 )$ and ${\mathcal{A}}_-(-\lambda |\alpha ,\beta-1 )$ only, or equivalently in terms of $\mathcal{D}_-(\lambda |\alpha ,\beta+1 )$ and $\mathcal{D}_-(-\lambda |\alpha ,\beta+1 )$ only.
We can therefore formulate the following result:

\begin{prop}\label{prop-expr-transfer}
Under the constraints  \eqref{cond-c+=0} and  \eqref{cond-b+=0} for the gauge parameters $\alpha$ and $\beta$, the transfer matrix \eqref{transfer} can be expressed in terms of the elements of the gauge boundary monodromy matrix \eqref{gauged-U} as
\begin{align}\label{expr-transfer}
  \mathcal{T}(\lambda)
   &=\frac{\ths(2\lambda+\eta)}{\ths(2\lambda)}\, \mathsf{a}_+(\lambda|\boldsymbol{\epsilon_{\alpha^+}})\, 
    {\mathcal{A}}_-(\lambda |\alpha ,\beta-1 )
      +\frac{\ths(2\lambda-\eta)}{\ths(2\lambda)}\, \mathsf{a}_+(-\lambda|\boldsymbol{\epsilon_{\alpha^+}})\, 
      {\mathcal{A}}_-(-\lambda |\alpha ,\beta-1 )
     \nonumber\\
   &=\frac{\ths(2\lambda+\eta)}{\ths(2\lambda)}\, \mathsf{a}_+(\lambda|\boldsymbol{\tilde\epsilon_{\alpha^+}})\,
   \mathcal{D}_-(\lambda |\alpha ,\beta+1 )+ \frac{\ths(2\lambda-\eta)}{\ths(2\lambda)}\,  \mathsf{a}_+(-\lambda|\boldsymbol{\tilde\epsilon_{\alpha^+}})\,
   \mathcal{D}_-(-\lambda |\alpha ,\beta+1 ),
\end{align}
in which
\begin{align}
   & 
\mathsf{a}_+(\lambda|\boldsymbol{\epsilon_{\alpha^+}}) =\prod_{i=1}^3\frac{\ths(\lambda-\frac\eta 2-\epsilon_{\alpha^+_i}\alpha^+_i)}{\ths(-\epsilon_{\alpha^+_i}\alpha^+_i)}, \qquad
\label{coeff-a+} 
 \mathsf{a}_+(\lambda|\boldsymbol{\tilde\epsilon_{\alpha^+}})=\prod_{i=1}^3\frac{\ths(\lambda-\frac\eta 2-\tilde\epsilon_{\alpha^+_i}\alpha^+_i)}{\ths(-\tilde\epsilon_{\alpha^+_i}\alpha^+_i)}.
\end{align}
\end{prop}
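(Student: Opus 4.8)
\medskip\noindent\emph{Proof strategy.}
The plan is to start from the representation \eqref{expr-transfer1}, which under both constraints \eqref{cond-c+=0}--\eqref{cond-b+=0} already writes $\mathcal T(\lambda)$ as a linear combination of $\mathcal{A}_-(\lambda|\alpha,\beta-1)$ and $\mathcal{D}_-(\lambda|\alpha,\beta+1)$ with scalar coefficients built out of $\check a_+$, $\check d_+$ and ratios of the determinants \eqref{detS}. To reach the first line of \eqref{expr-transfer}, I would then eliminate $\mathcal{D}_-(\lambda|\alpha,\beta+1)$ in favour of $\mathcal{A}_-(\pm\lambda|\alpha,\beta-1)$ by means of the parity relation \eqref{parity-Dbeta}. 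This immediately brings \eqref{expr-transfer1} to the form
\begin{equation*}
  \mathcal T(\lambda)=X(\lambda)\,\mathcal{A}_-(\lambda|\alpha,\beta-1)+Y(\lambda)\,\mathcal{A}_-(-\lambda|\alpha,\beta-1),
\end{equation*}
with $X(\lambda)$ and $Y(\lambda)$ explicit sums of products of theta functions, so that the whole content of the proposition reduces to the two identifications $X(\lambda)=\frac{\ths(2\lambda+\eta)}{\ths(2\lambda)}\,\mathsf{a}_+(\lambda|\boldsymbol{\epsilon_{\alpha^+}})$ and $Y(\lambda)=\frac{\ths(2\lambda-\eta)}{\ths(2\lambda)}\,\mathsf{a}_+(-\lambda|\boldsymbol{\epsilon_{\alpha^+}})$, with $\mathsf{a}_+$ as in \eqref{coeff-a+}.

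To pin down $X(\lambda)$, I would first use $\det S(u|a,b)=\ths(u-a\eta)\,\ths(b\eta)$ (from \eqref{detS}) and the oddness of $\ths_1$ to turn every determinant ratio occurring in \eqref{expr-transfer1} and in \eqref{parity-Dbeta} into an explicit ratio of shifted $\ths_1$'s, and then substitute the factorised expressions \eqref{a+constr}--\eqref{d+constr} for $\check a_+(\lambda|\alpha,\beta)$ and $\check d_+(\lambda|\alpha,\beta)$. After pulling out the common factor $\frac{\ths(\lambda+(\alpha-\tfrac12)\eta)}{\ths(\lambda+(\alpha+\tfrac12)\eta)\,\ths(\beta\eta)\,\ths(2\lambda)}$, the claimed value of $X(\lambda)$ becomes equivalent to the scalar identity
\begin{multline*}
  \check a_+(\lambda|\alpha,\beta)\,\ths((\beta-1)\eta)\,\ths(2\lambda)
  +\check d_+(\lambda|\alpha,\beta)\,\ths(\eta)\,\ths(2\lambda+\beta\eta)
  \\
  =\ths(2\lambda+\eta)\,\ths(\beta\eta)\,\frac{\ths(\lambda+(\alpha+\tfrac12)\eta)}{\ths(\lambda+(\alpha-\tfrac12)\eta)}\,\mathsf{a}_+(\lambda|\boldsymbol{\epsilon_{\alpha^+}}),
\end{multline*}
where $\mathsf{a}_+(\lambda|\boldsymbol{\epsilon_{\alpha^+}})=\check a_+(\lambda-\eta|\alpha,\beta)$ by \eqref{a+constr}. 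Feeding in the explicit values $\alpha\eta=-\epsilon_{\alpha^+_{i_+}}\alpha^+_{i_+}$ and $\beta\eta=\sum_{\ell\neq i_+}\epsilon_{\alpha^+_\ell}\alpha^+_\ell$ fixed by \eqref{alpha-beta+}, this is a multi-term relation between products of $\ths_1$'s, which I would prove either directly from the addition theorem for $\ths_1$ — the same Riemann/Fay-type identity already underlying the vanishing \eqref{cond-c+=0}--\eqref{cond-b+=0} of $\check c_+$ and $\check b_+$ — or, more transparently, by a Liouville argument: once the elementary prefactors are cleared, both sides are entire theta functions of $\lambda$ with the same quasi-periodicity (read off from \eqref{q-per-gaugeU2} and the known quasi-periodicity of $\mathcal T$) and with the same zeros (in particular the apparent pole of the right-hand side coming from $1/\ths(\lambda+(\alpha-\tfrac12)\eta)$ is cancelled by the $i=i_+$ factor of $\mathsf{a}_+(\lambda|\boldsymbol{\epsilon_{\alpha^+}})$), hence they coincide up to a constant that is then fixed by evaluation at one value of $\lambda$.

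Once $X(\lambda)$ is known, $Y(\lambda)$ follows for free: since $\mathcal T(-\lambda)=\mathcal T(\lambda)$ and, as operator-valued functions of $\lambda$, $\mathcal{A}_-(\lambda|\alpha,\beta-1)$ and $\mathcal{A}_-(-\lambda|\alpha,\beta-1)$ are not proportional (no parity relation links them directly), one gets $Y(\lambda)=X(-\lambda)$, and $X(-\lambda)=\frac{\ths(-2\lambda+\eta)}{\ths(-2\lambda)}\,\mathsf{a}_+(-\lambda|\boldsymbol{\epsilon_{\alpha^+}})=\frac{\ths(2\lambda-\eta)}{\ths(2\lambda)}\,\mathsf{a}_+(-\lambda|\boldsymbol{\epsilon_{\alpha^+}})$ by oddness of $\ths_1$ (alternatively one simply recomputes $Y(\lambda)$ directly, which is the same type of manipulation). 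The second line of \eqref{expr-transfer} is then obtained by the mirror argument: starting again from \eqref{expr-transfer1}, one now eliminates $\mathcal{A}_-(\lambda|\alpha,\beta-1)$ through \eqref{parity-Abeta}, and the sign vector $\boldsymbol{\epsilon_{\alpha^+}}$ gets replaced by $\boldsymbol{\tilde\epsilon_{\alpha^+}}$ because it is now $\check d_+$ — given by \eqref{d+constr} — that carries the leading product of theta functions.

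The step I expect to be the real obstacle is the scalar $\ths_1$-identity of the second paragraph: one must propagate consistently the shifts by $\eta$ produced by the $\det S$ ratios and the one hidden in $\mathsf{a}_+(\lambda|\boldsymbol{\epsilon_{\alpha^+}})=\check a_+(\lambda-\eta|\alpha,\beta)$, keep track of the sign flips coming from the oddness and the quasi-period $\pi$ of $\ths_1$ — recall that \eqref{alpha-beta+} only holds $\bmod(2\pi,2\pi\omega)$, so it is the product structure with $\epsilon_{\alpha^+_1}\epsilon_{\alpha^+_2}\epsilon_{\alpha^+_3}=1$ that restores the correct quasi-periodicity — and finally recognise the result as an instance of the addition theorem. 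Everything else is bookkeeping already present in the derivation of \eqref{expr-transfer1}.
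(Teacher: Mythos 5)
Your proposal is correct and follows essentially the same route as the paper, whose own (very terse) argument is exactly this: insert the parity relations \eqref{parity-Abeta}--\eqref{parity-Dbeta} into \eqref{expr-transfer1}, use $\mathcal{T}(-\lambda)=\mathcal{T}(\lambda)$, and check the scalar coefficients, your displayed identity for $X(\lambda)$ being the correct one (after using \eqref{rel-a+} or \eqref{a+constr}--\eqref{d+constr} and \eqref{alpha-beta+} it reduces to a standard three-term theta addition identity, provable by the addition theorem or the Liouville-type argument you sketch, while the coefficient of $\mathcal{A}_-(-\lambda|\alpha,\beta-1)$ simplifies directly in one line). Just note the transcription slip: \eqref{alpha-beta+} fixes $\beta\eta=-\sum_{\ell\neq i_+}\epsilon_{\alpha^+_\ell}\alpha^+_\ell=\sum_{\ell\neq i_+}\tilde\epsilon_{\alpha^+_\ell}\alpha^+_\ell$, not $+\sum_{\ell\neq i_+}\epsilon_{\alpha^+_\ell}\alpha^+_\ell$; with the correct sign your reduction of the identity goes through.
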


Note that, under the constraints  \eqref{cond-c+=0} and  \eqref{cond-b+=0}, i.e. with $\alpha$ and $\beta$ fixed in terms of the + boundary parameters by  \eqref{alpha-beta+}, the two coefficients \eqref{coeff-a+} are related by
\begin{equation}\label{rel-a+}
   \mathsf{a}_+(\lambda+\tfrac\eta 2|\boldsymbol{\tilde\epsilon_{\alpha^+}}) 
   =-\frac{\ths(\lambda+\alpha\eta)}{\ths(\lambda-\alpha\eta)}\, \mathsf{a}_+(-\lambda+\tfrac\eta 2|\boldsymbol{\epsilon_{\alpha^+}}).
\end{equation}
%

\subsection{Boundary-bulk decomposition of the boundary gauge operators}

As in our previous works \cite{NicT22,NicT23}, it will be helpful to decompose the gauge boundary operators in \eqref{gauged-U} into bulk ones.
Still as in \cite{NicT22,NicT23}, we introduce for convenience the following reformulations of the bulk monodromy matrices:
\begin{align}
   M(\lambda) & =\bar R_{0N}(\lambda-\xi_N+\eta/2)\ldots\bar R_{01}(\lambda-\xi_1+\eta/2)= (-1)^N \hat T(-\lambda)
   =\begin{pmatrix} A(\lambda) & B(\lambda) \\ C(\lambda) & D(\lambda) \end{pmatrix},
   \label{def-M}\\
   \hat M(\lambda) &= (-1)^N \sigma^y M^t(-\lambda)\sigma^y
   =\bar R_{01}(\lambda+\xi_1+\eta/2)\ldots\bar R_{0N}(\lambda+\xi_n+\eta/2)= (-1)^N T(-\lambda) ,
   \label{def-Mhat}
\end{align}
defined in terms of
\begin{equation}\label{barR}
  \bar R(\lambda) =-R(-\lambda),
\end{equation}
which also corresponds to the R-matrix \eqref{R-mat} in which $\eta$ has been changed into $-\eta$.
Then we have
\begin{equation}
  \mathcal{U}_-(\lambda)= \hat M(-\lambda)\, K_-(\lambda)\, M(-\lambda),
\end{equation}
and therefore, defining as in \cite{NicT22,NicT23},
\begin{align}
   &M(\lambda|(\alpha,\beta),(\gamma,\delta))=S^{-1}(-\lambda-\eta/2|\alpha,\beta)\, M(\lambda)\, S(-\lambda-\eta/2|\gamma,\delta),
   \label{def-Mgauge}\\
   &\hat M(\lambda|(\alpha,\beta),(\gamma,\delta))=S^{-1}(\lambda+\eta/2|\gamma,\delta)\, \hat M(\lambda)\, S(\lambda+\eta/2|\alpha,\beta),
   \label{def-Mhat-gauge}\\
   & K_-(\lambda|(\gamma,\delta),(\gamma',\delta'))=S^{-1}(-\lambda+\eta/2|\gamma,\delta)\, K_-(\lambda)\, S(\lambda-\eta/2|\gamma',\delta'),
   \label{def-K--gauge}
\end{align}
we can write
\begin{equation}\label{boundary-bulk1}
   \mathcal{U}_-(\lambda|\alpha,\beta)= \hat M(-\lambda|(\gamma,\delta),(\alpha,\beta))\, K_-(\lambda|(\gamma,\delta),(\gamma',\delta'))\, M(-\lambda|(\gamma',\delta'),(\alpha,\beta)).
\end{equation}
Note that, as in the six-vertex case, we have the identity
\begin{align}
   \hat M(\lambda|(\alpha,\beta),(\gamma,\delta))
   &=(-1)^N\frac{ \det S(\lambda+\eta/2|\alpha,\beta)}{\det S(\lambda+\eta/2|\gamma,\delta)}\,\sigma^y\,
   M^t(-\lambda|(\alpha-1,\beta),(\gamma-1,\delta))\,\sigma^y,
\end{align}
so that \eqref{boundary-bulk1} can be rewritten as
\begin{align}\label{boundary-bulk2}
   \mathcal{U}_-(\lambda|\alpha,\beta)
   &=(-1)^N\frac{ \det S(-\lambda+\eta/2|\gamma,\delta)}{\det S(-\lambda+\eta/2|\alpha,\beta)}\,\sigma^y\,
   M^t(\lambda|(\gamma-1,\delta),(\alpha-1,\beta))\,\sigma^y\, 
   \nonumber\\
   &\hspace{5cm}
   \times
   K_-(\lambda|(\gamma,\delta),(\gamma',\delta'))\, M(-\lambda|(\gamma',\delta'),(\alpha,\beta)).
\end{align}

In components, we shall write
\begin{align}\label{M-components}
   M(\lambda|(\alpha,\beta),(\gamma,\delta))
   &=\begin{pmatrix} A(\lambda|(\alpha,\beta),(\gamma,\delta)) & B(\lambda|(\alpha,\beta),(\gamma,\delta))\\
      C(\lambda|(\alpha,\beta),(\gamma,\delta)) & D(\lambda|(\alpha,\beta),(\gamma,\delta))
      \end{pmatrix}
      \nonumber\\
   &=\frac{1}{\det S(-\lambda-\eta/2|\alpha,\beta)}
       \begin{pmatrix}
       A(\lambda|\alpha-\beta,\gamma+\delta) & B(\lambda|\alpha-\beta,\gamma-\delta)\\
       C(\lambda|\alpha+\beta,\gamma+\delta) & D(\lambda|\alpha+\beta,\gamma-\delta)
       \end{pmatrix},
\end{align}
so that 
\begin{align}\label{Mhat-components}
   \hat M(\lambda|(\alpha,\beta),(\gamma,\delta))
   &=\frac{(-1)^N}{\det S(\lambda+\eta/2|\gamma,\delta)}\,
     \nonumber\\
   &\quad\times
   \begin{pmatrix}
    D(-\lambda|\alpha-1+\beta,\gamma-1-\delta) & -B(-\lambda|\alpha-1-\beta,\gamma-1-\delta) \\
    -C(-\lambda|\alpha-1+\beta,\gamma-1+\delta) & A(-\lambda|\alpha-1-\beta,\gamma-1+\delta)
   \end{pmatrix},
\end{align}
in which we have used that $\det S(\lambda-\eta/2|\alpha-1,\beta)=\det S(\lambda+\eta/2|\alpha,\beta)$.
If $\gamma=\gamma'$ and $\delta=\delta'$, we therefore have the following boundary-bulk decomposition:
\begin{multline}\label{Boundary-bulkU}
    \mathcal{U}_-(\lambda|\alpha,\beta)=\frac{(-1)^N}{\det S(-\lambda+\eta/2|\alpha,\beta)}\,
       \begin{pmatrix}
    D(\lambda|\gamma-1+\delta,\alpha-1-\beta) & -B(\lambda|\gamma-1-\delta,\alpha-1-\beta) \\
    -C(\lambda|\gamma-1+\delta,\alpha-1+\beta) & A(\lambda|\gamma-1-\delta,\alpha-1+\beta)
   \end{pmatrix}
   \\
   \times
   K_-(\lambda|(\gamma,\delta),(\gamma,\delta))
   \frac{1}{\det S(\lambda-\eta/2|\gamma,\delta)}
       \begin{pmatrix}
       A(-\lambda|\gamma-\delta,\alpha+\beta) & B(-\lambda|\gamma-\delta,\alpha-\beta)\\
       C(-\lambda|\gamma+\delta,\alpha+\beta) & D(-\lambda|\gamma+\delta,\alpha-\beta)
       \end{pmatrix}.
\end{multline}

Some useful properties of the bulk gauge Yang-Baxter generators, elements of \eqref{M-components}, are recalled in Appendix~\ref{app-bulk-gauge}.

The matrix elements of $K_-(\lambda|(\gamma,\delta),(\gamma,\delta))$ can be computed similarly as those of \eqref{def-tildeK+}. 
Setting
\begin{equation}
    K_-(\lambda|(\gamma,\delta),(\gamma,\delta))
    =\begin{pmatrix} a_-(\lambda|\gamma,\delta) & b_-(\lambda|\gamma,\delta) \\
       c_-(\lambda|\gamma,\delta) & d_-(\lambda|\gamma,\delta) \end{pmatrix},
\end{equation}
we obtain
\begin{align}\label{b_-gauge}
 b_-(\lambda|\gamma,\delta)
 &=c_-(\lambda|\gamma,-\delta) \nonumber\\
 &=\frac{\ths(2\lambda-\eta )\ths\!\big(\frac{(\delta-\gamma)\eta+\sum_{\ell=1}^3\alpha^-_\ell}2\big)}{\det S(\lambda-\frac\eta 2|-\gamma,-\delta)\prod_{\ell=1}^3\ths(\alpha_\ell^-)}
 \prod_{k=1}^3\ths\!\left(\frac{-2\alpha^-_k+\sum_{\ell=1}^3\alpha^-_\ell-(\delta-\gamma)\eta}2\right),
\end{align}
and
\begin{align}\label{a_-gauge}
   a_-(\lambda|\gamma,\delta) 
   &=d_-(\lambda|\gamma,-\delta) \nonumber\\
   &=\frac{\ths(2\lambda-\eta)}{2\det S(\lambda-\frac\eta 2|-\gamma,-\delta)} \, 
      \bigg[ \frac{\ths(\lambda-\frac\eta 2-\delta\eta)\ths(\gamma\eta)}{\ths(\lambda-\frac\eta 2)}
      + c^x_- \frac{\ths_4(\lambda-\frac\eta 2-\delta\eta)\ths_4(\gamma\eta) }{\ths_4(\lambda-\frac\eta 2)}
      \nonumber\\
      &\hspace{3cm}
      + c^y_-\frac{\ths_3(\lambda-\frac\eta 2-\delta\eta)\ths_3(\gamma\eta) }{\ths_3(\lambda-\frac\eta 2)}
      +c^z_-\frac{\ths_2(\lambda-\frac\eta 2-\delta\eta)\ths_2(\gamma\eta) }{\ths_2(\lambda-\frac\eta 2)} \bigg].
\end{align}
Hence we can impose $c_-(\lambda|\gamma,\delta)=0$ by choosing $\gamma$ and $\delta$ such that
\begin{equation}\label{cond-c-=0}
 (\gamma+\delta)\eta=\sum_{\ell=1}^3\epsilon_{\alpha^-_\ell} \alpha^-_\ell\mod(2\pi,2\pi\omega),
  \qquad\text{with}\ \
  \epsilon_{\alpha^-_\ell}=\pm 1\ \ \text{such that}\ \ \epsilon_{\alpha^-_1}\epsilon_{\alpha^-_2}\epsilon_{\alpha^-_3}=1.
\end{equation}
Similarly, we can impose $b_-(\lambda|\gamma,\delta)=0$ by choosing $\gamma$ and $\delta$ such that
\begin{equation}\label{cond-b-=0}
  (\gamma-\delta)\eta=\sum_{\ell=1}^3\tilde\epsilon_{\alpha^-_\ell} \alpha^-_\ell\mod(2\pi,2\pi\omega),
  \qquad\text{with}\ \
  \tilde\epsilon_{\alpha^-_\ell}=\pm 1\ \ \text{such that}\ \ \tilde\epsilon_{\alpha^-_1}\tilde\epsilon_{\alpha^-_2}\tilde\epsilon_{\alpha^-_3}=1.
\end{equation}
Thus, under the condition \eqref{cond-c-=0}:
\begin{align}\label{a-constr}
   a_-(\lambda|\gamma,\delta) 
   &= d_-(\lambda|\gamma,-\delta) 
   =\prod_{i=1}^3\frac{\ths(\lambda-\frac\eta 2-\epsilon_{\alpha^-_i} \alpha^-_i)}{\ths(-\epsilon_{\alpha^-_i} \alpha^-_i)} 
   \equiv \mathsf{a}_-(\lambda|\boldsymbol{\epsilon_{\alpha^-} })  ,
\end{align}
and, under the condition \eqref{cond-b-=0}, we have
\begin{align}\label{d-constr}
   a_-(\lambda|\gamma,-\delta) 
   &=d_-(\lambda|\gamma,\delta) 
   =\prod_{i=1}^3\frac{\ths(\lambda-\frac\eta 2-\tilde\epsilon_{\alpha^-_i} \alpha^-_i)}{\ths(-\tilde\epsilon^-_i\alpha_{\alpha^-_i} )}
    \equiv \mathsf{a}_-(\lambda|\boldsymbol{\tilde\epsilon_{\alpha^-} }) .
\end{align}
Note also that both constraints  \eqref{cond-c-=0} and  \eqref{cond-b-=0} can be satisfied simultaneously only if there exists $i_-\in\{1,2,3\}$ such that 
\begin{align}\label{signs-}
 &\epsilon_{\alpha^-_{i_-}}=\tilde\epsilon_{\alpha^-_{i_-}} \quad \text{and} \quad \epsilon_{\alpha^-_\ell}=-\tilde\epsilon_{\alpha^-_\ell} \ \text{for }\ \ell\not=i_-,
\end{align}
which leads to the following solution of \eqref{cond-c-=0}-\eqref{cond-b-=0}:
\begin{align}\label{gamma-delta-}
 &\gamma\eta= \epsilon_{\alpha^-_{i_-}}\alpha^-_{i_-}, \quad \delta\eta = \sum_{\ell\not= i_-}\epsilon_{\alpha^-_\ell}\alpha_\ell^-=-\sum_{\ell\not= i_-}\tilde\epsilon_{\alpha^-_\ell}\alpha_\ell^-.
\end{align}

In particular, from \eqref{Boundary-bulkU}, and under the conditions \eqref{cond-c-=0} and \eqref{cond-b-=0} which make $K_-(\lambda|(\gamma,\delta),(\gamma,\delta))$ diagonal, we have the following boundary-bulk decomposition for the operator $\widehat{\mathcal B}_-(\lambda|\alpha-\beta)$:
%
\begin{align}\label{boundary-bulk-B2}
    \widehat{\mathcal B}_-(\lambda|\alpha-\beta)
    &=\frac{(-1)^N}{\det S(\lambda-\eta/2|\gamma,\delta)}
    \Big[ D(\lambda|\gamma-1+\delta,\alpha-1-\beta) \, a_-(\lambda|\gamma,\delta) \, B(-\lambda|\gamma-\delta,\alpha-\beta)
    \nonumber\\
    &\hspace{3.1cm} -B(\lambda|\gamma-1-\delta,\alpha-1-\beta)\, d_-(\lambda|\gamma,\delta) \, D(-\lambda|\gamma+\delta,\alpha-\beta) \Big]
     \nonumber\\
  &=\frac{(-1)^N}{\ths(\eta(\delta+1))}
   \frac{\ths(2\lambda-\eta)}{\ths(2\lambda)}\sum_{\sigma=\pm} \frac{\sigma \mathsf{a}_-(-\sigma\lambda|{\boldsymbol{\epsilon_{\alpha^-} }})}{\ths(\sigma\lambda+\frac\eta 2+\eta \gamma)} \nonumber\\
  &\hspace{3cm}\times
   B(\sigma\lambda|\gamma-\delta-1,\alpha-\beta-1)\, D(-\sigma\lambda |\gamma+\delta,\alpha-\beta),
\end{align}    
in which we have used \eqref{Comm-DB} and the notation \eqref{a-constr}.
This decomposition will be used in section~\ref{sec-Bb-Bethe} for the boundary-bulk decomposition of the generalised Bethe states.

\section{Diagonalisation of the transfer matrix by SoV in the Vertex-IRF framework}
\label{sec-SoV}

In our previous work \cite{NicT24}, we explained how the transfer matrix $\mathcal{T}(\lambda)$ could be diagonalised within the new SoV approach developed for the open XXX and XXZ chains in \cite{MaiN19}.
We have shown that, under very general hypothesis (the hypothesis that the boundary matrices $K_+(\lambda)$ and $K_-(\lambda)$ are not both proportional to the identity and that the inhomogeneity parameters are generic enough) the transfer matrix is diagonalisable with simple spectrum. Under these hypothesis, the explicit diagonalisation of the transfer matrix was performed in \cite{NicT24} via the construction of some basis (the so-called {\em SoV basis})
\begin{equation}\label{bases-SoV}
  \big\{ \ket{\mathbf{h}}, \mathbf{h}\equiv(h_1,\ldots,h_N)\in\{0,1\}^N\big\},
  \quad \text{and} \quad
   \big\{ \bra{\mathbf{h}}, \mathbf{h}\equiv(h_1,\ldots,h_N)\in\{0,1\}^N\big\},
\end{equation}
of $\mathcal{H}$ and $\mathcal{H}^*$ respectively, such that
\begin{equation}\label{ortho-states}
   \moy{\mathbf{h}\,|\,\mathbf{k}}=\delta_{\mathbf{h},\mathbf{k}}\, \frac{\mathcal{N}_0}{V(\xi_1^{(h_1)},\ldots,\xi_N^{(h_N)})\,V(\xi_1,\ldots,\xi_N)},
   \qquad
   \forall \,\mathbf{h},\mathbf{k}\in\{0,1\}^N,
\end{equation}
for some normalisation coefficient $\mathcal{N}_0$,
and on which the right and left-eigenstates of $\mathcal{T}(\lambda)$ can be expressed in the form of {\em separate state} as
\begin{align}
  & 
  \sum_{\mathbf{h}\in\{0,1\}^N} \prod_{n=1}^N  Q(\xi_n^{(h_n)})\ 
  V(\xi_1^{(h_1)},\ldots,\xi_N^{(h_N)})\,\ket{\mathbf{h}},
  \label{eigen-r}\\
  & 
  \sum_{\mathbf{h}\in\{0,1\}^N} \prod_{n=1}^N \left[\left(\frac{\ths(2\xi_n-2\eta)\,\mathbf{A}(\frac\eta 2+\xi_n)}{\ths(2\xi_n+2\eta)\,\mathbf{A}(\frac \eta 2-\xi_n)}\right)^{\! h_n} Q(\xi_n^{(h_n)})\right]\, 
  V(\xi_1^{(h_1)},\ldots,\xi_N^{(h_N)})\,\bra{\mathbf{h}}.
  \label{eigen-l}
\end{align}
Here we have used the shortcut notations
\begin{align}
   &\xi_n^{(h)}=\xi_n+\frac\eta 2-h\eta, \qquad  n\in\{1,\ldots,N\}, \quad h\in\{0,1\},\label{def-xi-shift}\\
   &V(\zeta_1,\ldots,\zeta_N)=\prod_{1\le i< j\le N}\ths(\zeta_j-\zeta_i)\ths(\zeta_j+\zeta_i) ,  \label{VDM}
\end{align}
for any set of variables $\zeta_1,\ldots,\zeta_N$.
More precisely, the separate states of the form \eqref{eigen-r}, \eqref{eigen-l} are respectively the right and left eigenstate (which are uniquely defined up to an overall normalisation factor)  of the transfer matrix $\mathcal{T}(\lambda)$ with eigenvalue $\tau(\lambda)$ if their coefficients satisfy the following discrete version of the $TQ$-equation:
\begin{equation}\label{T-Qdis}
   \frac{Q(\xi_n^{(1)})}{Q(\xi_n^{(0)})}=\frac{\tau(\xi_n+\frac\eta 2)}{\mathbf{A}(\frac\eta 2+\xi_n)}=\frac{\mathbf{A}(\frac\eta 2-\xi_n)}{\tau(\xi_n-\frac\eta 2)},
   \qquad \forall n\in\{1,\ldots,n\},
\end{equation}
with $\mathbf{A}$ being a function such that
\begin{equation}\label{def-A}
   \mathbf{A}(\lambda+\eta/2)\,\mathbf{A}(-\lambda+\eta/2)=\frac{\det_q K_+(\lambda)\, \det_q\mathcal{U}_-(\lambda)}{\ths(\eta+2\lambda)\,\ths(\eta-2\lambda)}.
\end{equation}

In \cite{NicT24}, the SoV basis \eqref{bases-SoV} were directly constructed by multiple action of the transfer matrix itself on a sufficiently generic co-vector $\bra{S}$ and its vector counterpart $\ket{R}$ (adequately chosen with respect to $\bra{S}$ so as to satisfy the orthogonality condition \eqref{ortho-states}):
\begin{equation}\label{new-SoV-states}
  \bra{\mathbf{h}}\equiv \bra{S}\prod_{n=1}^N\left(\frac{\mathcal{T}(\xi_n-\frac \eta 2)}{\mathbf{A}(\frac\eta 2-\xi_n)}\right)^{\! 1-h_n},
  \qquad
  \ket{\mathbf{h}}=\prod_{n=1}^N\left(\frac{\ths(2\xi_n-2\eta)}{\ths(2\xi_n+2\eta)}
  \frac{\mathcal{T}(\xi_n+\frac \eta 2)}{\mathbf{A}(\frac \eta 2-\xi_n)}\right)^{\! h_n} \ket{R}.
\end{equation}

The construction \eqref{new-SoV-states} has the advantage of being very general, see also \cite{MaiN19}. However, 
for our purpose of computing correlation functions, we shall use here the more specific and explicit SoV construction initially proposed in \cite{FalN14}, which generalises Sklyanin's approach \cite{Skl85,Skl90} to the open XYZ chain in the Vertex-IRF framework. 
The corresponding SoV basis has the property to pseudo-diagonalise the generalised gauge-transformed operator $\mathcal{B}_-(\lambda|\alpha,\beta)$.
As a consequence, the separate states can be rewritten as generalised Bethe states, on which we can compute the action of local operators by usual techniques (see Section~\ref{sec-act}), similarly as what was done in our previous works \cite{NicT22,NicT23} on the open XXZ case. We recall here the details of this specific SoV construction within the notations introduced in the previous sections.

\subsection{The SoV basis}

As in \cite{NicT24}, we introduce the notations,
\begin{align}
   &\xi_{0}^{(0)}=\frac\eta 2, \quad \xi_{-1}^{(0)}=\frac \eta 2 +\frac\pi 2,\quad \xi_{-2}^{(0)}=\frac \eta 2 +\frac{\pi\omega} 2,\quad \xi_{-3}^{(0)}=\frac \eta 2 +\frac{\pi+\pi\omega} 2, \label{special_points}
\end{align}
and we suppose from now on that the inhomogeneity parameters are generic, or at least  that they satisfy the condition 
\begin{equation}\label{cond-inhom}
 \begin{aligned}
 &\epsilon_n \,\xi_n^{(h_n)},\ 1\le n\le N,\ h_n\in\{0,1\},\ \epsilon_n\in\{-1,1\}, \quad\\
 &\text{and} \quad \epsilon_j \,\xi_{-j}^{(0)},\ 1\le j\le 3,\ \epsilon_j\in\{-1,1\}, \quad 
 \end{aligned}
 \quad
 \text{are pairwise distinct modulo }(\pi,\pi\omega).
\end{equation}
%
%
%
We also consider the usual reference states
\begin{align}
   &\ket{0}=\Motimes_{n=1}^N \begin{pmatrix}\, 1\, \\ \, 0 \,\end{pmatrix}_{\! n},\qquad
   \ket{\underline{0}}=\Motimes_{n=1}^N\begin{pmatrix} \, 0\,  \\ 1 \end{pmatrix}_{\! n},
\end{align}
their respective dual states $\bra{0}$ and $\bra{\underline{0}}$, as well as  the following product of Vertex-IRF matrices all along the chain:
\begin{align}
    S_{1\ldots N}(\boldsymbol{\xi} |\alpha,\beta)
    &=S_1(-\xi_1|\alpha,\beta)\, S_2(-\xi_2|\alpha,\beta+\sigma_1^z)\ldots S_N(-\xi_N|\alpha,\beta+\sigma_1^z+\ldots+\sigma_{N-1}^z).
\end{align}

With these notations, let us define, for each $N$-tuple $\mathbf{h}=(h_1,\ldots,h_N)\in\{0,1\}^N$, and for arbitrary gauge coefficients $\alpha$ and $\beta$, the SoV states
\begin{align}
   &\ket{\mathbf{h},\alpha,\beta}
   =\prod_{n=1}^N\left(\frac{\mathcal{D}_-(\xi_n+\frac\eta 2|\alpha,\beta)}{k_n(\alpha)\,\mathsf{A}_-(\frac\eta 2-\xi_n)}\right)^{\! h_n} 
   S_{1\ldots N}(\boldsymbol{\xi} |\alpha,\beta-1)\,\ket{\underline{0}}, 
   \label{ket-state}\\
   &\bra{\alpha,\beta,\mathbf{h}}
   =\bra{0}\, S^{-1}_{1\ldots N}(\boldsymbol{\xi} |\alpha,\beta+1)
   \prod_{n=1}^N\left(\frac{\mathcal{A}_-(\frac\eta 2-\xi_n |\alpha,\beta)}{\mathsf{A}_-(\frac\eta 2-\xi_n)}\right)^{\! 1-h_n} ,
   \label{bra-state}
\end{align}
with coefficients given by
\begin{equation}\label{k_n-A-}
   k_n(\alpha)=\frac{\ths(2\xi_n+\eta)}{\ths(2\xi_n-\eta)}\,\frac{\ths(\alpha\eta-\xi_n)}{\ths(\alpha\eta+\xi_n)} ,
   \qquad
   \mathsf{A}_-(\lambda)=\mathsf{g}_-(\lambda)\, a(\lambda)\, d(-\lambda),
\end{equation}
where $\mathsf{g}_-$ is any function such that
\begin{equation}\label{rel-g-}
   \mathsf{g}_-(\lambda+\tfrac\eta 2)\, \mathsf{g}_-(-\lambda+\tfrac\eta 2)=\frac{\det_q K_-(\lambda)}{\ths(2\lambda-2\eta)}.
\end{equation}
%

\subsubsection{Action of the gauge boundary operators on the SoV states}

The states \eqref{ket-state} and \eqref{bra-state}  are right and left pseudo-eigenstates of $\mathcal{B}_-(\lambda|\alpha,\beta)$:
\begin{align}
  &\mathcal{B}_-(\lambda|\alpha,\beta)\, \ket{\mathbf{h},\alpha,\beta}=(-1)^N a_\mathbf{h}(\lambda)\,a_\mathbf{h}(-\lambda)\,
  \frac{\ths(\eta(\beta-N))}{\ths(\eta\beta)}\, b_-(\lambda|\alpha,\beta-N)\,\ket{\mathbf{h},\alpha,\beta+2},
  \label{eigen-r-B}\\
  &\bra{\alpha,\beta,\mathbf{h}}\, \mathcal{B}_-(\lambda|\alpha,\beta)=(-1)^N a_\mathbf{h}(\lambda)\,a_\mathbf{h}(-\lambda)\,
  \frac{\ths(\eta(\beta+N))}{\ths(\eta\beta)}\, b_-(\lambda|\alpha,\beta+N)\, \bra{\alpha,\beta-2,\mathbf{h}},
  \label{eigen-l-B}
\end{align}
where $b_-(\lambda|\alpha,\beta)$ is given by \eqref{b_-gauge} and
\begin{equation}\label{a_h}
   a_\mathbf{h}(\lambda)=\prod_{n=1}^N\ths(\lambda-\xi_n-\eta/ 2+h_n\eta).
\end{equation}
Equivalently, \eqref{eigen-r-B} 
can be rewritten as
\begin{align}
  &\widehat{\mathcal{B}}_-(\lambda|\alpha-\beta)\, \ket{\mathbf{h},\alpha,\beta}
  =(-1)^N a_\mathbf{h}(\lambda)\,a_\mathbf{h}(-\lambda)\, \ths(2\lambda-\eta)\, \mathsf{b}_-(\alpha-\beta+N)\,
  \ket{\mathbf{h},\alpha,\beta+2},
  \label{act-Bhat-ket} 
\end{align}
in which
\begin{equation}\label{coeff-b_-}
    \mathsf{b}_-(x)=\ths\bigg(\frac{\sum_{\ell=1}^3\alpha_\ell^--x\eta}2\bigg)
    \prod_{k=1}^3\frac{\ths\bigg(\frac{\sum_{\ell=1}^3\alpha_\ell^-+x\eta}2-\alpha_k^-\bigg)}{\ths(\alpha_k^-)}.
\end{equation}

Moreover, the action of the other elements of the gauge transformed boundary monodromy matrix \eqref{gauged-U} on the states \eqref{ket-state} and \eqref{bra-state} can also be explicitly computed.

In particular, the left action of $\mathcal{A}_-(\lambda|\alpha,\beta)$ on $ \bra{\alpha,\beta,\mathbf{h}}$ is quite simple. It can easily be computed at the points $\xi^{(0)}_{-k}$ \eqref{special_points} for $k=0,1,2,3$:
\begin{equation}\label{act0-A-bra}
  \bra{\alpha,\beta,\mathbf{h}}\, \mathcal{A}_-( \xi^{(0)}_{-k} |\alpha,\beta)
  = 
   (-1)^N\,  a_-( \xi^{(0)}_{-k} |\alpha,\beta+N)\, a( \xi^{(0)}_{-k} )\, d(- \xi^{(0)}_{-k} ) \,
   \bra{\alpha,\beta,\mathbf{h}},
\end{equation}
in which $a_-$ is given by \eqref{a_-gauge},
and at the points $\epsilon_n\xi_n^{(h_n)}$, $\epsilon_n=\pm$, $n=1,\ldots,N$:
\begin{equation}\label{act-A-xi-bra}
    \bra{\alpha,\beta,\mathbf{h}}\, \mathcal{A}_-(\epsilon_n\xi_n^{(h_n)}|\alpha,\beta)
    =\mathsf{A}_-(\epsilon_n\xi_n^{(h_n)})\, \bra{\alpha,\beta,\mathsf{T}_n^{\epsilon_n}\mathbf{h}},
\end{equation}
with the convention
\begin{align}
   &\mathsf{T}_n^{\epsilon_n} (h_1,\ldots,h_n,\ldots,h_N)=(h_1,\ldots,h_n+\epsilon_n,\ldots,h_N),
   \label{shift-h}\\
   &\bra{\alpha,\beta,\mathbf{h}}=0\qquad \text{if}\quad \mathbf{h}\notin\{0,1\}^N.
\end{align}
Using the fact that $\det S(\eta/2-\lambda|\alpha,\beta)\, \mathcal{A}_-(\lambda |\alpha,\beta)$ is an entire function of $\lambda$ which satisfies from \eqref{q-per-gaugeU1}-\eqref{q-per-gaugeU2} some specific quasi-periodicity properties\footnote{More precisely, if we set $\widehat{\mathcal{A}}_-(\lambda|\alpha,\beta)=\det S(\eta/2-\lambda|\alpha,\beta)\, \mathcal{A}_-(\lambda |\alpha,\beta)$, we have from \eqref{q-per-gaugeU1}-\eqref{q-per-gaugeU2}:
\begin{align}
 &\widehat{\mathcal{A}}_{-}(\lambda+\pi|\alpha,\beta)=\widehat{\mathcal{A}}_{-}(\lambda|\alpha,\beta), \label{q-per-hgaugeA1}\\
 &\widehat{\mathcal{A}}_{-}(\lambda+\pi\omega|\alpha,\beta)= (-e^{-2i\lambda-i\pi\omega})^{2N+4}\, e^{4i\eta+2i\eta\beta}\,\widehat{\mathcal{A}}_{-}(\lambda|\alpha,\beta). \label{q-per-hgaugeA2}
\end{align}
}
 with respect to $\pi$ and $\pi\omega$, we therefore get
\begin{multline}\label{act-A-lambda-bra}
    \bra{\alpha,\beta,\mathbf{h}}\,\mathcal{A}(\lambda|\alpha,\beta) = \mathfrak{A}_{\mathbf{h}}(\lambda|\alpha,\beta)\, \bra{\alpha,\beta,\mathbf{h}}
    \\
    +\sum_{n=1}^N\sum_{\epsilon_n=\pm}
    \frac{\ths(2\lambda-\eta) \ths(\lambda+\epsilon_n\xi_n^{(h_n)}) \ths(\lambda-\epsilon_n\xi_n^{(h_n)}\! -\eta\beta )}
    {\ths(2\xi_n^{(h_n)}-\epsilon_n\eta) \ths(2\xi_n^{(h_n)}) \ths(-\eta\beta )} \,
    \prod_{k\not= n} 
    \frac{\ths(\lambda-\xi_k^{(h_k)})\ths(\lambda+\xi_k^{(h_k)})}{\ths(\xi_n^{(h_n)}-\xi_k^{(h_k)})\ths(\xi_n^{(h_n)}+\xi_k^{(h_k)})}\,
    \\
    \times
    \frac{\det S(\tfrac\eta 2-\epsilon_n\xi_n^{(h_n)}|\alpha,\beta)}{\det S(\tfrac\eta 2-\lambda |\alpha,\beta)}\,
    \mathsf{A}_-(\epsilon_n\xi_n^{(h_n)})\, \bra{\alpha,\beta,\mathsf{T}_n^{\epsilon_n}\mathbf{h}},
\end{multline}
in which the coefficient of the diagonal action is given by
\begin{multline}\label{act-A-lambda-bra-0}
    \mathfrak{A}_{\mathbf{h}}(\lambda|\alpha,\beta)
    =\sum_{k=0}^3
    \frac{\ths(\lambda-\xi^{(0)}_{-k} +\pi+\pi\omega-\eta\beta)}{\ths(\pi +\pi\omega-\eta\beta)}\,
    \frac{a_\mathbf{h}(\lambda)\, a_\mathbf{h}(-\lambda)}{a_\mathbf{h}(\xi^{(0)}_{-k} )\, a_\mathbf{h}(-\xi^{(0)}_{-k} )} \,
    \prod_{\substack{ k'=0 \\ k'\not=k}}^3 \frac{\ths(\lambda-\xi^{(0)}_{-k'})}{\ths(\xi^{(0)}_{-k} -\xi^{(0)}_{-k'})}\,
     \\
     \times
  \frac{ \det S(\tfrac\eta 2-\xi^{(0)}_{-k}  |\alpha,\beta)}{\det S(\tfrac\eta 2-\lambda |\alpha,\beta)}\,
  (-1)^Na_-(\xi^{(0)}_{-k}  |\alpha,\beta+N)\, a(\xi^{(0)}_{-k} )\, d(-\xi^{(0)}_{-k} )  .
\end{multline}
The action of $\mathcal{D}_-(\lambda|\alpha,\beta+2)$ on $ \bra{\alpha,\beta,\mathbf{h}}$ can then be obtained by the parity relation \eqref{parity-Dbeta}.

A similar strategy can be used to compute the action of $\mathcal{D}_-(\lambda|\alpha,\beta)$ on $ \ket{\mathbf{h},\alpha,\beta}$:
\begin{multline}\label{act-D-lambda-ket}
    \mathcal{D}(\lambda|\alpha,\beta)\, \ket{\mathbf{h},\alpha,\beta}  = \mathfrak{A}_{\mathbf{h}}(\lambda|\alpha,- \beta)\, \ket{\mathbf{h},\alpha,\beta} 
    \\
    +\sum_{n=1}^N\sum_{\epsilon_n=\pm}
    \frac{\ths(2\lambda-\eta) \ths(\lambda+\epsilon_n\xi_n^{(h_n)}) \ths(\lambda-\epsilon_n\xi_n^{(h_n)}-\eta\beta )}
    {\ths(2\xi_n^{(h_n)}-\epsilon_n\eta) \ths(2\xi_n^{(h_n)})\, \ths(-\eta\beta )} \,
     \prod_{k\not= n} 
    \frac{\ths(\lambda-\xi_k^{(h_k)})\ths(\lambda+\xi_k^{(h_k)})}{\ths(\xi_n^{(h_n)}-\xi_k^{(h_k)})\ths(\xi_n^{(h_n)}+\xi_k^{(h_k)})}\,
    \\
    \times
    \frac{\det S(\tfrac\eta 2-\epsilon_n\xi_n^{(h_n)}|\alpha,\beta)}{\det S(\tfrac\eta 2-\lambda |\alpha,\beta)}\,
    k_n(\alpha)^{\epsilon_n}\, \mathsf{A}_-(-\epsilon_n\xi_n^{(1-h_n)})\, \ket{\mathsf{T}_n^{\epsilon_n}\mathbf{h},\alpha,\beta},
 \end{multline}
in which we have used the convention \eqref{shift-h} with $\ket{\mathbf{h},\alpha,\beta}=0$ if $\mathbf{h}\notin\{0,1\}^N$.
The action of $\mathcal{A}_-(\lambda|\alpha,\beta-2)$ on $\ket{\mathbf{h},\alpha,\beta} $ can then be obtained by the parity relation \eqref{parity-Abeta}.

\subsubsection{Orthogonality and normalisation}

It follows from these actions that the right and left SoV states \eqref{ket-state} and \eqref{bra-state} satisfy the orthogonality property\footnote{Note that \eqref{norm-cond} is of the form \eqref{ortho-states} with $\mathcal{N}_0=V(\xi_1,\ldots,\xi_N)\,  \mathsf{N}(\boldsymbol{\xi}|\alpha,\beta)$.}
\begin{equation}\label{norm-cond}
   \moy{ \alpha,\beta-1,\mathbf{h} \, | \, \mathbf{k},\alpha,\beta+1}
   = \delta_{\mathbf{h},\mathbf{k}}\, 
   \frac{\mathsf{N}(\boldsymbol{\xi}|\alpha,\beta)}{V(\xi_1^{(h_1)},\ldots,\xi_N^{(h_N)})},
   \qquad
   \forall \mathbf{h},\mathbf{k}\in\{0,1\}^N,
\end{equation}
with (see Appendix~C of \cite{KitMNT18})
\begin{align}\label{norm-coeff}
   \mathsf{N}(\boldsymbol{\xi}|\alpha,\beta)
   &=V(\xi_1^{(0)},\ldots,\xi_N^{(0)})\, \moy{\alpha,\beta-1,\mathbf{0}\, |\,\mathbf{0},\alpha,\beta+1}
   \nonumber\\
   &=(-1)^N\prod_{n=1}^{N}\left[ \frac{b_-(\tfrac\eta 2-\xi_n|\alpha,\beta+N-2n+1)}{\mathsf{g}_-(\tfrac\eta 2-\xi_n)}\,\frac{\ths((\beta+N-2n+1)\eta)}{\ths((\beta+N-n)\eta)}\right]
   \nonumber\\
   &\hspace{6cm}\times\frac{V(\xi_1^{(0)},\ldots,\xi_N^{(0)})}{V(\xi_1^{(1)},\ldots,\xi_N^{(1)})}\,  V(\xi_1,\ldots,\xi_N).
\end{align}
Hence, under the condition \eqref{cond-inhom}, and for any $\alpha$ and $\beta$ such that the normalisation coefficient \eqref{norm-coeff} is non-zero\footnote{This means in particular that the gauge parameters $\alpha$ and $\beta$ have to be chosen such that
\begin{equation}\label{cond-norm}
  \prod_{n=1}^{N} \mathsf{b}_-(\alpha-\beta-N+2n-1)\not=0,
\end{equation}
in which $\mathsf{b}_-(x)$ is given by \eqref{coeff-b_-}.
},
the states $\ket{\mathbf{h},\alpha,\beta+1}$, $\mathbf{h}\in\{0,1\}^N$, form a basis  of $\mathcal{H}$. Similarly, the states $\bra{\alpha,\beta-1,\mathbf{h}}$ $\mathbf{h}\in\{0,1\}^N$, form a basis of  $\mathcal{H}^*$, and we have
\begin{equation}\label{identity}
  \mathbb{I}=\frac{1}{\mathsf{N}(\boldsymbol{\xi}|\alpha,\beta)}\sum_{\mathbf{h}\in\{0,1\}^N}
  V(\xi_1^{(h_1)},\ldots,\xi_N^{(h_N)})\,
  \ket{\mathbf{h},\alpha,\beta+1} \bra{\alpha,\beta-1,\mathbf{h}}.
\end{equation}

\subsection{SoV characterization of the transfer matrix spectrum and eigenstates}

We recall that, for $\alpha$ and $\beta$ satisfying the constraints  \eqref{cond-c+=0} and  \eqref{cond-b+=0}, i.e. given in terms of the $+$ boundary parameters by \eqref{alpha-beta+}, the transfer matrix can be written in a simple form in terms of ${\mathcal{A}}_-(\pm\lambda |\alpha ,\beta-1 )$ or in terms of ${\mathcal{D}}_-(\pm\lambda |\alpha ,\beta+1 )$, see \eqref{expr-transfer}.
We moreover suppose from now on that, for these values of $\alpha$ and $\beta$, the coefficient \eqref{norm-coeff} is non-zero, so that the states $ \ket{\mathbf{h},\alpha,\beta+1}$ form a basis of $\mathcal{H}$, whereas the states $ \bra{\alpha,\beta-1,\mathbf{h}}$  form a basis of  $\mathcal{H}^*$ for $ \mathbf{h}\in\{0,1\}^N$.

It then follows from the form of the action \eqref{act-A-lambda-bra} and \eqref{act-D-lambda-ket} on the SoV states that, for these particular values of $\alpha$ and $\beta$,  the basis $\{ \ket{\mathbf{h},\alpha,\beta+1},\, \mathbf{h}\in\{0,1\}^N\}$  of $\mathcal{H}$, respectively the basis $\{ \bra{\alpha,\beta-1,\mathbf{h}},\, \mathbf{h}\in\{0,1\}^N\}$ of  $\mathcal{H}^*$, separates the variables for the spectral problem of the transfer matrix $\mathcal{T}(\lambda)$ \eqref{expr-transfer} at the points $\pm \xi_n^{(h_n)}$.
More precisely, we can formulate the following result.

\begin{theorem}\label{th-spectr-SoV}
Let us suppose that the boundary matrices $K_-(\lambda)$ and $K_+(\lambda)$ are not both proportional to the identity, and that the inhomogeneity parameters are generic enough, i.e. satisfy \eqref{cond-inhom}.

Then the transfer matrix $\mathcal{T}(\lambda)$ is diagonalisable with simple spectrum. 
Its set of eigenvalues is given by the set of functions $\tau(\lambda)$ such that
\begin{enumerate}
   \item[(i)] $\tau(\lambda)$ is an even entire function of $\lambda$ which satisfies the quasi-periodicity properties
\begin{equation}\label{qper-tau}
   \tau(\lambda +\pi )=\tau(\lambda ),
          \qquad
    \tau(\lambda+\pi \omega )=\left( - e^{-2i\lambda-i\pi\omega }\right) ^{2N+6}\tau(\lambda );
\end{equation}
   \item[(ii)]  its values at the points \eqref{special_points} are given by  $\tau(\xi_{-j}^{(0)}) = \tau_{-j}$, $j=0,1,2,3$, with
\begin{align}\label{tau0}
   &\tau_0=(-1)^N\frac{\ths(2\eta)}{\ths(\eta)}\, \det_q T(0), \qquad
   \tau_{-1}=c^z_-\,c^z_+\,\frac{\ths(2\eta)}{\ths(\eta)}\, \det_q T(\tfrac\pi 2), \\
   &\tau_{-2}= - e^{i\big(2\sum_{n=1}^N\xi_n-(N+3)\eta-\frac{3\pi\omega}2\big)}\,c^x_-\,  c_+^x\, \frac{\ths(2\eta)}{\ths(\eta)}\,  \det_q T(-\tfrac{\pi\omega}2)  , \\
   &\tau_{-3}= e^{i\big(2\sum_{n=1}^N\xi_n-(N+3)\eta-\frac{3\pi\omega}2\big)}\,c^y_-\,  c_+^y\, \frac{\ths(2\eta)}{\ths(\eta)}\,  \det_q T(-\tfrac{\pi+\pi\omega}2).
   \label{tau-3}
\end{align}
\item[(iii)] its values at the points $\xi_n+\tfrac\eta2$ and $\xi_n-\tfrac\eta 2$, $n\in\{1,\ldots,N\}$, are related by
\begin{equation}\label{qdet-xin}
   \tau(\xi_n+\eta/2)\,\tau(\xi_n-\eta/2)
   =-\frac{\det_q K_+(\xi_n)\, \det_q \mathcal{U}_-(\xi_n)}{\ths(2\xi_n+\eta)\, \ths(2\xi_n-\eta)}.
\end{equation}
\end{enumerate}

Let us moreover suppose that the gauge parameters $\alpha$ and $\beta$ satisfy the constraints  \eqref{cond-c+=0} and  \eqref{cond-b+=0}, i.e. are given in terms of the $+$ boundary parameters by \eqref{alpha-beta+}, and that the normalisation coefficient \eqref{norm-coeff} is non-zero.
Then, the one-dimensional right and left eigenstates of $\mathcal{T}(\lambda)$ with eigenvalue $\tau(\lambda)$ can respectively be expressed as
\begin{align}
   &\sum_{\mathbf{h}\in\{0,1\}^N}\prod_{n=1}^N Q(\xi_n^{(h_n)})\ V(\xi_1^{(h_1)},\ldots,\xi_N^{(h_N)})\ \ket{\mathbf{h},\alpha,\beta+1},
   \label{eigen-r-Skl}
\end{align}
and
\begin{align}
   &\sum_{\mathbf{h}\in\{0,1\}^N}\prod_{n=1}^N
   \left[ \left(\frac{\ths(2\xi_n-2\eta)}{\ths(2\xi_n+2\eta)}\, 
   \frac{\mathbf{A}(\xi_n+\frac{\eta}{2})}
         {\mathbf{A}(-\xi_n+\frac{\eta}{2})}
   \right)^{\! h_n} Q(\xi_n^{(h_n)})\right]
   V(\xi_1^{(h_1)},\ldots,\xi_N^{(h_N)})\    \bra{\alpha,\beta-1,\mathbf{h}},
\label{eigen-l-Skl}
\end{align}
in which $\mathbf{A}(\lambda)$ 
is defined as
\begin{align} \label{def-A-gauge}
   &\mathbf{A}(\lambda)=\frac{\ths(2\lambda+\eta)}{\ths(2\lambda)}\,\mathsf{a}_+( \lambda |\boldsymbol{\epsilon_{\alpha^+}})\,     \mathsf{A}_-(\lambda)
   = \frac{\ths(2\lambda+\eta)}{\ths(2\lambda)}\,\mathsf{a}_+( \lambda |\boldsymbol{\epsilon_{\alpha^+}})\,\mathsf{g}_-(\lambda)\, a(\lambda)\, d(-\lambda),
\end{align}
and with coefficients $Q(\xi_n^{(h_n)})$ satisfying \eqref{T-Qdis}.
\end{theorem}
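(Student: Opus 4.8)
\medskip\noindent\emph{Proof proposal.} The scheme I would follow has three ingredients. (a) Every eigenvalue of $\mathcal{T}(\lambda)$ satisfies (i)--(iii): this is the usual ``eigenvalues inherit the structure of the operator'' argument, sketched below. (b) $\mathcal{T}(\lambda)$ is diagonalisable with simple spectrum, and the set of $\tau$ satisfying (i)--(iii) has exactly $2^N$ elements: these facts I would take over from \cite{NicT24}. (c) For each $\tau$ obeying (i)--(iii), the states \eqref{eigen-r-Skl} and \eqref{eigen-l-Skl}, with $Q$ given by \eqref{T-Qdis}, are non-zero right/left eigenstates of $\mathcal{T}(\lambda)$ with eigenvalue $\tau$: this is the genuinely new part, to be proved directly in the Vertex--IRF SoV basis. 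Granting (a)--(c), (c) shows every $\tau$ of (i)--(iii) is an eigenvalue, (a) the converse, and simple spectrum forces one-dimensionality of the eigenspaces; the counting in (b) makes the correspondence a bijection.

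For (a): $\tau$ is even because $\mathcal{T}(-\lambda)=\mathcal{T}(\lambda)$, entire because the $R$- and $K$-matrices are, and satisfies \eqref{qper-tau} by inserting the quasi-periodicities \eqref{q-per-U1}--\eqref{q-per-U2} of $\mathcal{U}_-$ and \eqref{quasi-per-K} of $K_+$ under the trace (the Pauli conjugations cancelling); property (iii) is the quantum-determinant fusion identity $\mathcal{T}(\xi_n+\tfrac\eta2)\,\mathcal{T}(\xi_n-\tfrac\eta2)=-\det_q K_+(\xi_n)\det_q\mathcal{U}_-(\xi_n)/\big(\ths(2\xi_n+\eta)\ths(2\xi_n-\eta)\big)$, read off from \eqref{inv-U-}--\eqref{qdet-K}; and property (ii) follows from the degeneration of $\mathcal{T}(\lambda)$ to a \emph{scalar} operator at the four special points \eqref{special_points} --- at $\lambda=\eta/2$ because $K_-(\eta/2)=\mathbb{I}$ and successive crossing-unitarities of $R$ collapse $\mathcal{U}_-(\eta/2)$ to a multiple of the identity (the identity-point property already underlying \eqref{H-transfer}), and at the three half-period shifts because there $K_-$ is proportional to a Pauli matrix and the half-period covariance of $R$ again collapses the trace; rewriting the resulting scalars through $\det_q T$ yields \eqref{tau0}--\eqref{tau-3}.

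For (c), the crucial structural fact is that, once $\alpha,\beta$ are fixed by \eqref{alpha-beta+}, substituting \eqref{act-A-lambda-bra} and \eqref{act-D-lambda-ket} into the two forms of \eqref{expr-transfer} shows that the Vertex--IRF SoV basis separates the variables for $\mathcal{T}$ at the points $\pm\xi_n^{(h_n)}$: there $\mathcal{T}$ acting on $\bra{\alpha,\beta-1,\mathbf{h}}$ (resp. $\ket{\mathbf{h},\alpha,\beta+1}$) produces \emph{only} $\bra{\alpha,\beta-1,\mathsf{T}_n^{\pm}\mathbf{h}}$ (resp. $\ket{\mathsf{T}_n^{\pm}\mathbf{h},\alpha,\beta+1}$), with no diagonal term --- the diagonal coefficient $\mathfrak{A}_{\mathbf{h}}$ carries the factor $a_\mathbf{h}(\lambda)\,a_\mathbf{h}(-\lambda)$ vanishing at $\pm\xi_n^{(h_n)}$ by \eqref{a_h}, while in the off-diagonal sum each term with index $m\neq n$ carries $\ths(\lambda\mp\xi_n^{(h_n)})$ through $\prod_{k\neq m}$, the $m=n$ term contributing at $+\xi_n^{(h_n)}$ (resp. $-\xi_n^{(h_n)}$) only its $\epsilon_n=+$ (resp. $\epsilon_n=-$) part, with coefficient $\mathbf{A}(\xi_n^{(0)})$ of \eqref{def-A-gauge} by \eqref{act-A-xi-bra} and \eqref{coeff-a+}. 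Given $\tau$ obeying (i)--(iii), I would define $Q$ on the $\xi_n^{(h)}$ by \eqref{T-Qdis}; its two expressions for $Q(\xi_n^{(1)})/Q(\xi_n^{(0)})$ agree because, by \eqref{def-A} and $\ths(\eta-2\xi_n)=-\ths(2\xi_n-\eta)$, condition (iii) amounts to $\mathbf{A}(\xi_n+\tfrac\eta2)\mathbf{A}(-\xi_n+\tfrac\eta2)=\tau(\xi_n+\tfrac\eta2)\tau(\xi_n-\tfrac\eta2)$ --- which requires checking that the explicit $\mathbf{A}$ of \eqref{def-A-gauge} does satisfy \eqref{def-A}, a $\theta$-identity from \eqref{rel-g-}, the explicit $\det_q K_\pm$ and \eqref{qdet-T}. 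Setting $|\Psi\rangle$ equal to \eqref{eigen-r-Skl}, the factor $1/V(\xi_1^{(h_1)},\dots,\xi_N^{(h_N)})$ in the orthogonality \eqref{norm-cond} is cancelled by the Vandermonde in \eqref{eigen-r-Skl}, so $\bracket{\alpha,\beta-1,\mathbf{h}|\Psi}\propto\prod_n Q(\xi_n^{(h_n)})$, and the separation of variables above turns the eigenvalue equation at $\pm\xi_n^{(h_n)}$ into exactly \eqref{T-Qdis}. Hence $F_{\mathbf{h}}(\lambda):=\bra{\alpha,\beta-1,\mathbf{h}}\big(\mathcal{T}(\lambda)-\tau(\lambda)\big)|\Psi\rangle$ vanishes at $\pm\xi_n^{(h_n)}$ for all $n$, and also at $\pm\xi_{-j}^{(0)}$, $j=0,1,2,3$, since there $\mathcal{T}$ is scalar and $\tau$ equals that scalar by (ii). But $F_{\mathbf{h}}$ is an even entire function with the quasi-periodicities \eqref{qper-tau}, hence has at most $2N+6$ zeros modulo $(\pi,\pi\omega)$ in a fundamental domain; as the $2N+8$ points $\{\pm\xi_n^{(h_n)}\}_{n=1}^N\cup\{\pm\xi_{-j}^{(0)}\}_{j=0}^3$ are pairwise distinct by \eqref{cond-inhom} (and the assumptions on $\eta$) and none is a half-period, $F_{\mathbf{h}}\equiv0$. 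Since the $\bra{\alpha,\beta-1,\mathbf{h}}$ form a basis of $\mathcal{H}^*$ by \eqref{norm-cond}--\eqref{identity} and $|\Psi\rangle\neq0$ (its $\mathbf{h}=\mathbf{0}$ coefficient $V(\xi_1^{(0)},\dots,\xi_N^{(0)})\neq0$), this gives $\mathcal{T}(\lambda)|\Psi\rangle=\tau(\lambda)|\Psi\rangle$. The left eigenstate \eqref{eigen-l-Skl} is obtained the same way using \eqref{act-A-lambda-bra} in place of \eqref{act-D-lambda-ket}; its extra factor $\big(\tfrac{\ths(2\xi_n-2\eta)}{\ths(2\xi_n+2\eta)}\tfrac{\mathbf{A}(\xi_n+\eta/2)}{\mathbf{A}(-\xi_n+\eta/2)}\big)^{h_n}$ is exactly the conversion between the $\mathcal{A}_-$- and $\mathcal{D}_-$-normalisations built into \eqref{bra-state} versus \eqref{ket-state}, matching the general SoV formulas \eqref{eigen-r}--\eqref{eigen-l}.

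I expect the only real difficulty to be computational: carrying the gauge-dependent prefactors --- the $\det S$ ratios, the coefficients $k_n(\alpha)$, $b_-$, $\mathsf{a}_\pm$, the normalisation $\mathsf{A}_-$ and the quantum-determinant factor $\mathbf{A}$ --- cleanly through the reduction of $\mathcal{T}(\lambda)|\Psi\rangle$ to $\tau(\lambda)|\Psi\rangle$, and establishing the $\theta$-function product identities underlying \eqref{def-A-gauge}$\Rightarrow$\eqref{def-A} and the scalar degeneration of $\mathcal{T}$ at \eqref{special_points}. Conceptually the argument runs parallel to the open XXZ case \cite{NicT22,NicT23} and to the Sklyanin-type SoV of \cite{FalN14}.
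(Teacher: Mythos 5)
Your proposal is correct and follows essentially the same route as the paper: the paper establishes this theorem by invoking Theorem~3.1 and Proposition~3.1 of \cite{NicT24} for the diagonalisability, simple spectrum and the characterisation (i)--(iii), and by observing that the actions \eqref{act-A-lambda-bra}, \eqref{act-D-lambda-ket} together with the representation \eqref{expr-transfer} show that the Vertex--IRF SoV basis separates the variables at $\pm\xi_n^{(h_n)}$, so that the eigenstates take the form \eqref{eigen-r-Skl}--\eqref{eigen-l-Skl} exactly as \eqref{eigen-r}--\eqref{eigen-l}. Your steps (a) and (c) simply flesh out, via the standard quasi-periodicity/zero-counting interpolation argument, what the paper defers to \cite{NicT24}, and the computational checks you flag (the compatibility \eqref{def-A-gauge}$\Rightarrow$\eqref{def-A} and the central values at \eqref{special_points}) are indeed the only remaining bookkeeping.
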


This result nearly coincides with Theorem~3.1 of \cite{NicT24}.
In particular, the first part is just a reformulation of the first part of Theorem~3.1 of \cite{NicT24}, and the conditions \eqref{qper-tau}-\eqref{qdet-xin} directly follow from the transfer matrix properties enunciated in Proposition~3.1 of \cite{NicT24}. The only difference is that we use here the specific basis $\{ \ket{\mathbf{h},\alpha,\beta+1},\, \mathbf{h}\in\{0,1\}^N\}$  of $\mathcal{H}$ and $\{ \bra{\alpha,\beta-1,\mathbf{h}},\, \mathbf{h}\in\{0,1\}^N\}$ of  $\mathcal{H}^*$, for $\alpha,\beta$ fixed by \eqref{cond-c+=0} and  \eqref{cond-b+=0}, to express the eigenstates, instead of the general basis of the form \eqref{new-SoV-states}.
Note however that, up to this particular choice of basis, \eqref{eigen-r-Skl} and \eqref{eigen-l-Skl} have exactly the same form as \eqref{eigen-r} and \eqref{eigen-l}.

The function $\mathbf{A}(\lambda)$  is here related to the function $\mathsf{g}_-$ used in the definition of the SoV states \eqref{ket-state} and \eqref{bra-state} as in \eqref{def-A-gauge}. It is however important to underly that, to each function $\mathbf{A}(\lambda)$ solution of \eqref{def-A}, one can associate through \eqref{def-A-gauge} a function $\mathsf{g}_-(\lambda)$  satisfying \eqref{rel-g-}).

\subsection{Separate states as generalised Bethe vectors}

For some arbitrary function $Q$ on the discrete set of values $\{\xi_n^{(h_n)}, 1\le n\le N\}$, 
let us consider the separate state $\ket{Q}$ defined as
\begin{align}
   &\ket{Q}
   = 
   \sum_{\mathbf{h}\in\{0,1\}^N}\prod_{n=1}^N Q(\xi_n^{(h_n)})\ V(\xi_1^{(h_1)},\ldots,\xi_N^{(h_N)})\ \ket{\mathbf{h},\alpha,\beta+1} .
   \label{ket-Q-eps}
\end{align}
Let us moreover suppose that the function $Q$ can be extended to the complex plane into a function of the following form (that we still denote by $Q$):
\begin{equation}
  \label{Q-form}
   Q(\lambda)=\prod_{n=1}^M\ths(\lambda-\lambda_n)\ths(\lambda+\lambda_n), \qquad (\lambda_1,\ldots,\lambda_M)\in\mathbb{C}^M.
\end{equation}
Then, thanks to the form \eqref{act-Bhat-ket} of the diagonal action of the operator $\widehat{\mathcal{B}}_-(\lambda|\alpha-\beta)$ on the basis vectors \eqref{ket-state}, the state \eqref{ket-Q-eps} can be rewritten in the form of a generalised Bethe vector:
\begin{align}\label{separate-Bethe-r}
   \ket{Q} 
   = \mathsf{c}_{Q,\alpha,\beta}^{(R)} \, \underline{\widehat{\mathcal B}}_{-,M}(\{\lambda_j\}_{j=1}^M|\alpha-\beta+1)\, \ket{\Omega_{\alpha,\beta+1-2M}},
\end{align}
in which we have used the shortcut notation
\begin{align}
  \label{product-Bhat}
  \underline{\widehat{\mathcal{B}}}_{-,M}(\{\lambda _{i}\}_{i=1}^{M}|\alpha -\beta+1)
  &= 
  \widehat{\mathcal{B}}_{-}(\lambda _{1}|\alpha -\beta +1)\cdots 
  \widehat{\mathcal{B}}_{-}(\lambda _{M}|\alpha -\beta +2M-1)
  \nonumber\\
  &= \prod_{j=1\to M} \widehat{\mathcal{B}}_{-}(\lambda_j|\alpha -\beta +2j-1),
\end{align}
and defined the generalised SoV reference state $\ket{\Omega_{\alpha,\beta+1-2M}} $ as
\begin{align}
   &\ket{\Omega_{\alpha,\beta+1-2M}} =\frac{1}{\mathsf{N}(\boldsymbol{\xi}|\alpha,\beta-2M)}\sum_{\mathbf{h}\in\{0,1\}^N} V(\xi_1^{(h_1)},\ldots,\xi_N^{(h_N)})\ \ket{\mathbf{h},\alpha,\beta+1-2M} .
   \label{ref-SoV-r}
\end{align}
The coefficient $\mathsf{c}_{Q,\alpha,\beta}^{(R)}$ is explicitly given as
\begin{align}
   &\mathsf{c}_{Q,\alpha,\beta}^{(R)}=\frac{(-1)^{MN} \,\mathsf{N}(\boldsymbol{\xi}|\alpha,\beta-2M)} 
   {\prod_{j=1}^M \ths(2 \lambda_j-\eta)\, \mathsf{b}_-(\alpha-\beta+N+2j-1) }.
   \label{c-SoV-ABA-r}
\end{align}
%
Note that the form \eqref{separate-Bethe-r}-\eqref{c-SoV-ABA-r} of the separate state $\ket{Q}$ is completely general when expressed in the basis given by the states $\ket{\mathbf{h},\alpha,\beta+1}$, provided that $Q$ is a function of the form \eqref{Q-form}. Such a rewriting is valid for arbitrary $\alpha$ and $\beta$ such that $\mathsf{N}(\boldsymbol{\xi}|\alpha,\beta),\mathsf{N}(\boldsymbol{\xi}|\alpha,\beta-2M)\not= 0$.

It happens that, under some particular relation of $\alpha+\beta$ with the `$-$'  boundary parameters and some related choice of the function $\mathsf{g}_-$ in the definition of the SoV states \eqref{ket-state}, the special separate state $\ket{\Omega_{\alpha,\beta+1-2M}}$ \eqref{ref-SoV-r} can be identified with the usual gauge transformed reference state used notably in \cite{FanHSY96} (see also \cite{FadT79}) to construct  some of the transfer matrix eigenstates via generalised Algebraic Bethe Ansatz. More precisely, if we define the following gauge transformed reference state 
\begin{align}\label{ref-state-gauge}
  \ket{\eta,x} &=
  S_{1\ldots N}(\boldsymbol{\xi} |\alpha,x-\alpha+1-N)\ket{0}
  \nonumber\\
  &
  =\mathop{\otimes}_{n=1}^N \big[ Y_{x+n-N}(-\xi_n) \big]_n
    =\mathop{\otimes}_{n=1}^N \begin{pmatrix} \thd_2(\xi_n+(x+n-N)\eta) \\ \thd_3(\xi_n+(x+n-N)\eta) \end{pmatrix}_{\! n},
\end{align}
we can formulate the following result:

\begin{prop}\label{prop-ref-state}
   Let us suppose that the inhomogeneity parameters are generic, or at least satisfy \eqref{cond-inhom}.
   For a given integer $M$, let us moreover suppose that there exists $\boldsymbol{\epsilon_{\alpha^-}}=(\epsilon_{\alpha_1^-},\epsilon_{\alpha_2^-},\epsilon_{\alpha_3^-})\in\{1,-1\}^3$, with $\epsilon_{\alpha_1^-}\epsilon_{\alpha_2^-}\epsilon_{\alpha_3^-}=1$, such that
   \begin{equation}\label{cond-ref-states}
      (\alpha+\beta+N-2M-1)\eta=\sum_{\ell=1}^3\epsilon_{\alpha_\ell^-}\alpha_\ell^- \mod(2\pi,2\pi\omega),
   \end{equation}
   and let us consider the corresponding SoV basis $\{\ket{\mathbf{h},\alpha,\beta+1-2M}, \mathbf{h}\in\{0,1\}^N\}$, constructed as in \eqref{ket-state} by fixing 
   \begin{equation}\label{fix-g-}
      \mathsf{g}_-(\lambda)=(-1)^N\mathsf{a}_-(\lambda|\boldsymbol{\epsilon_{\alpha^-}}).
   \end{equation}
   Then, the corresponding SoV generalised reference state $\ket{\Omega_{\alpha,\beta-2M+1}}$ \eqref{ref-SoV-r}    coincides with the gauge transformed reference state $\ket{\eta, \alpha+\beta+N-2M-1}$ \eqref{ref-state-gauge}: 
   \begin{equation}\label{equality-ref-state}
      \ket{\Omega_{\alpha,\beta-2M+1}}=\ket{\eta, \alpha+\beta+N-2M-1}.
   \end{equation}
\end{prop}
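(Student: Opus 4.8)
\emph{Plan of proof.}\ Set $\mathfrak{c}=\beta-2M$ and $x=\alpha+\mathfrak{c}+N-1$, so that hypothesis \eqref{cond-ref-states} reads $x\eta=\sum_{\ell=1}^{3}\epsilon_{\alpha_\ell^-}\alpha_\ell^-\ \mod(2\pi,2\pi\omega)$; moreover, by \eqref{ref-state-gauge} and \eqref{bra-state}, the state to be identified is $\ket{\eta,x}=S_{1\ldots N}(\boldsymbol{\xi}|\alpha,\mathfrak{c})\,\ket{0}$, while $\bra{\alpha,\mathfrak{c}-1,\mathbf{1}}=\bra{0}\,S^{-1}_{1\ldots N}(\boldsymbol{\xi}|\alpha,\mathfrak{c})$. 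Under \eqref{cond-inhom} the vectors $\ket{\mathbf{h},\alpha,\mathfrak{c}+1}$, $\mathbf{h}\in\{0,1\}^{N}$, form a basis of $\mathcal{H}$ whose dual basis is, up to a normalisation, $\{\bra{\alpha,\mathfrak{c}-1,\mathbf{h}}\}$ (orthogonality \eqref{norm-cond}); and by \eqref{ref-SoV-r}--\eqref{norm-cond} the component of $\ket{\Omega_{\alpha,\mathfrak{c}+1}}$ on $\bra{\alpha,\mathfrak{c}-1,\mathbf{h}}$ equals $1$ for every $\mathbf{h}$. Hence the Proposition is equivalent to the single statement that $\moy{\alpha,\mathfrak{c}-1,\mathbf{h}\,|\,\eta,x}=1$ for all $\mathbf{h}\in\{0,1\}^{N}$, and this is what I would prove.

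I would argue by downward induction on $|\mathbf{h}|=\sum_{n}h_{n}$. For $\mathbf{h}=\mathbf{1}$ it is immediate: $\moy{\alpha,\mathfrak{c}-1,\mathbf{1}\,|\,\eta,x}=\bra{0}\,S^{-1}_{1\ldots N}(\boldsymbol{\xi}|\alpha,\mathfrak{c})\,S_{1\ldots N}(\boldsymbol{\xi}|\alpha,\mathfrak{c})\,\ket{0}=\moy{0|0}=1$. For the inductive step, choose $n$ with $h_{n}=0$; since the factors of \eqref{bra-state} can be applied in any order, each one lowering a single $h_{m}$ by one unit (cf.\ \eqref{act-A-xi-bra}), one has $\bra{\alpha,\mathfrak{c}-1,\mathbf{h}}=\bra{\alpha,\mathfrak{c}-1,\mathsf{T}_{n}^{+}\mathbf{h}}\,\mathcal{A}_{-}(\tfrac{\eta}{2}-\xi_{n}|\alpha,\mathfrak{c}-1)/\mathsf{A}_{-}(\tfrac{\eta}{2}-\xi_{n})$, so the claim follows from the induction hypothesis $\moy{\alpha,\mathfrak{c}-1,\mathsf{T}_{n}^{+}\mathbf{h}\,|\,\eta,x}=1$ together with the eigenvalue identity $\mathcal{A}_{-}(\tfrac{\eta}{2}-\xi_{n}|\alpha,\mathfrak{c}-1)\,\ket{\eta,x}=\mathsf{A}_{-}(\tfrac{\eta}{2}-\xi_{n})\,\ket{\eta,x}$ for each $n\in\{1,\ldots,N\}$.

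To establish this eigenvalue identity I would use the boundary-bulk decomposition \eqref{Boundary-bulkU}, fixing the inner gauge pair $(\gamma,\delta)$ so that $\gamma+\delta=x$; by \eqref{cond-c-=0} this is precisely hypothesis \eqref{cond-ref-states}, and it makes $K_{-}(\lambda|(\gamma,\delta),(\gamma,\delta))$ lower triangular, with $c_{-}(\lambda|\gamma,\delta)\equiv 0$ and, by \eqref{a-constr}, diagonal entry $a_{-}(\lambda|\gamma,\delta)=\mathsf{a}_{-}(\lambda|\boldsymbol{\epsilon_{\alpha^{-}}})$. Reading off the $(1,1)$ entry of \eqref{Boundary-bulkU}, $\mathcal{A}_{-}(\lambda|\alpha,\mathfrak{c}-1)$ is a combination of the bulk gauge generators $D,B$ at argument $\lambda$ and $A,C$ at argument $-\lambda$; since $\ket{\eta,x}=\bigotimes_{n=1}^{N}[Y_{x+n-N}(-\xi_{n})]_{n}$ is a product state, the triangular action of the bulk gauge Yang-Baxter generators recalled in Appendix~\ref{app-bulk-gauge} makes the $B$-term and the $d_{-}$-term drop — the pertinent $C$-generator annihilates $\ket{\eta,x}$ precisely because, with $\gamma+\delta=x$, its gauge parameters align with the $Y$-indices of $\ket{\eta,x}$ — leaving $\mathcal{A}_{-}(\lambda|\alpha,\mathfrak{c}-1)\ket{\eta,x}$ proportional to $\ket{\eta,x}$, the gauge-parameter shifts of the two surviving scalar eigenvalues cancelling. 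Tracking the $\det S$ prefactors, the proportionality constant comes out as $(-1)^{N}\,\mathsf{a}_{-}(\lambda|\boldsymbol{\epsilon_{\alpha^{-}}})\,a(\lambda)\,d(-\lambda)$, which is exactly $\mathsf{A}_{-}(\lambda)=\mathsf{g}_{-}(\lambda)\,a(\lambda)\,d(-\lambda)$ because of the choice \eqref{fix-g-}; specialising at $\lambda=\tfrac{\eta}{2}-\xi_{n}$ gives the required identity. (The same identity can alternatively be obtained internally to the SoV construction by checking that the diagonal coefficient \eqref{act-A-lambda-bra-0}, together with \eqref{act0-A-bra}, collapses to $\mathsf{A}_{-}(\lambda)$ under \eqref{cond-ref-states} and \eqref{fix-g-}; this rests on the same computation.)

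The only non-routine ingredient is this eigenvalue identity, and within it the bookkeeping of the several gauge parameters entering \eqref{Boundary-bulkU}. Two points require care: first, that with $\gamma+\delta=x$ the relevant bulk $C$-generator does annihilate the product state $\ket{\eta,x}$ — this is the Vertex-IRF reference-state property exploited in \cite{FanHSY96,FadT79}, and it is exactly here that the boundary constraint \eqref{cond-ref-states} is genuinely needed — and second, that the remaining scalar is equal to, and not merely proportional to, $\mathsf{A}_{-}(\lambda)$, which is what pins down the normalisation \eqref{fix-g-} of $\mathsf{g}_{-}$. Everything else — the reduction via \eqref{norm-cond}--\eqref{ref-SoV-r}, the downward induction, and the base case $\mathbf{h}=\mathbf{1}$ — is elementary, and the genericity assumption \eqref{cond-inhom} enters only to guarantee the basis property used in the reduction.
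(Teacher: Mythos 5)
Your proposal is correct and follows essentially the same route as the paper: the key step in both is the eigenvalue identity $\mathcal{A}_-(\lambda|\alpha,\beta-2M-1)\ket{\eta,x}=(-1)^N\mathsf{a}_-(\lambda|\boldsymbol{\epsilon_{\alpha^-}})\,a(\lambda)\,d(-\lambda)\ket{\eta,x}$, obtained from the boundary-bulk decomposition \eqref{Boundary-bulkU} with $\gamma+\delta$ fixed by \eqref{cond-ref-states} together with the reference-state actions \eqref{actC-ref}--\eqref{actD-ref}, after which all SoV components $\moy{\alpha,\beta-2M-1,\mathbf{h}\,|\,\eta,x}$ are shown to equal $1$ and compared with those of $\ket{\Omega_{\alpha,\beta-2M+1}}$. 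Your organisation of the component computation as a downward induction on $|\mathbf{h}|$ is only a cosmetic repackaging of the paper's direct evaluation \eqref{SP-ref-state}.
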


\begin{proof}
It is easy to show, using the boundary-bulk decomposition of the operator $\mathcal{A}_-(\lambda|\alpha,\beta-2M-1)$ issued from \eqref{Boundary-bulkU} for $\gamma$ and $\delta$ such that $\gamma+\delta=\alpha+\beta+N-2M-1$, and the actions \eqref{actC-ref}-\eqref{actD-ref} of the gauge bulk operators on the gauge reference state \eqref{ref-state-gauge}, that, if  $\gamma+\delta=\alpha+\beta+N-2M-1$ satisfies the condition \eqref{cond-c-=0} (i.e. under \eqref{cond-ref-states}),  $\ket{\eta,\alpha+\beta+N-2M-1} $ is an eigenstate of $\mathcal{A}_-(\lambda|\alpha,\beta-2M-1)$:
\begin{multline}\label{act-A-ref}
  \mathcal{A}_-(\lambda|\alpha,\beta-2M-1)\,\ket{\eta,\alpha+\beta+N-2M-1}
  \\
  =(-1)^N \mathsf{a}_-(\lambda|\boldsymbol{\epsilon^-})\,  a(\lambda)\, d(-\lambda) \,\ket{\eta,\alpha+\beta+N-2M-1}.
\end{multline}
This enables us to compute the scalar products of the state $\ket{\eta, \alpha+\beta+N-2M-1}$ with any of the basis co-vectors $\bra{\alpha,\beta-2M-1,\mathbf{h}}$:
\begin{align}\label{SP-ref-state}
  &\moy{\alpha, \beta-2M-1,\mathbf{h} \, | \, \eta, \alpha+\beta+N-2M-1}
  \nonumber\\
  &\quad=\bra{0}\, S^{-1}_{1\ldots N}(\boldsymbol{\xi} |\alpha,\beta-2M)
   \prod_{n=1}^N\left(\frac{\mathcal{A}_-(\frac\eta 2-\xi_n |\alpha,\beta-2M-1)}{\mathsf{A}_-(\frac\eta 2-\xi_n)}\right)^{\! 1-h_n}\, \ket{\eta, \alpha+\beta+N-2M-1}
   \nonumber\\
   &\quad = \prod_{n=1}^N\left( (-1)^N\frac{\mathsf{a}_-(\frac\eta 2-\xi_n |\boldsymbol{\epsilon_{\alpha^-}})}{\mathsf{g}_-(\frac\eta 2-\xi_n)}\right)^{\! 1-h_n}
   =1
   \nonumber\\
   &\quad = \moy{\alpha, \beta-2M-1,\mathbf{h} \, | \, \Omega_{\alpha,\beta-2M+1}},
\end{align}
in which we have also used \eqref{fix-g-}. The identity \eqref{SP-ref-state} holds for any $\mathbf{h}\in\{0,1\}^N$, which therefore proves \eqref{equality-ref-state} under the conditions \eqref{cond-ref-states}-\eqref{fix-g-}.
\end{proof}

\begin{rem}
As  in the XXZ case, see \cite{NicT22}, some similar Bethe-type representation can also be written for left separate states $\bra{Q}$ of the form \eqref{eigen-l-Skl}, still for $Q$ of the form \eqref{Q-form}. We do not specify it here since we shall not use it.
\end{rem}



\subsection{The case with a constraint : spectrum and eigenstates by functional $TQ$-equation}

The SoV characterisation of the spectrum and eigenstates of Theorem~\ref{th-spectr-SoV} is very general, and is valid for arbitrary (not too specific) boundary fields. From now on, as in \cite{NicT24}, and following our previous works concerning the XXZ case \cite{NicT22,NicT23}, we shall particularise our study to the special case in which the boundary parameters are related by one constraint:
\begin{equation}\label{const-TQ}
   \sum_{\sigma=\pm}\sum_{i=1}^3 \epsilon_i^\sigma\alpha_i^\sigma=(N-2M-1)\eta,
\end{equation}
for some integer $M$ and some choice of $\boldsymbol{\varepsilon}\equiv(\epsilon_1^+,\epsilon_2^+,\epsilon_3^+,\epsilon_1^-,\epsilon_2^-,\epsilon_3^-)\in\{+1,-1\}^6$ such that $\prod_{\sigma=\pm}\prod_{i=1}^3\epsilon_i^\sigma=1$. 

We recall that, in that case \cite{YanZ06,NicT24}, similarly as what happens in the XXZ case \cite{Nep02,Nep04,KitMNT18}, part of the spectrum and eigenstates can be described in terms of usual $TQ$-equations, i.e. of usual Bethe equations:

\begin{prop}\label{prop-TQ}\cite{YanZ06,NicT24}
Let us suppose that the hypothesis of Theorem~\ref{th-spectr-SoV} are satisfied, and that the constraint \eqref{const-TQ} holds for some integer $M$ and some choice of $\boldsymbol{\varepsilon}\equiv(\epsilon_1^+,\epsilon_2^+,\epsilon_3^+,\epsilon_1^-,\epsilon_2^-,\epsilon_3^-)\in\{+1,-1\}^6$ such that $\prod_{\sigma=\pm}\prod_{i=1}^3\epsilon_i^\sigma=1$. 
Let $\tau$ be an entire function such that there exists $Q$ of the form \eqref{Q-form} satisfying with  $\tau$  the $TQ$-equation
\begin{equation}\label{hom-TQ}
   \tau(\lambda)\, Q(\lambda) = \mathbf{A}_{\boldsymbol{\varepsilon}}(\lambda)\, Q(\lambda-\eta)+\mathbf{A}_{\boldsymbol{\varepsilon}}(-\lambda)\, Q(\lambda+\eta),
\end{equation}
in which we have defined
\begin{equation}\label{def-A_eps}
   \mathbf{A}_{\boldsymbol{\varepsilon}}(\lambda)
   =(-1)^N\, \frac{\ths(2\lambda+\eta)}{\ths(2\lambda)}\,\mathsc{a}_{\boldsymbol{\varepsilon}}(\lambda)\, a(\lambda)\, d(-\lambda),
%
\quad \text{with}\quad
%
   \mathsc{a}_{\boldsymbol{\varepsilon}}(\lambda)
   =\prod_{\sigma=\pm}\prod_{i=1}^3\frac{\ths(\lambda-\frac\eta 2-\epsilon_i^\sigma\alpha_i^\sigma)}{\ths(-\epsilon_i^\sigma\alpha_i^\sigma)}.
\end{equation}
Then $\tau$ is an eigenvalue of the transfer matrix $\mathcal{T}$.  Moreover,  the associated $Q$-solution of \eqref{hom-TQ} of the form \eqref{Q-form} is unique, and the corresponding right and left eigenvectors of $\mathcal{T}$ are given in terms of this function $Q$ by \eqref{eigen-r-Skl} and \eqref{eigen-l-Skl} with $\alpha$ and $\beta$ given by \eqref{cond-c+=0}-\eqref{cond-b+=0}\footnote{They can of course also be given by \eqref{eigen-r} and \eqref{eigen-l}, still for $\mathbf{A}(\lambda)\equiv\mathbf{A}_{\boldsymbol{\varepsilon}}(\lambda)$.} for $\mathbf{A}(\lambda)\equiv\mathbf{A}_{\boldsymbol{\varepsilon}}(\lambda)$.
\end{prop}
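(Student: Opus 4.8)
The plan is to reduce Proposition~\ref{prop-TQ} to the already-established SoV characterisation of Theorem~\ref{th-spectr-SoV}, by checking that any solution $(\tau,Q)$ of the functional $TQ$-equation~\eqref{hom-TQ}, with $Q$ of the form~\eqref{Q-form}, automatically satisfies the three spectral conditions (i)--(iii) of that theorem, and then identifying the corresponding eigenvectors with the separate states~\eqref{eigen-r-Skl}--\eqref{eigen-l-Skl}.

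First I would analyse the analytic structure of $\mathbf{A}_{\boldsymbol{\varepsilon}}(\lambda)$ defined in~\eqref{def-A_eps}: using the explicit product form of $\mathsc{a}_{\boldsymbol{\varepsilon}}$, of $a,d$ in~\eqref{a-d}, and the constraint~\eqref{const-TQ}, one checks that the combination $\mathbf{A}_{\boldsymbol{\varepsilon}}(\lambda)\,Q(\lambda-\eta)+\mathbf{A}_{\boldsymbol{\varepsilon}}(-\lambda)\,Q(\lambda+\eta)$ is an even entire function of $\lambda$ (the apparent pole at $\ths(2\lambda)=0$ cancels because $Q$ is even), with exactly the quasi-periodicity~\eqref{qper-tau}; dividing by $Q(\lambda)$, which is assumed not to vanish at the relevant special points, one deduces that $\tau$ inherits those properties, giving condition~(i). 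For condition~(iii), I would evaluate~\eqref{hom-TQ} at $\lambda=\xi_n\pm\eta/2$: since $a(\xi_n-\eta/2)=d(\xi_n+\eta/2)=0$, one of the two terms on the right-hand side drops, giving $\tau(\xi_n+\eta/2)=\mathbf{A}_{\boldsymbol{\varepsilon}}(\xi_n+\eta/2)\,Q(\xi_n-\eta/2)/Q(\xi_n+\eta/2)$ and $\tau(\xi_n-\eta/2)=\mathbf{A}_{\boldsymbol{\varepsilon}}(-\xi_n+\eta/2)\,Q(\xi_n+\eta/2)/Q(\xi_n-\eta/2)$; multiplying, the $Q$-ratios cancel and one is left with $\mathbf{A}_{\boldsymbol{\varepsilon}}(\xi_n+\eta/2)\,\mathbf{A}_{\boldsymbol{\varepsilon}}(-\xi_n+\eta/2)$, which by~\eqref{def-A} (note $\mathbf{A}_{\boldsymbol{\varepsilon}}$ is a particular solution of that relation under~\eqref{const-TQ}) equals $\det_q K_+(\xi_n)\det_q\mathcal{U}_-(\xi_n)/(\ths(\eta+2\xi_n)\ths(\eta-2\xi_n))$, up to the sign in~\eqref{qdet-xin}; the sign should come out of the $(-1)^N$ in~\eqref{def-A_eps} together with the relation $d(\lambda)=a(\lambda-\eta)$. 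Condition~(ii) is the one requiring the most care: I would evaluate~\eqref{hom-TQ} at the four special points $\xi_{-j}^{(0)}=\eta/2+(\text{half-period})$ of~\eqref{special_points} and use the quasi-periodicity of $\ths$, of $a,d$ and of $Q$ to reduce each right-hand side to the closed forms~\eqref{tau0}--\eqref{tau-3}; the values $c_\pm^{x,y,z}$ and the exponential prefactors should emerge from the transformation of $\ths_j$ into $\thd_j$-type combinations entering $\det_q T$ at the shifted argument, so this is essentially a bookkeeping of theta-function quasi-periods, as in Proposition~3.1 of~\cite{NicT24}.

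Having verified (i)--(iii), Theorem~\ref{th-spectr-SoV} gives that $\tau$ is an eigenvalue. For the eigenvectors, I would invoke the fact that the SoV separate states~\eqref{eigen-r-Skl}--\eqref{eigen-l-Skl} are eigenstates precisely when their coefficients satisfy the \emph{discrete} $TQ$-relation~\eqref{T-Qdis} with $\mathbf{A}\equiv\mathbf{A}_{\boldsymbol{\varepsilon}}$; but a $Q$ of the form~\eqref{Q-form} solving the \emph{functional} equation~\eqref{hom-TQ} obviously solves its restriction~\eqref{T-Qdis} to the points $\xi_n^{(h_n)}$, so the separate states built from this $Q$ are exactly the right and left eigenstates for $\tau$ — with $\alpha,\beta$ taken as in~\eqref{alpha-beta+} so that the transfer matrix has the form~\eqref{expr-transfer} and $\mathbf{A}(\lambda)$ of~\eqref{def-A-gauge} coincides with $\mathbf{A}_{\boldsymbol{\varepsilon}}(\lambda)$ for the appropriate choice of $\mathsf{g}_-$. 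Uniqueness of the $Q$-solution of the form~\eqref{Q-form}: since the spectrum is simple (Theorem~\ref{th-spectr-SoV}), the eigenvalue $\tau$ determines a one-dimensional eigenspace, hence the coefficients $Q(\xi_n^{(h_n)})$ in the SoV expansion are fixed up to global scale; two functions of the form~\eqref{Q-form} of the same degree $M$ agreeing (up to scale) on the $2N\geq 2M$ points $\{\xi_n^{(h_n)}\}$ must coincide, by the standard argument that an even elliptic-type product of degree $M$ is determined by sufficiently many values — one uses here~\eqref{cond-inhom} to guarantee enough distinct evaluation points.

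The main obstacle I anticipate is condition~(ii): pinning down the precise exponential prefactors $e^{i(2\sum\xi_n-(N+3)\eta-\tfrac{3\pi\omega}{2})}$ and the signs/constants $c^{x,y,z}_\pm$ in~\eqref{tau0}--\eqref{tau-3} from the quasi-periodicity of the building blocks of $\mathbf{A}_{\boldsymbol{\varepsilon}}(\lambda)Q(\lambda\mp\eta)$ at the half-period points requires careful tracking of how $\ths_1\equiv\theta_1(\cdot|\omega)$ transforms and how $\theta_1$ relates to the $\theta_j$ entering $\det_q T$ and $c^a_\pm$; a secondary subtlety is the cancellation of the spurious pole of $\mathbf{A}_{\boldsymbol{\varepsilon}}$ at $2\lambda\equiv0$ in~\eqref{hom-TQ}, which must be checked using the evenness of $Q$ and of $a(\lambda)d(-\lambda)$ to ensure $\tau$ is genuinely entire. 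Both points are, however, direct (if tedious) theta-function computations and were essentially carried out in~\cite{NicT24,YanZ06}, so the proof is really a matter of assembling these pieces and citing them.
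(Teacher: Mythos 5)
The paper itself gives no proof of Proposition~\ref{prop-TQ}: it is recalled from \cite{YanZ06,NicT24}, so there is no in-paper argument to compare with. Your strategy --- verify that any pair $(\tau,Q)$ solving \eqref{hom-TQ} with $Q$ of the form \eqref{Q-form} fulfils conditions (i)--(iii) of Theorem~\ref{th-spectr-SoV}, then note that the functional equation evaluated at the points $\xi_n^{(h_n)}$ reproduces the discrete relation \eqref{T-Qdis} with $\mathbf{A}\equiv\mathbf{A}_{\boldsymbol{\varepsilon}}$, so that \eqref{eigen-r-Skl}--\eqref{eigen-l-Skl} are the corresponding one-dimensional eigenvectors --- is the natural route, consistent with the paper's SoV framework and with \cite{NicT24}, and it is sound up to standard genericity caveats: you divide by $Q(\xi_n^{(0)})$, $Q(\xi_n^{(1)})$ and by $Q$ at the points \eqref{special_points}, which must be non-zero, and strictly speaking it is the ratio $\big[\mathbf{A}_{\boldsymbol{\varepsilon}}(\lambda)Q(\lambda-\eta)+\mathbf{A}_{\boldsymbol{\varepsilon}}(-\lambda)Q(\lambda+\eta)\big]/Q(\lambda)$, not the numerator, that carries the quasi-periodicity \eqref{qper-tau}. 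The evaluations work as you describe: at $\lambda=\xi_n\pm\eta/2$ one term drops because $a(\xi_n-\eta/2)=d(\xi_n+\eta/2)=0$, and at the four points $\xi_{-j}^{(0)}$ the term $\mathbf{A}_{\boldsymbol{\varepsilon}}(-\lambda)$ drops because its prefactor $\ths(-2\lambda+\eta)$ vanishes there, the residual ratios $Q(\xi_{-j}^{(0)}-\eta)/Q(\xi_{-j}^{(0)})$ producing $e^{2iM\eta}$-type factors which, combined with the constraint \eqref{const-TQ}, yield \eqref{tau0}--\eqref{tau-3}.

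The genuine gap is in your uniqueness argument. Simplicity of the spectrum gives $\ket{Q'}=c\,\ket{Q}$, i.e.\ $\prod_n Q'(\xi_n^{(h_n)})=c\prod_n Q(\xi_n^{(h_n)})$ for all $\mathbf h\in\{0,1\}^N$, and hence only the $N$ ratio equalities $Q'(\xi_n^{(1)})/Q'(\xi_n^{(0)})=Q(\xi_n^{(1)})/Q(\xi_n^{(0)})$; this does \emph{not} imply that $Q$ and $Q'$ agree up to a single scale at the $2N$ points $\xi_n^{(h)}$ (a priori $Q'/Q$ may take a different value $r_n$ on each pair $\xi_n^{(0)},\xi_n^{(1)}$), so the interpolation step you invoke never gets off the ground. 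Uniqueness of the $Q$-solution of \eqref{hom-TQ} of the form \eqref{Q-form} should instead be proven directly on the functional equation, e.g.\ by the standard Wronskian argument: if $Q$ and $Q'$ are two such solutions for the same $\tau$, then $W(\lambda)=Q(\lambda)Q'(\lambda-\eta)-Q'(\lambda)Q(\lambda-\eta)$ satisfies $\mathbf{A}_{\boldsymbol{\varepsilon}}(\lambda)W(\lambda)=\mathbf{A}_{\boldsymbol{\varepsilon}}(-\lambda)W(\lambda+\eta)$, and an analysis of its zeros and quasi-periodicity (using \eqref{cond-inhom} and $\mathbb{Z}\eta\cap(\mathbb{Z}\pi+\mathbb{Z}\pi\omega)=\emptyset$) forces $W\equiv 0$; then $Q/Q'$ is an elliptic function invariant under the incommensurate shift $\eta$, hence constant, and the normalisation implicit in the form \eqref{Q-form} fixes the constant to $1$. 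Some such additional input is needed; as written, your uniqueness step fails.
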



Moreover, if the constraint \eqref{const-TQ} is satisfied for some boundary parameters with some choice of $\boldsymbol{\varepsilon}$ and $M$, it is also satisfied, for the same boundary parameters, with the choice $\boldsymbol{\varepsilon'}=-\boldsymbol{\varepsilon}$ and $M'=N-1-M$. It was conjectured in \cite{YanZ06}, following a similar conjecture based on numerics in the XXZ case \cite{NepR03,NepR04}, that one obtains the whole spectrum by combining these two sectors $(M,\boldsymbol{\varepsilon})$ and $(M',\boldsymbol{\varepsilon'})$. If this is true, this means that we can also obtain the whole set of eigenstates from \eqref{eigen-r-Skl} and \eqref{eigen-l-Skl}  with the corresponding functions $Q$ of degrees $2M$ and $2M'$ and for $\mathbf{A}$ given by $\mathbf{A}_{\boldsymbol{\varepsilon}}$ and $\mathbf{A}_{\boldsymbol{\varepsilon'}}$ respectively. It means in particular (which is enough for our present purpose) that the ground state itself can be described in this framework.

We also recall that,  as shown in our previous work \cite{NicT24}, the scalar products of the separate states admit, under  the constraint \eqref{const-TQ}, a Slavnov-type determinant representation (see Theorem~4.1 of \cite{NicT24}, which is also valid for separate states built from the present Vertex-IRF SoV basis).

Finally, as shown in the previous subsection, under the additional conditions \eqref{cond-ref-states}-\eqref{fix-g-}, the corresponding left separate states \eqref{ket-Q-eps}, for any $Q$ of the form \eqref{Q-form}, can be written as generalised Bethe states given by multiple action of the operator $\widehat{\mathcal B}$ on the gauge transformed reference state $\ket{\eta, \alpha+\beta+N-2M-1}$ \eqref{ref-state-gauge}:
\begin{align}\label{separate-Bethe}
   \ket{Q} 
   = \mathsf{c}_{Q,\alpha,\beta}^{(R)} \, \underline{\widehat{\mathcal B}}_{-,M}(\{\lambda_j\}_{j=1}^M|\alpha-\beta+1)\, \ket{\eta, \alpha+\beta+N-2M-1},
\end{align}
with coefficient $\mathsf{c}_{Q,\alpha,\beta}^{(R)}$ given by \eqref{c-SoV-ABA-r}.
Note that the condition \eqref{fix-g-}, together with \eqref{def-A-gauge}, means that we have fixed the signs in \eqref{const-TQ} as
\begin{align}\label{choice-eps+-}
  \boldsymbol{\varepsilon}=(\epsilon_{\alpha_1^+},\epsilon_{\alpha_2^+},\epsilon_{\alpha_3^+},\epsilon_{\alpha_1^-},\epsilon_{\alpha_2^-},\epsilon_{\alpha_3^-}),
\end{align}
i.e. that
\begin{equation}
   \mathsc{a}_{\boldsymbol{\varepsilon}}(\lambda)= \mathsf{a}_+( \lambda |\boldsymbol{\epsilon_{\alpha^+}})\, \mathsf{a}_-( \lambda |\boldsymbol{\epsilon_{\alpha^-}}).
\end{equation}
Note moreover that the condition \eqref{cond-ref-states} is indeed compatible with \eqref{cond-c+=0}, provided the constraint \eqref{const-TQ} is satisfied with signs given by \eqref{choice-eps+-}, which we suppose from now on.

All these ingredients will enable us to compute, similarly as in the XXZ case \cite{NicT22,NicT23}, some of the elementary building blocks for the zero-temperature correlation functions, in cases for which the ground state of the model is effectively among the states described by a $Q$-solution of \eqref{hom-TQ} of the form \eqref{Q-form}.

\section{Action of a basis of local operators on boundary Bethe states}
\label{sec-act}

We from now on suppose that we are in a situation in which the ground states itself can be expressed as a separate state of the form \eqref{ket-Q-eps}-\eqref{Q-form} (with $Q$ solution of the $TQ$-equation \eqref{hom-TQ}), i.e. as a generalised Bethe state of the form \eqref{separate-Bethe}.

Therefore, so as to compute correlation functions at zero temperature, the question is now to characterise the action of local operators  on the generalised Bethe states
\begin{align}\label{boundary-Bethe}
   \underline{\widehat{\mathcal B}}_{-,M}(\{\lambda_j\}_{j=1}^M|\alpha-\beta+1)\, \ket{\eta, \alpha+\beta+N-2M-1}.
\end{align}
The strategy is similar to  the one used in the XXZ diagonal case \cite{KitKMNST07} and in our previous works concerning the XXZ open spin chain with non-diagonal boundary conditions \cite{NicT22,NicT23}:
\begin{enumerate}[label=\alph*)]
  \item we decompose the generalised boundary Bethe states \eqref{boundary-Bethe} into sums of generalised bulk Bethe states;
  \item we use the reconstruction of local operators in terms of the elements of the bulk monodromy matrix so as to compute the action of these local operators on the generalised bulk Bethe states;
  \item we reconstruct the resulting sum of generalised bulk Bethe states in terms of generalised boundary Bethe states, i.e. in terms of Sklyanin's separate states.
\end{enumerate}
However,  the combinatorial complexity of the generalised boundary/bulk Bethe states, which involve the Vertex-IRF transform, makes it quite cumbersome to act on them with the standard basis of local operators. Therefore, as in \cite{NicT22,NicT23}, we identify a particular basis of the space $\End(\otimes _{n=1}^{m}\mathcal{H}_n)$ of local operators on the first $m$ sites of the chain, so that the elements of this basis act in a relatively simple combinatorial way on the generalised Bethe states. This basis is itself appropriately gauged transformed from the usual one.

More precisely, as in \cite{NicT22,NicT23}, we define the following gauge-transformed local elementary operators at a given site $n$ of the chain:
\begin{equation}\label{gauge-op-n}
  E_n^{i,j}(\lambda|(a,b),(\bar a,\bar b))=S_n(-\lambda|\bar a,\bar b)\, E_n^{i,j}\,S_n^{-1}(-\lambda|a,b),
  \qquad i,j\in\{1,2\}.
\end{equation}
Here $E^{i,j}\in\End(\mathbb{C}^2)$ stands for the usual elementary matrix with elements $(E^{i,j})_{k,\ell}=\delta_{i,k}\delta_{j,\ell}$, and $\lambda$, $a$, $b$, $\bar a$, $\bar b$ are parameters such that the Vertex-IRF matrices $S(-\lambda| a, b)$ and $S(-\lambda|\bar a,\bar b)$ are invertible.
Hence, for given such values of these parameters, the local operators \eqref{gauge-op-n} correspond to four different linear combinations of the usual local operators $E_n^{i,j}$ at site $n$\footnote{The latter are written explicitly as
\begin{align}
  &E^{1,1}_n(\lambda|(a,b),(\bar a,\bar b)) =\big[
  \thd_2(\lambda+(\bar a+\bar b)\eta)\thd_3(\lambda+(a-b)\eta)\, E^{1,1}_n
   -\thd_2(\lambda+(\bar a+\bar b)\eta)\thd_2(\lambda+(a-b)\eta)\, E^{1,2}_n \nonumber\\
  &\quad +\thd_3(\lambda+(\bar a+\bar b)\eta)\thd_3(\lambda+(a-b)\eta)\, E^{2,1}_n 
   -\thd_3(\lambda+(\bar a+\bar b)\eta)\thd_2(\lambda+(a-b)\eta)\, E^{2,2}_n
  \big]\frac{-1}{\ths(\lambda+a\eta)\ths(b\eta)},
  \\
  &E^{1,2}_n(\lambda|(a,b),(\bar a,\bar b)) =\big[
  -\thd_2(\lambda+(\bar a+\bar b)\eta)\thd_3(\lambda+(a+b)\eta)\, E^{1,1}_n
   +\thd_2(\lambda+(\bar a+\bar b)\eta)\thd_2(\lambda+(a+b)\eta)\, E^{1,2}_n \nonumber\\
  &\quad -\thd_3(\lambda+(\bar a+\bar b)\eta)\thd_3(\lambda+(a+b)\eta)\, E^{2,1}_n 
   +\thd_3(\lambda+(\bar a+\bar b)\eta)\thd_2(\lambda+(a+b)\eta)\, E^{2,2}_n
  \big]\frac{-1}{\ths(\lambda+a\eta)\ths(b\eta)},
  \\
  &E^{2,1}_n(\lambda|(a,b),(\bar a,\bar b)) =\big[
  \thd_2(\lambda+(\bar a-\bar b)\eta)\thd_3(\lambda+(a-b)\eta)\, E^{1,1}_n
   -\thd_2(\lambda+(\bar a-\bar b)\eta)\thd_2(\lambda+(a-b)\eta)\, E^{1,2}_n \nonumber\\
  &\quad +\thd_3(\lambda+(\bar a-\bar b)\eta)\thd_3(\lambda+(a-b)\eta)\, E^{2,1}_n 
   -\thd_3(\lambda+(\bar a-\bar b)\eta)\thd_2(\lambda+(a-b)\eta)\, E^{2,2}_n
  \big]\frac{-1}{\ths(\lambda+a\eta)\ths(b\eta)},
  \\
  &E^{2,2}_n(\lambda|(a,b),(\bar a,\bar b)) =\big[
  -\thd_2(\lambda+(\bar a-\bar b)\eta)\thd_3(\lambda+(a+b)\eta)\, E^{1,1}_n
   +\thd_2(\lambda+(\bar a-\bar b)\eta)\thd_2(\lambda+(a+b)\eta)\, E^{1,2}_n \nonumber\\
  &\quad -\thd_3(\lambda+(\bar a-\bar b)\eta)\thd_3(\lambda+(a+b)\eta)\, E^{2,1}_n 
   +\thd_3(\lambda+(\bar a-\bar b)\eta)\thd_2(\lambda+(a+b)\eta)\, E^{2,2}_n
  \big]\frac{-1}{\ths(\lambda+a\eta)\ths(b\eta)}.
\end{align}
}.

Still following \cite{NicT22,NicT23}, we define, for given values of gauge parameters $\alpha$ and $\beta$, and given $m$-tuples $\boldsymbol{\epsilon}\equiv (\epsilon _{1},\ldots ,\epsilon _{m})\in\{1,2\}^m$ and $\boldsymbol{\epsilon'}\equiv (\epsilon _{1}^{\prime },\ldots ,\epsilon _{m}^{\prime })\in\{1,2\}^m$, new parameters $a_{n},\bar{a}_{n},b_{n},\bar{b}_{n}$, $1\leq n\leq m$ as
\begin{alignat}{2}
& a_{n}=\alpha +1,\qquad  & & b_{n}=\beta -\sum_{r=1}^{n}(-1)^{\epsilon
_{r}},  \label{Gauge.Basis-1} \displaybreak[0]\\
& \bar{a}_{n}=\alpha -1,\qquad  & & \bar{b}_{n}=\beta
+\sum_{r=n+1}^{m}(-1)^{\epsilon _{r}^{\prime }}-\sum_{r=1}^{m}(-1)^{\epsilon
_{r}}=b_{n}+2\tilde{m}_{n+1},  \label{Gauge.Basis-2}
\end{alignat}
with 
\begin{equation}\label{tilde-m_n}
\tilde{m}_{n}=\sum_{r=n}^{m}(\epsilon _{r}^{\prime }-\epsilon
_{r})=\sum_{r=n}^{m}\frac{(-1)^{\epsilon _{r}^{\prime }}-(-1)^{\epsilon _{r}}}{2}.
\end{equation}
With the notations \eqref{Gauge.Basis-1}-\eqref{Gauge.Basis-2}, we introduce the following tensor product of gauge transformed local operators \eqref{gauge-op-n} along the first $m$ sites of the chain:
\begin{equation}\label{op-E_m}
  \barE_{m}^{\boldsymbol{\epsilon'},\boldsymbol{\epsilon}}(\alpha,\beta)
   \equiv \prod_{n=1}^{m}E_{n}^{\epsilon_n',\epsilon _n}(\xi_n |(a_n,b_n),(\bar{a}_n,\bar{b}_n))
   \in \End(\otimes _{n=1}^{m}\mathcal{H}_n).
\end{equation}
Then, we can show the following analog of Proposition~4.3 of \cite{NicT22}:

\begin{prop}\label{prop-basis-E_m}
For two given parameters $\alpha,\beta$, the set
\begin{equation}
\mathbb{E}_{m}(\alpha ,\beta )=\left\{  \barE_m^{\boldsymbol{\epsilon'},\boldsymbol{\epsilon}}(\alpha,\beta)\ \mid \ 
\boldsymbol{\epsilon},\boldsymbol{\epsilon'}\in \{1,2\}^{m}\right\} ,
\label{Local-Basis}
\end{equation}
of elements of the form \eqref{op-E_m} of $\End(\otimes _{n=1}^{m}\mathcal{H}_n)$, with  $a_{n},\bar{a}_{n},b_{n},\bar{b}_{n}$, $1\leq n\leq m$, fixed in terms $\alpha,\beta$, $\boldsymbol{\epsilon}$ and $\boldsymbol{\epsilon'}$  as in \eqref{Gauge.Basis-1}-\eqref{Gauge.Basis-2}, defines a basis of $\End(\otimes _{n=1}^{m}\mathcal{H}_n)$ except for a finite number of values of $\alpha$ and $\beta \mod (\tfrac{2\pi}\eta,\tfrac{2\pi\omega}\eta)$.
\end{prop}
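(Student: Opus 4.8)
The plan is to prove linear independence of the $4^m$ operators $\barE_m^{\boldsymbol{\epsilon'},\boldsymbol{\epsilon}}(\alpha,\beta)$: since their number equals $\dim\End(\otimes_{n=1}^m\mathcal{H}_n)=4^m$, this is equivalent to the asserted basis property. The starting point is that, by \eqref{gauge-op-n}, \eqref{mat-S} and \eqref{Sinv}, each factor $E_n^{\epsilon_n',\epsilon_n}(\xi_n|(a_n,b_n),(\bar a_n,\bar b_n))$ is a rank-one operator: it equals a nonzero scalar $s_n$ (proportional to $1/\det S(-\xi_n|a_n,b_n)$) times $|v_n^{\epsilon_n'}\rangle\langle w_n^{\epsilon_n}|$, where $|v_n^1\rangle=Y_{\bar a_n+\bar b_n}(-\xi_n)$ and $|v_n^2\rangle=Y_{\bar a_n-\bar b_n}(-\xi_n)$ are the columns of $S(-\xi_n|\bar a_n,\bar b_n)$, while $\langle w_n^1|=\bar Y_{a_n-b_n}(-\xi_n)$ and $\langle w_n^2|=\bar Y_{a_n+b_n}(-\xi_n)$. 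Therefore $\barE_m^{\boldsymbol{\epsilon'},\boldsymbol{\epsilon}}(\alpha,\beta)=c_{\boldsymbol{\epsilon'},\boldsymbol{\epsilon}}\,|V^{\boldsymbol{\epsilon'},\boldsymbol{\epsilon}}\rangle\langle W^{\boldsymbol{\epsilon}}|$, with $|V^{\boldsymbol{\epsilon'},\boldsymbol{\epsilon}}\rangle=\otimes_{n=1}^m|v_n^{\epsilon_n'}\rangle$, $\langle W^{\boldsymbol{\epsilon}}|=\otimes_{n=1}^m\langle w_n^{\epsilon_n}|$ and $c_{\boldsymbol{\epsilon'},\boldsymbol{\epsilon}}=\prod_{n=1}^m s_n$, which is nonzero as long as $\prod_n\det S(-\xi_n|a_n,b_n)\neq0$. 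The structural observation that makes everything work, read off from \eqref{Gauge.Basis-1}, is that $a_n=\alpha+1$ and $b_n$ depends only on $\epsilon_1,\dots,\epsilon_n$; hence $\langle W^{\boldsymbol{\epsilon}}|$ depends only on $\boldsymbol{\epsilon}$, whereas $|V^{\boldsymbol{\epsilon'},\boldsymbol{\epsilon}}\rangle$ depends on both $\boldsymbol{\epsilon}$ and $\boldsymbol{\epsilon'}$.

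Next I would establish two lemmas, each by an induction that builds up one tensor factor at a time. \textbf{(A)} For $(\alpha,\beta)$ outside a finite set modulo $(\tfrac{2\pi}\eta,\tfrac{2\pi\omega}\eta)$, the $2^m$ covectors $\{\langle W^{\boldsymbol{\epsilon}}| : \boldsymbol{\epsilon}\in\{1,2\}^m\}$ form a basis of $(\otimes_{n=1}^m\mathcal{H}_n)^*$. One runs the induction from site $1$ to site $m$: once $\epsilon_1,\dots,\epsilon_{n-1}$ are fixed, so is $b_{n-1}$ (with $b_0:=\beta$), and using $b_n=b_{n-1}-(-1)^{\epsilon_n}$ and $a_n=\alpha+1$ one checks that $\langle w_n^1|$ and $\langle w_n^2|$ are, up to nonzero scalars, the two rows of $S^{-1}(-\xi_n|\alpha,b_{n-1})$; they span $\mathcal{H}_n^*$ as soon as $\det S(-\xi_n|\alpha,b_{n-1})=\theta(-\xi_n-\alpha\eta)\,\theta(b_{n-1}\eta)\neq0$ (by \eqref{detS}), which excludes only finitely many $\alpha$ and $\beta$ modulo the quasi-period lattice, since $b_{n-1}$ is $\beta$ shifted by an integer of absolute value at most $m$. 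The standard tensor-factor induction ($2^{n-1}$ basis covectors on the first $n-1$ sites, times two spanning covectors at site $n$, yield $2^n$ basis covectors on the first $n$ sites) then closes the argument. \textbf{(B)} For each fixed $\boldsymbol{\epsilon}$ and $(\alpha,\beta)$ outside a finite set, the $2^m$ vectors $\{|V^{\boldsymbol{\epsilon'},\boldsymbol{\epsilon}}\rangle : \boldsymbol{\epsilon'}\in\{1,2\}^m\}$ form a basis of $\otimes_{n=1}^m\mathcal{H}_n$. Here one runs the induction from site $m$ down to site $1$, using that, by \eqref{Gauge.Basis-2}, the parameter $\bar b_n$ entering $|v_n^{\epsilon_n'}\rangle=Y_{\bar a_n\pm\bar b_n}(-\xi_n)$ (the sign being dictated by $\epsilon_n'$) depends on $\epsilon_{n+1}',\dots,\epsilon_m'$ and on $\boldsymbol{\epsilon}$, but not on $\epsilon_n'$: thus once $\epsilon_{n+1}',\dots,\epsilon_m'$ are fixed the two site-$n$ vectors are exactly the columns of $S(-\xi_n|\alpha-1,\bar b_n)$, which span $\mathcal{H}_n$ whenever $\det S(-\xi_n|\alpha-1,\bar b_n)\neq0$ (again a generic condition on $\alpha,\beta$), and the induction closes as before.

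To conclude, I would combine the two lemmas. Suppose $\sum_{\boldsymbol{\epsilon},\boldsymbol{\epsilon'}}d_{\boldsymbol{\epsilon},\boldsymbol{\epsilon'}}\,\barE_m^{\boldsymbol{\epsilon'},\boldsymbol{\epsilon}}(\alpha,\beta)=0$. Writing the left-hand side as $\sum_{\boldsymbol{\epsilon}}|Z^{\boldsymbol{\epsilon}}\rangle\langle W^{\boldsymbol{\epsilon}}|$ with $|Z^{\boldsymbol{\epsilon}}\rangle=\sum_{\boldsymbol{\epsilon'}}d_{\boldsymbol{\epsilon},\boldsymbol{\epsilon'}}c_{\boldsymbol{\epsilon'},\boldsymbol{\epsilon}}|V^{\boldsymbol{\epsilon'},\boldsymbol{\epsilon}}\rangle$, and pairing with the basis of $\otimes_{n=1}^m\mathcal{H}_n$ dual to $\{\langle W^{\boldsymbol{\epsilon}}|\}$ (Lemma (A)), one gets $|Z^{\boldsymbol{\epsilon}}\rangle=0$ for every $\boldsymbol{\epsilon}$; then Lemma (B) forces $d_{\boldsymbol{\epsilon},\boldsymbol{\epsilon'}}c_{\boldsymbol{\epsilon'},\boldsymbol{\epsilon}}=0$ for all $\boldsymbol{\epsilon'}$, hence $d_{\boldsymbol{\epsilon},\boldsymbol{\epsilon'}}=0$ since $c_{\boldsymbol{\epsilon'},\boldsymbol{\epsilon}}\neq0$. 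The exceptional locus is the finite (modulo the quasi-period lattice) union of zero sets of the theta functions $\theta(-\xi_n-(\alpha\pm1)\eta)$, $\theta(-\xi_n-\alpha\eta)$ and $\theta((\beta+k)\eta)$ with $|k|\le 2m$, coming from the determinants above and from the non-vanishing of $\prod_n\det S(-\xi_n|a_n,b_n)$. I expect the main obstacle to be exactly the bookkeeping inside Lemmas (A) and (B): because the dynamical shifts make $b_n$ and $\bar b_n$ depend on the multi-indices, the two inductions must be run in opposite directions—left-to-right for the covectors, right-to-left for the vectors—and one has to verify that with these orderings the relevant pair of vectors at each site shares a common base gauge parameter, so that the required $2\times2$ non-degeneracy reduces to a single theta-function condition.
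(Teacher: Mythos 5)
Your proposal is correct, and it supplies in full a proof that the paper itself does not spell out: Proposition~\ref{prop-basis-E_m} is only stated as the analog of Proposition~4.3 of \cite{NicT22}, with the argument left to that reference. Your route is exactly the natural one for this statement. The rank-one factorization $\barE_m^{\boldsymbol{\epsilon'},\boldsymbol{\epsilon}}(\alpha,\beta)=c_{\boldsymbol{\epsilon'},\boldsymbol{\epsilon}}\,|V^{\boldsymbol{\epsilon'},\boldsymbol{\epsilon}}\rangle\langle W^{\boldsymbol{\epsilon}}|$ follows directly from \eqref{gauge-op-n}, \eqref{mat-S}, \eqref{Sinv}, and your two key identifications check out against \eqref{Gauge.Basis-1}--\eqref{Gauge.Basis-2}: since $b_n=b_{n-1}-(-1)^{\epsilon_n}$ and $a_n=\alpha+1$, the two covectors available at site $n$ once $\epsilon_1,\dots,\epsilon_{n-1}$ are fixed are $\bar Y_{\alpha-b_{n-1}}(-\xi_n)$ and $\bar Y_{\alpha+b_{n-1}}(-\xi_n)$, i.e.\ the rows of $S^{-1}(-\xi_n|\alpha,b_{n-1})$, and their independence reduces via \eqref{detS} to $\theta(-\xi_n-\alpha\eta)\,\theta(b_{n-1}\eta)\neq 0$; dually, since $\bar b_n$ does not involve $\epsilon_n'$, the two vectors at site $n$ for fixed $\epsilon_{n+1}',\dots,\epsilon_m'$ and $\boldsymbol{\epsilon}$ are the columns of $S(-\xi_n|\alpha-1,\bar b_n)$. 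Running the covector induction left-to-right and the vector induction right-to-left is precisely the bookkeeping needed for the dynamical shifts, and the concluding step (extract $|Z^{\boldsymbol{\epsilon}}\rangle$ with the basis dual to $\{\langle W^{\boldsymbol{\epsilon}}|\}$, then use independence of the $|V^{\boldsymbol{\epsilon'},\boldsymbol{\epsilon}}\rangle$ at fixed $\boldsymbol{\epsilon}$ together with $c_{\boldsymbol{\epsilon'},\boldsymbol{\epsilon}}\neq 0$, and count $4^m=\dim\End(\otimes_{n=1}^m\mathcal{H}_n)$) is sound. The exceptional locus you list (zeros of $\theta(-\xi_n-(\alpha+c)\eta)$, $c\in\{-1,0,1\}$, and of $\theta((\beta+k)\eta)$, $|k|\le 2m$) is indeed finite modulo $(\tfrac{2\pi}{\eta},\tfrac{2\pi\omega}{\eta})$, as required. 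The only point to tighten in a write-up is the phrase ``$2^{n-1}$ basis covectors on the first $n-1$ sites, times two spanning covectors at site $n$'': because the pair at site $n$ depends on the branch $(\epsilon_1,\dots,\epsilon_{n-1})$, the induction should be phrased as extracting, branch by branch, the coefficient covectors on sites $n,\dots,m$ by pairing with a dual pair at site $n-1$ (or at site $1$, peeling from the left); you already flag this tree structure yourself, and with that phrasing the argument is complete.
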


In other words, it means that any operator acting only on the first $m$ sites of the chain can be expressed as a linear combination of operators of the form \eqref{op-E_m}. In the following, we compute the action of these operators on the generalised bulk and boundary Bethe states, and we shall in section~\ref{sec-corr} compute the matrix elements of these operators in the ground state.

\subsection{Boundary-bulk decomposition of boundary Bethe states}
\label{sec-Bb-Bethe}

The SoV separate states \eqref{ket-Q-eps}-\eqref{Q-form} can be alternatively expressed as generalised boundary Bethe states. Following the strategy mentioned above, we decompose these states as a sum over bulk Bethe states. This decomposition follows from the boundary-bulk decomposition  \eqref{boundary-bulk-B2} of the boundary operator $\widehat{\mathcal B}_-(\lambda|\alpha-\beta)$. 

Let us introduce the following shortcut notation for a product of gauge $B$-bulk operators:
\begin{align}
  \underline{B}_M(\{\mu_i\}_{i=1}^M|x-1,y+1) 
  &= B(\mu_1|x-1,y+1)\, B(\mu_2|x-2,y+2)\ldots B(\mu_M|x-M,y+M) \nonumber\\
  &=\prod_{j=1\to M} B(\mu_j|x-j,y+j) .
\end{align}

\begin{proposition}
\label{prop-new-BB}
Let $\{\lambda_1,\ldots,\lambda_M\}$ be an arbitrary set of spectral parameters  and  $\alpha,\beta$ be arbitrary gauge parameters.
Then, for $\gamma$ and $\delta$ satisfying \eqref{cond-c-=0} and \eqref{cond-b-=0} in terms of the boundary parameters $\alpha^-_\ell$, one can express  the gauged boundary Bethe state given by the action of \eqref{product-Bhat} on the gauge reference state $\ket{\eta,\gamma+\delta}$ \eqref{ref-state-gauge} as the following sum of generalised bulk Bethe states:
\begin{multline}\label{Bb-state}
   \underline{\widehat{\mathcal{B}}}_{-,M}(\{\lambda _{i}\}_{i=1}^{M}|\alpha -\beta+1)\,
   \ket{\eta, \gamma +\delta }
   = 
   \sum_{\boldsymbol{\sigma}\in\{-1,+1\}^M} 
   H_{\boldsymbol{\sigma}}(\{\lambda _{i}\}_{i=1}^{M}|\alpha-\beta,\gamma,\delta) 
    \\
 \times  \underline{B}_M(\{\lambda_i^{( \sigma ) }\}_{i=1}^M|\gamma -\delta -1,\alpha-\beta )\,
 \ket{\eta,\gamma +\delta +M} .
\end{multline}
The coefficient of this decomposition is
\begin{align}\label{H_sigma}
   H_{\boldsymbol{\sigma}}(\{\lambda _{i}\}_{i=1}^{M}|\alpha-\beta,\gamma,\delta)
   &=\prod_{n=1}^M
   \frac{ \bar Y_{\gamma +\delta+n-1-N}\big(\lambda _n^{(\sigma)}-\tfrac\eta 2\big) \, Y_{\alpha-\beta+2M-n}\big(\lambda _n^{(\sigma)}-\tfrac\eta 2\big)}
   {\ths(\eta (\delta +n)) \, \ths\!\big(\lambda_n^{( \sigma) }+\frac\eta 2+\eta \gamma\big)}\
      H^-_{\boldsymbol{\sigma}}(\{\lambda _{i}\}_{i=1}^{M}|{\boldsymbol{\epsilon_{\alpha^-}}})
      \nonumber\\
    &=\prod_{n=1}^M\frac{\ths\!\big(\tfrac{\alpha-\beta-\gamma -\delta+1+N+2(M-n)}2 \eta\big)\ths\!\big(\lambda _n^{(\sigma)}-\tfrac{\alpha-\beta+\gamma +\delta+2M-N}2 \eta\big)} {\ths(\eta (\delta +n)) \, \ths\!\big(\lambda_n^{( \sigma) }+\frac\eta 2+\eta \gamma\big)}\,  
    \nonumber\\
    &\hspace{6.7cm}\times
      H^-_{\boldsymbol{\sigma}}(\{\lambda _{i}\}_{i=1}^{M}|{\boldsymbol{\epsilon_{\alpha^-}}}),
 \end{align}
with
\begin{multline}
   H^-_{\boldsymbol{\sigma}}(\{\lambda _{i}\}_{i=1}^{M}|{\boldsymbol{\epsilon_{\alpha^-}}})
   =\prod_{n=1}^{M} \left[  -\sigma_n(-1)^N a(-\lambda_n^{(\sigma)})\,\frac{\ths(2\lambda_n-\eta)}{\ths(2\lambda_n)}\, \mathsf{a}_-(-\lambda _{n}^{( \sigma) }|{\boldsymbol{\epsilon_{\alpha^-}}})\right]
   \\
   \times
   \prod_{1\leq a<b\leq M}\frac{\ths (\lambda_{a}^{( \sigma) }+\lambda _{b}^{( \sigma ) }+\eta )}{\ths (\lambda _{a}^{( \sigma ) }+\lambda _{b}^{( \sigma) })}  .
     \label{H_sigma-bis}
\end{multline}
In all these expressions,  we have used the shortcut notation $\lambda_n^{(\sigma)}\equiv \sigma_n\lambda_n$ for $n\in\{1,\ldots,M\}$.
\end{proposition}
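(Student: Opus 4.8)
The plan is to prove \eqref{Bb-state} by induction on $M$, the main engine being the boundary-bulk decomposition \eqref{boundary-bulk-B2} of a single operator $\widehat{\mathcal B}_-$, combined with the commutation relations of the bulk gauge Yang-Baxter generators recalled in Appendix~\ref{app-bulk-gauge} (in particular \eqref{Comm-DB}) and with the action of these generators on the gauge reference state \eqref{ref-state-gauge}. For $M=1$ there is nothing to commute: inserting \eqref{boundary-bulk-B2} with $\alpha-\beta$ replaced by $\alpha-\beta+1$ (which is legitimate since $\widehat{\mathcal B}_-(\lambda|\alpha-\beta)$ depends only on the difference of the gauge parameters) turns $\widehat{\mathcal B}_-(\lambda_1|\alpha-\beta+1)\,\ket{\eta,\gamma+\delta}$ into $\sum_{\sigma_1=\pm1}$ of the bulk term $B(\sigma_1\lambda_1|\gamma-\delta-1,\alpha-\beta)\,D(-\sigma_1\lambda_1|\gamma+\delta,\alpha-\beta+1)\,\ket{\eta,\gamma+\delta}$ weighted by the scalar prefactor of \eqref{boundary-bulk-B2}; the rightmost $D$ then acts diagonally on \eqref{ref-state-gauge}, producing a scalar proportional to $a(-\lambda_1^{(\sigma)})$, a product of theta functions that is rewritten through \eqref{YbarY} as a $\bar Y\,Y$ factor, and a shift of the dynamical parameter to $\gamma+\delta+1$. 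Collecting the prefactors reproduces \eqref{Bb-state}--\eqref{H_sigma-bis} at $M=1$.

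For the inductive step, write $\underline{\widehat{\mathcal B}}_{-,M}=\widehat{\mathcal B}_-(\lambda_1|\alpha-\beta+1)\,\widehat{\mathcal B}_-(\lambda_2|\alpha-\beta+3)\cdots\widehat{\mathcal B}_-(\lambda_M|\alpha-\beta+2M-1)$, apply the induction hypothesis to the trailing product of $M-1$ factors (which is again of the form \eqref{product-Bhat} after the shift $\alpha-\beta\to\alpha-\beta+2$, with the same $\gamma,\delta$), and then decompose the leftmost factor via \eqref{boundary-bulk-B2}. This yields a sum over $\sigma_1=\pm1$ and $\boldsymbol\sigma'\in\{-1,1\}^{M-1}$ of strings of the shape $B(\sigma_1\lambda_1|\ldots)\,D(-\sigma_1\lambda_1|\ldots)\,\underline{B}_{M-1}(\{\lambda_i^{(\sigma')}\}_{i=2}^M|\ldots)\,\ket{\eta,\gamma+\delta+M-1}$. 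The core computation is to move the single $D(-\sigma_1\lambda_1|\ldots)$ to the right through the $M-1$ bulk $B$-operators and then onto the reference state, raising the dynamical parameter to $\gamma+\delta+M$. Each ``diagonal'' passage of this $D$ across a $B$ carrying $\lambda_k^{(\sigma')}$ contributes exactly one factor $\ths(\lambda_1^{(\sigma)}+\lambda_k^{(\sigma')}+\eta)/\ths(\lambda_1^{(\sigma)}+\lambda_k^{(\sigma')})$ and a unit decrease of both dynamical arguments of that $B$; accumulated over all passages (here and recursively) these rebuild the pairwise product $\prod_{1\le a<b\le M}\ths(\lambda_a^{(\sigma)}+\lambda_b^{(\sigma)}+\eta)/\ths(\lambda_a^{(\sigma)}+\lambda_b^{(\sigma)})$ of \eqref{H_sigma-bis}, while the cumulative shifts bring the $B$-operators to the arguments $B(\lambda_j^{(\sigma)}|\gamma-\delta-j,\alpha-\beta+j-1)$ of $\underline{B}_M(\{\lambda_i^{(\sigma)}\}|\gamma-\delta-1,\alpha-\beta)$ in \eqref{Bb-state}. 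The surviving scalars then assemble into \eqref{H_sigma}: the $(-1)^N$, the $\ths(2\lambda_n-\eta)/\ths(2\lambda_n)$ and the $\mathsf{a}_-(-\lambda_n^{(\sigma)}|\boldsymbol{\epsilon_{\alpha^-}})$ come from \eqref{boundary-bulk-B2} (with $\mathsf{a}_-$ identified through \eqref{a-constr}); the $a(-\lambda_n^{(\sigma)})$ from the successive actions of the $D$'s on the reference state; the denominators $\ths(\eta(\delta+n))$ from the $1/\ths(\eta(\delta+1))$ of \eqref{boundary-bulk-B2} tracked through the dynamical moves, and the $\ths(\lambda_n^{(\sigma)}+\tfrac\eta2+\eta\gamma)$ directly from its summand; and the $\bar Y\,Y$ factors arise from the action of the bulk generators on the $Y$-vectors of \eqref{ref-state-gauge} and are re-expressed via \eqref{YbarY} as the second line of \eqref{H_sigma}.

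The main obstacle is to control the ``off-diagonal'' terms produced each time the moving $D$ is commuted past a $B$: such a term exchanges the spectral parameters of the two generators, so a priori it spoils the clean per-$\boldsymbol\sigma$ matching described above. One must show that, after the sums over $\sigma_1$ and $\boldsymbol\sigma'$ are performed, these contributions reorganise so that only the diagonal collection survives, reproducing the sum over $\boldsymbol\sigma\in\{-1,1\}^M$ with coefficient \eqref{H_sigma}. This is achieved by also invoking the parity relations for the bulk gauge generators $B(\lambda|x,y)$ and $D(\lambda|x,y)$ (the bulk analogues of \eqref{parity-Bbeta}), which relate $B(-\lambda|x,y)$ and $D(-\lambda|x,y)$ to $B(\lambda|x',y')$ and $D(\lambda|x',y')$ up to explicit scalars, so that each exchanged term can be matched against another term of the $\boldsymbol\sigma$-sum. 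The bookkeeping of the many theta-function prefactors and dynamical shifts is lengthy but strictly parallel to the trigonometric computation of \cite{NicT23} (and, for diagonal boundaries, of \cite{KitKMNST07}); in practice I would first establish the operator identity in the sector $\sigma_1=+1$ and then use parity to absorb the sector $\sigma_1=-1$, closing the induction with the $M=1$ verification of the overall normalisation.
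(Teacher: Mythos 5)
Your overall skeleton --- induction on $M$, the $M=1$ case from \eqref{boundary-bulk-B2} and \eqref{actD-ref}, peeling off one $\widehat{\mathcal B}_-$, applying the induction hypothesis to the remaining product, and pushing the single $D$ through the bulk $B$-string via \eqref{Comm-DB}/\eqref{actD-Bethe} --- is essentially the paper's. The genuine gap is your treatment of the indirect (exchanged) terms. First, the tool you invoke does not exist: there are no parity relations relating $B(-\lambda|x,y)$, $D(-\lambda|x,y)$ to $B(\lambda|x',y')$, $D(\lambda|x',y')$ for the \emph{bulk} gauged generators; the relations \eqref{parity-Bbeta}--\eqref{parity-Dbeta} are a feature of the boundary (reflection-algebra) operators, stemming from the inversion relation \eqref{inv-U-}, and have no bulk analogue. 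Second, the goal you set for these terms is wrong: an exchanged term produces a bulk Bethe vector whose $B$-string contains \emph{both} $\lambda_1$ and $-\lambda_1$ (with one $\lambda_a^{(\sigma)}$ removed), cf.\ \eqref{vector-ind}; such vectors cannot occur on the right-hand side of \eqref{Bb-state}, so they cannot be ``matched against another term of the $\boldsymbol{\sigma}$-sum'' --- their total coefficient has to vanish.

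The missing argument is precisely this cancellation: for each $a$, one sums the indirect contributions over the two signs $\sigma_a$ and $\sigma_1$ (in your labelling) and shows that the resulting coefficient is identically zero. In the paper this is done by checking that, as a function of $\lambda_1$, this coefficient is symmetric under $\lambda_1\leftrightarrow\lambda_a$, entire (the residues at $\lambda_1=\pm\lambda_a$ cancel between the two sign terms), and elliptic --- the double periodicity uses exactly the values \eqref{gamma-delta-} of $\gamma,\delta$ forced by \eqref{cond-c-=0}--\eqref{cond-b-=0}, i.e.\ $\eta\delta=-\sum_{\ell\neq i_-}\epsilon_{\alpha_\ell^-}\alpha_\ell^-$, together with the factorized form \eqref{a-constr} of $\mathsf a_-$ --- hence constant, and it vanishes at $\lambda_1=0$, hence is zero. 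Note that in your write-up the constraints on $\gamma,\delta$ are only used to make $K_-$ diagonal in \eqref{boundary-bulk-B2}; they are in fact also indispensable at this cancellation step, which is where the nontrivial content of the proposition sits. Your bookkeeping of the direct terms (the $\prod_{a<b}$ factor, the gauge shifts of the $B$'s, the assembly of \eqref{H_sigma}--\eqref{H_sigma-bis}) is otherwise consistent with the paper's, and the final remark about ``establishing the identity in the sector $\sigma_1=+1$ and absorbing $\sigma_1=-1$ by parity'' should be dropped: the sum over both signs is produced automatically by \eqref{boundary-bulk-B2}, and no sector-by-sector operator identity holds.
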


\begin{proof}
This proposition is the elliptic analog of Proposition~3.2 of \cite{NicT23} (see also the particular case considered in \cite{NicT22}) and can be proved along the same lines.
As in \cite{NicT22, NicT23}, we proceed by induction on $M$. 

For $M=1$, \eqref{Bb-state} follows straightforwardly from \eqref{boundary-bulk-B2} and \eqref{actD-ref}.
Supposing that it also holds for a given $M\ge 1$, we can write
\begin{multline}\label{inductionBb-M}
    \widehat{\mathcal{B}}_{-}(\lambda _{M+1}|\alpha -\beta +1)\,
    \underline{\widehat{\mathcal{B}}}_{-,M}(\{\lambda _{i}\}_{i=1}^{M}|\alpha -\beta+3)\,  \ket{\eta, \gamma +\delta }
    \\
    =\frac{(-1)^N}{\ths(\eta(\delta+1))}\frac{\ths(2\lambda_{M+1}-\eta)}{\ths(2\lambda_{M+1})}
       \sum_{\boldsymbol{\sigma}\in\{\pm 1\}^M}
   H_{\boldsymbol{\sigma}}(\{\lambda _{i}\}_{i=1}^{M}|\alpha-\beta+2,\gamma,\delta) 
 \\
 \times 
 \sum_{\sigma_{M+1}=\pm} \sigma_{M+1}\, \frac{\mathsf{a}_-(-\lambda_{M+1}^{(\sigma)}|{\boldsymbol{\epsilon_{\alpha^-}}})}{\ths(\lambda_{M+1}^{(\sigma)}+\frac\eta 2+\eta \gamma)}\, B(\lambda_{M+1}^{(\sigma)} |\gamma-\delta -1,\alpha -\beta) \, D(-\lambda_{M+1}^{(\sigma)} |\gamma +\delta ,\alpha -\beta+1 ) \\
 \times \underline{B}_M(\{\lambda_i^{( \sigma ) }\}_{i=1}^M|\gamma -\delta -1,\alpha-\beta+2 )\,
\ket{\eta,\gamma +\delta +M}.
\end{multline}
The action, given by \eqref{actD-ref}, of $D(-\lambda _{M+1}^{(\sigma) }|\gamma +\delta ,\alpha -\beta -1)$ on $\underline{B}_M(\{\lambda_i^{( \sigma ) }\}_{i=1}^M|\gamma -\delta -1,\alpha-\beta+2 )\,
\ket{\eta,\gamma +\delta +M}$ in \eqref{inductionBb-M} produces a direct term and a sum of indirect terms. 
We shall now show that the contribution of the sum of  indirect terms vanishes.

For a given $a\in\{1,\ldots,M\}$, let us consider the terms which result into the following vector:
\begin{equation}\label{vector-ind}
  \underline{B}_{M+1}( (\{\lambda_i^{( \sigma ) }\}_{i=1}^M\setminus\{\lambda_a^{(\sigma)}\})\cup\{\lambda_{M+1},-\lambda_{M+1}\}|\gamma -\delta -1,\alpha-\beta )\,
\ket{\eta,\gamma +\delta +M+1}.
\end{equation}
The  indirect action of $D(-\lambda _{M+1}^{(\sigma) }|\gamma +\delta ,\alpha -\beta -1)$ in \eqref{inductionBb-M} produces a term proportional to this vector:
\begin{multline}
  D(-\lambda _{M+1}^{(\sigma) }|\gamma +\delta ,\alpha -\beta -1)\, \underline{B}_M(\lambda _{M}^{( \sigma ) }|\gamma -\delta -1,\alpha-\beta+2 ) \,
  \ket{\eta,\gamma +\delta +M}\restrict{\substack{\text{a-th term of} \\ \text{indirect action}}}
  \\
=
  -a(\lambda _a^{(\sigma)})\,
  \ths\!\big(-\lambda _a^{(\sigma)}-\tfrac{\alpha-\beta+\gamma +\delta+2M+2-N}2 \eta\big)\ths\!\big(\tfrac{\alpha-\beta-\gamma -\delta+1+N}2 \eta\big)
  \\
  \qquad\quad \times
  \frac{\ths(-\lambda _{M+1}^{(\sigma) }- \lambda _a^{(\sigma)}+\eta (\delta +1))}{\ths(\eta (\delta +M+1))\ths(\lambda^{( \sigma) }_{M+1}+\lambda^{( \sigma) }_a) }
   \frac{\prod_{j=1}^M\ths(\lambda^{( \sigma) }_a-\lambda^{( \sigma) }_j-\eta)}{\prod_{\substack{j=1 \\ j\neq a}}^M\ths(\lambda^{( \sigma) }_a-\lambda^{( \sigma) }_j)} \
  \\
  \times
  \underline{B}_M((\{\lambda_i^{( \sigma ) }\}_{i=1}^M\setminus\{\lambda_a^{(\sigma)}\})\cup\{-\lambda _{M+1}^{(\sigma) }\}|\gamma-\delta -2,\alpha -\beta +1)  \,  \ket{\eta,\gamma+\delta+M+1},
\end{multline}
Hence, summing over $\sigma_a$ and $\sigma_{M+1}$, we end up with a contribution proportional to (factors in $\lambda _a^{( \sigma) }$ and $\lambda _{M+1}^{( \sigma) }$):
\begin{multline}
  \sum_{\sigma _{a}=\pm 1,\sigma _{M+1}=\pm 1} \sigma_{M+1}\,\sigma_a\,
  a(-\lambda _a^{( \sigma) })\, a(\lambda _a^{(\sigma)})\,
  \frac{\mathsf{a}_-(-\lambda_{M+1}^{(\sigma)}|{\boldsymbol{\epsilon_{\alpha^-}}})}{\ths(\lambda_{M+1}^{(\sigma)}+\frac\eta 2+\eta \gamma)}\,
  \frac{\mathsf{a}_-(-\lambda_{a}^{(\sigma)}|{\boldsymbol{\epsilon_{\alpha^-}}})}{\ths(\lambda_{a}^{(\sigma)}+\frac\eta 2+\eta \gamma)}
  \\
  \times
  \ths\!\big(\lambda _a^{(\sigma)}-\tfrac{\alpha-\beta+\gamma +\delta+2M+2-N}2 \eta\big)\,
  \ths\!\big(-\lambda _a^{(\sigma)}-\tfrac{\alpha-\beta+\gamma +\delta+2M+2-N}2 \eta\big)
  \\
  \times  
     \frac{\ths(\lambda _{M+1}^{(\sigma) }+\lambda _a^{(\sigma)}-\eta (\delta +1))}{\ths(\lambda^{( \sigma) }_{M+1}+\lambda^{( \sigma) }_a) }
   \prod_{j\not= a}\frac{\ths (\lambda_{a}^{( \sigma) }+\lambda _j^{( \sigma ) }+\eta )}{\ths (\lambda _{a}^{( \sigma ) }+\lambda _j^{( \sigma) })}
   \frac{\ths(\lambda^{( \sigma) }_a-\lambda^{( \sigma) }_j-\eta)}{\ths(\lambda^{( \sigma) }_a-\lambda^{( \sigma) }_j)},
\end{multline}
so that the coefficient in front of the vector \eqref{vector-ind} contains the following factor:
\begin{multline*}
  \sum_{\sigma _{a}=\pm 1,\sigma _{M+1}=\pm 1}\hspace{-4mm} \sigma_{M+1}\,\sigma_a\,
   \frac{\mathsf{a}_-(-\lambda_{M+1}^{(\sigma)}|{\boldsymbol{\epsilon_{\alpha^-}}})}{\ths(\lambda_{M+1}^{(\sigma)}+\frac\eta 2+\eta \gamma)}\,
  \frac{\mathsf{a}_-(-\lambda_{a}^{(\sigma)}|{\boldsymbol{\epsilon_{\alpha^-}}})}{\ths(\lambda_{a}^{(\sigma)}+\frac\eta 2+\eta \gamma)}\,
  \frac{\ths (\lambda _{M+1}^{( \sigma) }+\lambda _{a}^{(\sigma) }-\eta (\delta +1))}{\ths (\lambda _{M+1}^{( \sigma) }+\lambda _{a}^{( \sigma ) })} 
  \\
  \hspace{-1mm}\propto\hspace{-2mm}
   \sum_{\sigma _{a}=\pm 1,\sigma _{M+1}=\pm 1}\hspace{-4mm}
   \sigma_{M+1}\,\sigma_a\, \frac{\ths (\lambda _{M+1}^{( \sigma) }+\lambda _{a}^{(\sigma) }-\eta (\delta +1))}{\ths (\lambda _{M+1}^{( \sigma) }+\lambda _{a}^{( \sigma ) })}
   \prod_{\ell\not=i_-}\ths(\lambda _{M+1}^{( \sigma) }+\tfrac\eta 2+\epsilon_{\alpha^-_\ell}\alpha^-_\ell)
   \ths(\lambda _a^{( \sigma) }+\tfrac\eta 2+\epsilon_{\alpha^-_\ell} \alpha^-_\ell),
\end{multline*}
in which we have used \eqref{signs-}-\eqref{gamma-delta-}.
This expression is symmetric in $\lambda _{M+1}$ and $\lambda _a$. It is easy to see that it is an entire function of  $\lambda _{M+1}$ (the residues of the apparent poles at   $\lambda _{M+1}=\sigma\lambda _a$, $\sigma=\pm$, vanish). 
Moreover, since $\eta\delta=-\sum_{\ell\not=i_-}\epsilon_{\alpha_\ell^-}\alpha_\ell^-$, it is also an elliptic function of $\lambda _{M+1}$, 
which is therefore a constant. We moreover notice that it vanishes for $\lambda _{M+1}=0$, so that this constant is identically zero.

It means that the only non vanishing contribution comes from the direct action of  $D(-\lambda _{M+1}^{(\sigma) }|\gamma +\delta ,\alpha -\beta -1)$ in \eqref{inductionBb-M}, which leads to the result \eqref{Bb-state}-\eqref{H_sigma} for $M+1$.
\end{proof}

\subsection{Action on bulk Bethe states}

Let us now compute the action of the local operators \eqref{op-E_m}, i.e. of the elements of the basis \eqref{Local-Basis}, on the bulk Bethe states. As mentioned above, we follow the strategy proposed in \cite{KitKMNST07,NicT22}, using the solution of the bulk inverse problem \eqref{reconstr-1}-\eqref{reconstr-2} for the operators \eqref{op-E_m}, i.e. their expression in terms of the elements of the bulk monodromy matrix \eqref{def-Mgauge}.

It follows from \eqref{reconstr-1}-\eqref{reconstr-2} (see \cite{NicT22} for details, the arguments are the same as in the XXZ case) that the operators \eqref{op-E_m} can be expressed as
\begin{multline}\label{reconst-basis}
  \barE_{m}^{\boldsymbol{\epsilon'},\boldsymbol{\epsilon}}(\alpha,\beta)
    = \prod_{n=1}^m\frac{\det S(-\xi_n|c_n,d_n)}{\det S(-\xi_n| a_n, b_n)}
    \prod_{n=1\to m} \hspace{-2mm}
    M_{\epsilon_n,\epsilon'_n}(\xi_n-\eta /2|(c_n,d_n),(\bar{a}_n,\bar{b}_n)) 
    \\
    \times
    \prod_{n=m\to 1} 
    \frac{M_{3-\epsilon_n,3-\epsilon'_n}(\xi_n+\eta /2|(c_n-1,d_n),(a_n-1,b_n))}{\det_q M(\xi_n)}.
\end{multline}
Such an expression is a priori valid for any choice of the gauge parameters $c_n$ and $d_n$, $1\le n\le m$, provided that the matrices $S(-\xi_n|c_n,d_n)$ are invertible.
As in \cite{NicT22}, we choose these parameters to be on the form
\begin{align}
   &c_n=\frac{x+\alpha+\beta+N-1}2-M, \label{c_n}\\ 
   &d_n=\frac{-x+\alpha+\beta+N-1}2-M-\sum_{r=1}^n (-1)^{\epsilon_r}, \label{d_n}
\end{align}
for any arbitrary choice of $x$ so that the matrices  $S(-\xi_n|c_n,d_n)$ are invertible, and we compute the action of \eqref{reconst-basis} on generalised bulk Bethe states of the form
\begin{equation}\label{bulk-Bethe state}
  \underline{B}_M(\{\mu_i\}_{i=1}^M | x-1,\alpha-\beta)\, \ket{\eta,\alpha +\beta +N-M-1}.
\end{equation}

\begin{prop}\label{prop-act-bulk}
Let us consider, for given $m$-tuples $\boldsymbol{\epsilon}\equiv(\epsilon_1,\ldots,\epsilon_m), \boldsymbol{\epsilon'}\equiv(\epsilon'_1,\ldots,\epsilon'_m)\in \{1,2\}^{m}$ 
and gauge parameters $\alpha,\beta$, the operator $\barE_{m}^{\boldsymbol{\epsilon'},\boldsymbol{\epsilon}}(\alpha,\beta)$ defined as in \eqref{op-E_m}, with $a_n,b_n,\bar a_n,\bar b_n$ given in terms of $\alpha,\beta,\boldsymbol{\epsilon},\boldsymbol{\epsilon'}$ by \eqref{Gauge.Basis-1}-\eqref{Gauge.Basis-2}.
Then, for any gauge parameter $x$, its action on the generalised bulk states \eqref{bulk-Bethe state} takes the following form:
\begin{multline}\label{act-local-op-bulk}
  \barE_{m}^{\boldsymbol{\epsilon'},\boldsymbol{\epsilon}}(\alpha,\beta)\
   \underline{B}_M(\{\mu_i\}_{i=1}^M | x-1,\alpha-\beta)\, \ket{\eta,\alpha +\beta +N-M-1}
  \\
  =\sum_{\mathsf{B}_{\boldsymbol{\epsilon,\epsilon'}}} 
    \mathcal{F}_{\mathsf{B}_{\boldsymbol{\epsilon,\epsilon'}}}(\{\mu_j\}_{j=1}^M,\{\xi_j^{(1)}\}_{j=1}^m\mid\alpha,\beta,x)\
    \\
    \times
  \underline{B}_{M+\tilde m_{\boldsymbol{\epsilon,\epsilon'}}}(\{\mu_j\}_{j\in\mathsf{A}_{\boldsymbol{\epsilon,\epsilon'}}} | x-1,\alpha-\beta-2\tilde m_{\boldsymbol{\epsilon,\epsilon'}})\, 
  \ket{\eta,\alpha +\beta +N-M-1+\tilde m_{\boldsymbol{\epsilon,\epsilon'}}},
\end{multline}
where 
\begin{align}\label{Def-F_B}
     &\mathcal{F}_{\mathsf{B}_{\boldsymbol{\epsilon},\boldsymbol{\epsilon'}}}(\{\mu_j\}_{j=1}^{M},\{\xi_j^{(1)}\}_{j=1}^{m} \mid \alpha,\beta,x)
     =  f_{\boldsymbol{\epsilon},\boldsymbol{\epsilon'}}(\alpha,\beta,x)\, 
     \prod_{n=1}^m\frac{\ths(\xi_k+\eta\alpha)}{\ths(\eta b_n)\ths(\xi_k+\eta(\alpha+1))}\,    
     \nonumber\\
     &\qquad\times
     \frac{\prod\limits_{j=1}^{s+s'} \bigg[ d(\mu_{\text{\textsc{b}}_j}) \,   \ths( \mu_{\text{\textsc{b}}_j} +\eta(\tfrac{x+\alpha+\beta+N}2-M))
             \frac{\prod_{k=1}^M\ths(\mu_k-\mu_{\text{\textsc{b}}_j}-\eta)}
                    {\prod_{\substack{k=1 \\ k\neq \text{\textsc{b}}_j}}^M\ths(\mu_k-\mu_{\text{\textsc{b}}_j})}\bigg] }
           {\prod\limits_{j=1}^m \bigg[ d(\xi_j^{(1)}) \,   \ths( \xi_j^{(1)} +\eta(\tfrac{x+\alpha+\beta+N}2-M))\,
            \prod\limits_{k=1}^M\frac{\ths(\mu_k-\xi_j^{(1)}-\eta)}{\ths(\mu_k-\xi_j^{(1)})}
            \bigg]}  
           \,
     \prod_{1\leq i<j\leq s+s'}
     \frac{\ths (\mu _{\text{\textsc{b}}_i}-\mu_{\text{\textsc{b}}_j})}
            {\ths (\mu _{\text{\textsc{b}}_i}-\mu_{\text{\textsc{b}}_j}-\eta )}
     \nonumber\\
     &\qquad\times
     \prod_{p=1}^{s}\left[ 
     \ths(\xi_{i_p}^{(1)}-\mu_{\text{\textsc{b}}_p}+\eta(1+b_{i_p}))\,
     \frac{\prod_{k=i_p+1}^m \ths(\mu_{\text{\textsc{b}}_p}-\xi_k^{(1)}-\eta)}
            {\prod_{k=i_p}^m \ths(\mu_{\text{\textsc{b}}_p}-\xi_k^{(1)})} 
     \right]    
     \nonumber\\
     &\qquad\times
     \prod_{p=s+1}^{s+s'}\left[ 
     \ths(\xi_{i_p}^{(1)}-\mu_{\text{\textsc{b}}_p}-\eta(1-\bar b_{i_p}))\,
     \frac{\prod_{k=i_p+1}^m \ths(\xi_k^{(1)}-\mu_{\text{\textsc{b}}_p}-\eta)}
            {\prod_{\substack{k=i_p \\ k\not= M+m+1-{\text{\textsc{b}}_p}}}^m \ths(\xi_k^{(1)}-\mu_{\text{\textsc{b}}_p})} 
     \right]      ,     
\end{align}
with
\begin{align}\label{Def-f}
  f_{\boldsymbol{\epsilon},\boldsymbol{\epsilon'}}(\alpha,\beta,x)
   &=\prod_{k=1}^{M+\tilde m_{\boldsymbol{\epsilon,\epsilon'}}}
   \frac{\ths(\eta(-\beta-k+2M+1))}{\ths(\eta(\frac{-x+\alpha+\beta+N-2M-1}2+k))}
   \prod_{k=1}^{M}\frac{\ths(\eta(\frac{-x+\alpha+\beta+N-2M-1}2+k))}{\ths(\eta(-\beta-k+2M+1))}.
\end{align}
In these expressions, we have used the following notations:
\begin{alignat}{2}
   &\{ i_p\}_{p\in\{1,\ldots,s\}}=\{1,\ldots,m\}\cap \{j\mid\epsilon_j=2\}\qquad& &\text{with}\quad i_p<i_q\ \ \text{if}\ \ p<q,\label{i_p-s}\\
   &\{ i_p\}_{p\in\{s+1,\ldots,s+s'\}}=\{1,\ldots,m\}\cap \{j\mid\epsilon'_j=1\}\qquad& &\text{with}\quad i_p>i_q\ \ \text{if}\ \ p<q,\label{i_p-s'}
\end{alignat}
and 
\begin{equation}\label{m_epsilon}
   \tilde m_{\boldsymbol{\epsilon,\epsilon'}}=\sum_{r=1}^m(\epsilon'_r-\epsilon_r)=m-(s+s').
\end{equation}
We have also used the shortcut notation
\begin{equation}
      \mu_{M+j}=\xi_{m+1-j}^{(1)}, \qquad 1\le j \le m.
\end{equation}
With these notations, the sum in \eqref{act-local-op-bulk} runs over all possible sets of integers $\mathsf{B}_{\boldsymbol{\epsilon,\epsilon'}}=\{\text{\textsc{b}}_1,\ldots,\text{\textsc{b}}_{s+s'}\}$ such that
\begin{equation}\label{def-setB}
  \begin{cases}
  \text{\textsc{b}}_{p}\in \{1,\ldots ,M\}\setminus \{\text{\textsc{b}}_1,\ldots ,\text{\textsc{b}}_{p-1}\}\qquad 
  & \text{for}\quad 0<p\leq s, \\ 
  \text{\textsc{b}}_{p}\in \{1,\ldots ,M+m+1-i_{p}\}\setminus \{\text{\textsc{b}}_{1},\ldots ,\text{\textsc{b}}_{p-1}\}\qquad  
  & \text{for}\quad s<p\leq s+s',
\end{cases}
\end{equation}
and
\begin{equation}\label{def-setA}
  \mathsf{A}_{\boldsymbol{\epsilon,\epsilon'}}\equiv \{ \text{\textsc{a}}_a,\ldots,\text{\textsc{a}}_{M+ \tilde m_{\boldsymbol{\epsilon,\epsilon'}}}\}=\{1,\ldots,M+m\}\setminus  \mathsf{B}_{\boldsymbol{\epsilon,\epsilon'}}.
\end{equation}
\end{prop}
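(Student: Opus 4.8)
The plan is to follow the strategy of \cite{KitKMNST07,NicT22}: express $\barE_m^{\boldsymbol{\epsilon'},\boldsymbol{\epsilon}}(\alpha,\beta)$ through the solution of the bulk inverse problem, commute the resulting product of gauge bulk monodromy entries through the string of gauge $B$-operators in \eqref{bulk-Bethe state}, and then use the action of the surviving entries on the gauge reference state. Concretely, one starts from \eqref{reconst-basis}, which presents the operator --- up to the scalar prefactor $\prod_{n=1}^m \det S(-\xi_n|c_n,d_n)/\det S(-\xi_n|a_n,b_n)$ and the normalisation $\prod_{n=1}^m \det_q M(\xi_n)$ --- as a product of two ordered blocks: a ``left block'' $\prod_{n=1\to m} M_{\epsilon_n,\epsilon'_n}(\xi_n-\eta/2|(c_n,d_n),(\bar a_n,\bar b_n))$ and a ``right block'' $\prod_{n=m\to 1} M_{3-\epsilon_n,3-\epsilon'_n}(\xi_n+\eta/2|(c_n-1,d_n),(a_n-1,b_n))$, with the latter acting first on \eqref{bulk-Bethe state}.

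First I would handle the right block. At the point $\xi_n+\eta/2$ the gauge bulk monodromy matrix is degenerate; combined with the denominator $\det_q M(\xi_n)$, its diagonal entries act essentially as scalars once pushed onto the reference state via \eqref{actC-ref}-\eqref{actD-ref}, while its off-diagonal entries must be carried to the reference state where they annihilate it or act trivially --- so each such entry has to be contracted with one of the $B(\mu_j)$ of \eqref{bulk-Bethe state} through the dynamical commutation relations of Appendix~\ref{app-bulk-gauge}. Recording which rapidity each contraction removes --- one of the original $\mu_j$ with $j\le M$, or one of the reconstruction points, which with the shortcut $\mu_{M+m+1-k}=\xi_k^{(1)}$ is an index in $\{M+1,\ldots,M+m\}$ --- produces the set $\mathsf{B}_{\boldsymbol{\epsilon,\epsilon'}}=\{\text{\textsc{b}}_1,\ldots,\text{\textsc{b}}_{s+s'}\}$; the two orderings \eqref{i_p-s}-\eqref{i_p-s'} and the bounds in \eqref{def-setB} reflect that a point $\xi_k^{(1)}$ inserted by the left block at site $k$ can only be reached by an off-diagonal entry of the right block sitting to its left.

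Then I would process the left block, which inserts the new creation operators: after \eqref{reconstr-1}-\eqref{reconstr-2} and the parity relations, each $B$-type entry $M_{1,2}(\xi_n-\eta/2|\cdot)$ turns into a gauge $B$-operator $B(\xi_n^{(1)}|\cdot)$ with dynamical parameters shifted as prescribed by \eqref{Gauge.Basis-1}-\eqref{d_n}, while the diagonal entries contribute scalars. The $\text{\textsc{b}}_p$ with $p>s$ are precisely those freshly created $B(\xi_k^{(1)})$ that the right block removes; the $M+m-(s+s')=M+\tilde m_{\boldsymbol{\epsilon,\epsilon'}}$ surviving $B$-operators then reorder, by their dynamical commutation relations, into $\underline{B}_{M+\tilde m_{\boldsymbol{\epsilon,\epsilon'}}}(\{\mu_j\}_{j\in\mathsf{A}_{\boldsymbol{\epsilon,\epsilon'}}}|x-1,\alpha-\beta-2\tilde m_{\boldsymbol{\epsilon,\epsilon'}})$ on $\ket{\eta,\alpha+\beta+N-M-1+\tilde m_{\boldsymbol{\epsilon,\epsilon'}}}$, $\mathsf{A}_{\boldsymbol{\epsilon,\epsilon'}}$ being the complement \eqref{def-setA}. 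The scalar coefficient of each term is then assembled from the residue factors $\prod_k\ths(\mu_k-\mu_{\text{\textsc{b}}_j}-\eta)/\ths(\mu_k-\mu_{\text{\textsc{b}}_j})$ and the antisymmetriser $\prod_{i<j}\ths(\mu_{\text{\textsc{b}}_i}-\mu_{\text{\textsc{b}}_j})/\ths(\mu_{\text{\textsc{b}}_i}-\mu_{\text{\textsc{b}}_j}-\eta)$ produced by the $D$--$B$ and off-diagonal commutations, the reference-state eigenvalues from \eqref{actC-ref}-\eqref{actD-ref}, the contribution of the prefactor $\det S(-\xi_n|c_n,d_n)/\det S(-\xi_n|a_n,b_n)$, and the purely gauge-theoretic quotient $f_{\boldsymbol{\epsilon,\epsilon'}}(\alpha,\beta,x)$ of \eqref{Def-f}, which is what the telescoping of the $\det S$ ratios together with the dynamical-parameter shifts $\beta\mapsto\beta\mp\sum_r(-1)^{\epsilon_r}$ leaves behind. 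With the choice \eqref{c_n}-\eqref{d_n}, which also keeps the relevant Vertex-IRF matrices invertible for generic $x$, these pieces should combine into exactly \eqref{Def-F_B}.

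The step I expect to be the main obstacle is twofold. On the combinatorial side, one must show that after all the commutations the only surviving terms are those indexed by the sets $\mathsf{B}_{\boldsymbol{\epsilon,\epsilon'}}$ of \eqref{def-setB}, with coefficient \eqref{Def-F_B} and no leftover ``unwanted'' contributions; as in the proof of Proposition~\ref{prop-new-BB}, I would argue that the would-be extra terms assemble into a quantity whose apparent poles in the auxiliary parameters $\xi_n^{(1)}$ have vanishing residues, which is elliptic of the appropriate degree once the dynamical parameters are fixed as in \eqref{c_n}-\eqref{d_n}, and which vanishes at a distinguished point, hence is identically zero. On the bookkeeping side, tracking the shifts of $\alpha\pm1$ and $\beta\mp\sum_r(-1)^{\epsilon_r}$ through \eqref{Gauge.Basis-1}-\eqref{d_n} and checking that the $\det S$ ratios telescope precisely to $f_{\boldsymbol{\epsilon,\epsilon'}}$ is mechanical but delicate; this is where the present elliptic computation differs from the XXZ one of \cite{NicT22} only through the systematic replacement of trigonometric identities by the corresponding theta-function ones, e.g.\ \eqref{YbarY}.
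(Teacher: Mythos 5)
Your starting point --- the reconstruction \eqref{reconst-basis} with the choice \eqref{c_n}--\eqref{d_n} --- is the same as the paper's, but the way you propose to organise the computation deviates exactly where the content of the proposition lies, and the substitute you offer for the hardest step is not the mechanism that actually works. The paper does not act first with the whole right block and then with the whole left block: it proceeds by a site-by-site induction from $n=m$ down to $n=1$ on the monomials $\underline{E}^{(n,m)}_{\boldsymbol{\epsilon},\boldsymbol{\epsilon'}}(\alpha,\beta,x)$, which keep the two entries attached to the \emph{same} site $n$ (one at $\xi_n-\eta/2$, one at $\xi_n+\eta/2$) paired, and it establishes the explicit intermediate formula \eqref{action_Mn} for the action on states of the form \eqref{interm-bBstate}. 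The pairing is essential because the elimination of unwanted contributions happens at the operator level, through the quantum-determinant identity \eqref{Alternative-q-det} and the cancellation properties \eqref{MM=0-1}--\eqref{MM=0-2} in their extended form \eqref{cancel-prod}, which involve precisely one entry at $\xi_n-\eta/2$ and one at $\xi_n+\eta/2$ of the same site with other operators possibly sandwiched in between, combined with the actions \eqref{actA-Bethe}--\eqref{actD-Bethe} and a case analysis over the four values of $(\epsilon_n,\epsilon'_n)$. Your two-sweep ordering discards this structure: after the first sweep you face a large sum of gauged Bethe-type states with site-dependent shifts of the dynamical parameters, and nothing in your sketch explains how they recombine or why most of them drop out.

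The vanishing argument you propose as a substitute --- assembling the would-be extra terms into an elliptic function of the $\xi_n^{(1)}$ with vanishing residues and a zero, as in the proof of Proposition~\ref{prop-new-BB} --- is not available here: in the boundary-bulk decomposition that cancellation is produced by the sum over the reflection signs $\sigma_a,\sigma_{M+1}=\pm$, and there is no analogous sign sum in the purely bulk action, so it is unclear which elliptic function you would form or why it should vanish. As a result, the heart of the statement --- that the surviving terms are indexed exactly by the admissible sets \eqref{def-setB} and carry the coefficient \eqref{Def-F_B} with the gauge factor \eqref{Def-f} --- is asserted (``should combine into exactly \eqref{Def-F_B}'') rather than derived. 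To close the gap you would essentially have to reproduce the recursion \eqref{action_Mn}, i.e.\ the elliptic analogue of Proposition~4.4 of \cite{NicT22}, whose only genuinely new ingredient in the XYZ setting is the replacement of the trigonometric coefficients by \eqref{coeff-act-A} and \eqref{coeff-act-D}; your closing remark about theta-function identities is correct as far as it goes, but it is the site-by-site induction and the operator-level cancellations, not a residue argument, that generate the combinatorics of \eqref{act-local-op-bulk}.
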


\begin{proof}
The computation of this action can be done similarly as for the XXZ case, see \cite{NicT22}. We use for that the reconstruction formula \eqref{reconst-basis}, with the particular choice \eqref{c_n} and \eqref{d_n} for the parameters $c_n$ and $d_n$ in terms of $x$.
More precisely, we proceed by induction, computing recursively, from $n=m$ to $n=1$, the action of the following monomials of elements of the bulk gauged monodromy matrix:
\begin{multline}
  \underline{E}^{(n,m)}_{\boldsymbol{\epsilon},\boldsymbol{\epsilon'}}(\alpha,\beta,x)
  =\prod_{k=n}^m\frac{\det S(-\xi_k|c_k,d_k)}{\det S(-\xi_k| a_k, b_k)}
  \prod_{k=n\to m} M_{\epsilon_k,\epsilon'_k}(\xi_k-\eta/2|(c_k,d_k),(\bar a_k,\bar b_k))\,
  \\
  \times
  \prod_{k=m\to n} \frac{M_{3-\epsilon_k,3-\epsilon_k}(\xi_k+\eta/2 |(c_k-1,d_k),( a_k-1,b_k))}{\det_q M(\xi_k)},
\end{multline}
on generalised bulk Bethe states of the form
\begin{equation}\label{interm-bBstate}
\underline{B}_M(\{\mu_i\}_{i=1}^M | x_n-1,z_n)\,\ket{\eta,y_n},
\end{equation}
with
\begin{align}
   &x_n=x+\sum_{r=1}^{n-1}( -1)^{\epsilon_r}=c_n-d_n-(-1)^{\epsilon_n},\\
   &z_n=\alpha -\beta+\sum_{r=1}^{n-1}( -1)^{\epsilon_r}=a_n-b_n-(-1)^{\epsilon_n}-1,\\
   &y_n=\alpha+\beta+N-M-1 -\sum_{r=1}^{n-1}( -1) ^{\epsilon _r} =a_n+b_n+(-1)^{\epsilon_n}+N-M-2
   \nonumber\\
   &\hphantom{y_n=\alpha+\beta+N-M-1 -\sum_{r=1}^{n-1}( -1) ^{\epsilon _r}}
 =c_n+d_n+M+(-1)^{\epsilon_n}.
\end{align}
The intermediate recursion formula is given by
\begin{multline}\label{action_Mn}
   \underline{E}^{(n,m)}_{\boldsymbol{\epsilon},\boldsymbol{\epsilon'}}(\alpha,\beta,x)\
   \underline{B}_M(\{\mu_i\}_{i=1}^M | x_n-1,z_n)\,\ket{\eta,y_n}
   = 
   \sum_{\mathsf{B}_{\boldsymbol{\epsilon},\boldsymbol{\epsilon'}}^{(n)}}
   \mathcal{F}_{\mathsf{B}_{\boldsymbol{\epsilon},\boldsymbol{\epsilon'}}^{(n)}}(\{\mu_j\}_{j=1}^{M},\{\xi_j^{(1)}\}_{j=n}^{m} \mid  \alpha,\beta,x)\\
   \times
   \underline{B}_{M+\tilde m_n}   (\{\mu_j\}_{j\in\mathsf{A}_{\boldsymbol{\epsilon},\boldsymbol{\epsilon'}}^{(n)}} | x_n-1,z_n-2\tilde m_n)\, \ket{\eta,y_n+\tilde{m}_n},
\end{multline}
where $\tilde m_n$ is given by \eqref{tilde-m_n}, and
\begin{align}\label{Def-F-rec}
     &\mathcal{F}_{\mathsf{B}_{\boldsymbol{\epsilon},\boldsymbol{\epsilon'}}^{(n)}}(\{\mu_j\}_{j=1}^{M},\{\xi_j^{(1)}\}_{j=n}^{m} \mid \alpha,\beta,x)
     =f^{(n,m)}_{\boldsymbol{\epsilon},\boldsymbol{\epsilon'}}(\alpha,\beta,x)\, 
      \prod_{k=n}^m 
     \frac{\ths(\xi_k+\alpha\eta) }{\ths(\eta b_k)\ths(\xi_k+\eta(\alpha+1))}  
     \nonumber\\
     &\qquad\times
     \frac{\prod\limits_{j=1}^{s_{(n)}+s'_{(n)}} \bigg[ d(\mu_{\text{\textsc{b}}_j^{(n)}}) \,    \ths( \mu_{\text{\textsc{b}}_j^{(n)}}+(c_n+\tfrac 12)\eta)\,
             \frac{\prod_{k=1}^M\ths(\mu_k-\mu_{\text{\textsc{b}}_j^{(n)}}-\eta)}
                    {\prod_{\substack{k=1 \\ k\neq \text{\textsc{b}}_j^{(n)}}}^M\ths(\mu_k-\mu_{\text{\textsc{b}}_j^{(n)}})}\bigg] }
           {\prod\limits_{j=n}^m \bigg[ d(\xi_j^{(1)}) \,\,    \ths( \xi_j^{(1)}+(c_n+\tfrac 12)\eta)
            \prod\limits_{k=1}^M\frac{\ths(\mu_k-\xi_j^{(1)}-\eta)}{\ths(\mu_k-\xi_j^{(1)})}
            \bigg]}  
           \,
     \nonumber\\
     &\qquad\times
     \prod_{1\leq i<j\leq s_{(n)}+s_{(n)}^{\prime }}
     \frac{\ths (\mu _{\text{\textsc{b}}_i^{(n)}}-\mu_{\text{\textsc{b}}_j^{(n)}})}
            {\ths (\mu _{\text{\textsc{b}}_i^{(n)}}-\mu_{\text{\textsc{b}}_j^{(n) }}-\eta )}
     \nonumber\\
     &\qquad\times
     \prod_{p=1}^{s_{(n)}}\left[ 
     \ths(\xi_{i_p^{(n)}}^{(1)}-\mu_{\text{\textsc{b}}_p^{(n)}}+\eta(1+b_{i_p^{(n)}}))\,
     \frac{\prod_{k=i_p^{(n)}+1}^m \ths(\mu_{\text{\textsc{b}}_p^{(n)}}-\xi_k^{(1)}-\eta)}
            {\prod_{k=i_p^{(n)}}^m \ths(\mu_{\text{\textsc{b}}_p^{(n)}}-\xi_k^{(1)})} 
     \right]    
     \nonumber\\
     &\qquad\times
     \prod_{p=s_{(n)}+1}^{s_{(n)}+s'_{(n)}}\left[ 
     \ths(\xi_{i_p^{(n)}}^{(1)}-\mu_{\text{\textsc{b}}_p^{(n)}}-\eta(1-\bar b_{i_p^{(n)}}))\,
     \frac{\prod_{k=i_p^{(n)}+1}^m \ths(\xi_k^{(1)}-\mu_{\text{\textsc{b}}_p^{(n)}}-\eta)}
            {\prod_{\substack{k=i_p^{(n)} \\ k\not= M+m+1-{\text{\textsc{b}}_p^{(n)}}}}^m \ths(\xi_k^{(1)}-\mu_{\text{\textsc{b}}_p^{(n)}})} 
     \right]      ,     
\end{align}
with
\begin{equation}\label{Def-f-rec}
  f^{(n,m)}_{\boldsymbol{\epsilon},\boldsymbol{\epsilon'}}(\alpha,\beta,x)
   = \begin{cases}
     \displaystyle \prod\limits_{k=1}^{\tilde m_n}\left[ -\frac{\ths(\eta(b_{n-1}+k-M-1))}{\ths(\eta(d_{n-1}+k+M))} \right]
         &\text{if}\quad \tilde m_n>0,\\
     1  &\text{if}\quad \tilde m_n=0,\\
    \displaystyle \prod\limits_{k=1}^{|\tilde m_n|} \left[ -\frac{\ths(\eta(d_{n-1}-k+M+1))}{\ths(\eta(b_{n-1}-k-M))}\right]
         &\text{if}\quad \tilde m_n<0.
   \end{cases}
\end{equation}
Here we have defined the intermediate sets of indices
\begin{alignat}{2}
   &\{i_{p}^{(n) }\}_{p\in \{1,\ldots ,s_{(n)}\}},\quad  &
   &\text{with}\quad   i_{k}^{(n)}<i_{h}^{(n)}\quad
   \text{for}\quad   0<k<h\leq s_{(n)},  
   \label{i_p-Def0} \\
  &\{i_{p}^{(n)}\}_{p\in \{s_{(n)}+1,\ldots ,s_{(n)}+s'_{(n)}\}},\quad &
  &\text{with}\quad  i_{k}^{(n)}>i_{h}^{(n)}\quad
  \text{for}\quad  s_{(n)}<k<h\leq s_{(n)}+s_{(n)}^{\prime },  
  \label{i_p-Def1}
\end{alignat}
so that
\begin{alignat}{2}
  &j\in \{i_{p}^{(n)}\}_{p\in \{1,\ldots ,s_{(n)}\}} \qquad & 
  &\text{iff}\qquad n\le j\le m\quad\text{and}\quad\epsilon _{j}=2,
  \\
  &j\in \{i_{p}^{(n)}\}_{p\in \{s_{(n) }+1,\ldots ,s_{(n) }+s_{(n) }^{\prime }\}} \qquad & 
  &\text{iff}\qquad n\le j\le m\quad\text{and}\quad \epsilon'_{j}=1. 
\end{alignat}
and the intermediate sets of integers 
$\mathsf{B}^{( n) }_{\boldsymbol{\epsilon},\boldsymbol{\epsilon'}}
=\{\text{\textsc{b}}_{1}^{( n) },\ldots ,\text{\textsc{b}}_{s_{(n)}+s'_{(n)}}^{( n) }\}$ such that
\begin{equation}
\begin{cases}
   \text{\textsc{b}}_{p}^{( n) }\in 
   \{1,\ldots ,M\}\setminus \{\text{\textsc{b}}_{1}^{( n) },\ldots ,\text{\textsc{b}}_{p-1}^{(n) }\}
   \qquad & \text{for}\quad 0<p\leq s_{( n) }, 
   \\ 
   \text{\textsc{b}}_{p}^{( n) }\in \{1,\ldots ,M+m+1-i_{p}^{(n) }\}\setminus \{\text{\textsc{b}}_{1}^{( n) },\ldots ,\text{\textsc{b}}_{p-1}^{( n) }\}
   \quad  & \text{for}\quad s<p\leq s_{( n) }+s_{( n) }^{\prime },
\end{cases}
\label{Def-Bss}
\end{equation}
and
\begin{equation}\label{Def-Ass}
   \mathsf{A}^{(n)}_{\boldsymbol{\epsilon},\boldsymbol{\epsilon'}}
   \equiv \big\{\text{\textsc{a}}_{1}^{( n) },\ldots,\text{\textsc{a}}_{M+\tilde m_n}^{( n)}\big\}
   =\{1,\ldots ,M+m+1-n\}\setminus \mathsf{B}^{( n) }_{\boldsymbol{\epsilon},\boldsymbol{\epsilon'}}.
\end{equation}
The proof of \eqref{action_Mn} is completely similar to the proof of Proposition~4.4 of \cite{NicT22}. The induction step from $n+1$ to $n$ can be done by distinguishing the four possible choices of $(\epsilon_n,\epsilon'_n)\in\{1,2\}^2$, and by using the form of the action \eqref{actA-Bethe} and \eqref{actD-Bethe} of the bulk gauge operators on the bulk Bethe states, as well as the identity \eqref{Alternative-q-det} and the cancellation properties of the product of operators \eqref{cancel-prod}.
The only noticeable difference with respect to the trigonometric case comes from the form of the coefficients \eqref{coeff-act-A} and \eqref{coeff-act-D}. We refer the reader to \cite{NicT22} for more details.
\end{proof}

\subsection{Action on boundary Bethe states}

From the action of the operator $\barE_{m}^{\boldsymbol{\epsilon'},\boldsymbol{\epsilon}}(\alpha,\beta)$ \eqref{op-E_m} on generalised bulk Bethe states that we have just computed, we can now deduce its action on generalised boundary Bethe states
by using the boundary-bulk decomposition for $\gamma$ and $\delta$ satisfying \eqref{cond-c-=0}, \eqref{cond-b-=0}.
These parameters $\gamma$ and $\delta$ have also to be chosen such that
\begin{equation}\label{relation-gamma+delta-alpha+beta}
    \gamma+\delta=\alpha+\beta+N-1-2M.
\end{equation}
This is possible as soon as $\alpha+\beta$ satisfies the following constraint:
\begin{equation}\label{cond-alpha+beta}
 (\alpha+\beta+N-1-2M)\eta=\sum_{\ell=1}^3\epsilon_{\alpha^-_\ell} \alpha^-_\ell
 \!\!\!\mod\!(2\pi,2\pi\omega),
\end{equation}
with $\epsilon_{\alpha^-_\ell}=\pm 1$ such that $ \epsilon_{\alpha^-_1}\epsilon_{\alpha^-_2}\epsilon_{\alpha^-_3}=1$
(which coincides with \eqref{cond-ref-states}).
Then, with the notation $ \lambda_{M+j}=\xi_{m+1-j}^{(1)}$ and the notations of Proposition~\ref{prop-act-bulk},
\begin{align}
  &\barE_{m}^{\boldsymbol{\epsilon'},\boldsymbol{\epsilon}}(\alpha,\beta)\
   \underline{\widehat{\mathcal B}}_{-,M}(\{\lambda_i\}_{i=1}^M|\alpha-\beta+1)\, 
   \ket{\eta,\alpha+\beta+N-1-2M}
   \nonumber\\
   &\quad
   = \sum_{\boldsymbol{\sigma}\in\{\pm 1\}^M} 
   \prod_{n=1}^M\frac{\ths\!\big((-\beta+2M-n+1) \eta\big)\ths\!\big(\lambda _n^{(\sigma)}+\tfrac\eta 2-\alpha \eta\big)} {\ths(\eta (\delta +n)) \, \ths\!\big(\lambda_n^{( \sigma) }+\frac\eta 2+\eta \gamma\big)}\,  
   H^-_{\boldsymbol{\sigma}}(\{\lambda _{i}\}_{i=1}^{M}|{\boldsymbol{\epsilon_{\alpha^-}}})
    \nonumber\\
    &\quad\times 
    \sum_{\mathsf{B}_{\boldsymbol{\epsilon,\epsilon'}}} 
    \mathcal{F}_{\mathsf{B}_{\boldsymbol{\epsilon,\epsilon'}}}(\{\lambda_i^{( \sigma ) }\}_{i=1}^M,\{\xi_j^{(1)}\}_{j=1}^m\mid\alpha,\beta,\gamma-\delta)\
    \nonumber\\
  &\quad    \times
  \underline{B}_{M+\tilde m_{\boldsymbol{\epsilon,\epsilon'}}}(\{\lambda_i^{( \sigma ) }\}_{i\in\mathsf{A}_{\boldsymbol{\epsilon,\epsilon'}}} | \gamma-\delta-1,\alpha-\beta-2\tilde m_{\boldsymbol{\epsilon,\epsilon'}})\, 
  \ket{\eta,\alpha +\beta +N-M-1+\tilde m_{\boldsymbol{\epsilon,\epsilon'}}},
\end{align}
in which we have used \eqref{relation-gamma+delta-alpha+beta} to simplify the coefficient of the boundary-bulk decomposition.

Recall that we have, from the definition of the sets \eqref{def-setB} and \eqref{def-setA},
\begin{equation}
   \{\lambda_i \}_{i\in\mathsf{A}_{\boldsymbol{\epsilon,\epsilon'}}}
   =\{\lambda_i \}_{i=1}^{M} \cup \{\xi_j^{(1)}\}_{j=1}^m\setminus\{\lambda_i\}_{i\in\mathsf{B}_{\boldsymbol{\epsilon,\epsilon'}}},
\end{equation}
Hence, defining two partitions $\mathsf{\Lambda}_+\cup\mathsf{\Lambda}_-$ of $\{1,\ldots,M\}$ and $\mathsf{\Gamma}_+\cup\mathsf{\Gamma}_-$ of $\{1,\ldots,m\}$ such that
\begin{alignat}{2}
& \mathsf{\Lambda}_+=\mathsf{B}_{\boldsymbol{\epsilon,\epsilon'}}\cap \{1,\ldots,M\}, & 
& \mathsf{\Lambda}_-=\mathsf{A}_{\boldsymbol{\epsilon,\epsilon'}}\cap\{1,\ldots ,M\}, 
\label{def-part-lambda}\\
& \mathsf{\Gamma}_-=\{M+m+1-j\}_{j\in \mathsf{B}_{\boldsymbol{\epsilon,\epsilon'}}\cap \{N+1,\ldots ,N+m\}},\quad & 
& \mathsf{\Gamma}_+=\{1,\ldots ,m\}\setminus\mathsf{\Gamma}_-,
\label{def-part-xi}
\end{alignat}
i.e.
\begin{align}
   &\{\lambda_i \}_{i\in\mathsf{A}_{\boldsymbol{\epsilon,\epsilon'}}}=\{\lambda_i \}_{i\in \mathsf{\Lambda}_-}\cup\{\xi_j^{(1)}\}_{j\in\mathsf{\Gamma}_+},\\
   &\{\lambda_i \}_{i\in\mathsf{B}_{\boldsymbol{\epsilon,\epsilon'}}}=\{\lambda_i \}_{i\in\mathsf{\Lambda}_+}\cup\{\xi_j^{(1)}\}_{j\in\mathsf{\Gamma}_-},
\end{align}
we have
\begin{align}\label{ratio-H}
 &H_{\boldsymbol{\sigma}}(\{\lambda _{i}\}_{i=1}^{M}|\alpha-\beta,\gamma,\delta) = H_{\boldsymbol{\sigma}}(\{\lambda _{i}\}_{i\in\mathsf{A}_{\boldsymbol{\epsilon,\epsilon'}}}|\alpha-\beta-2\tilde m_{\boldsymbol{\epsilon,\epsilon'}},\gamma,\delta)
 \nonumber\\
 &\qquad\quad\times
\frac{H^-_{\boldsymbol{\sigma}}(\{\lambda_i\}_{i\in\mathsf{\Lambda}_+}|{\boldsymbol{\epsilon_{\alpha^-}}}) }
  {H^-_{\boldsymbol{1}}(\{\xi_i^{(1)}\}_{i\in\mathsf{\Gamma}_+}|{\boldsymbol{\epsilon_{\alpha^-}}}) }\,
   \prod_{j\in\mathsf{\Lambda}_-}\Bigg\{\prod_{i\in\mathsf{\Lambda}_+}\frac{\ths(\lambda_i^{(\sigma)}+\lambda_j^{(\sigma)}+\eta)}{\ths(\lambda_i^{(\sigma)}+\lambda_j^{(\sigma)})}
  \prod_{i\in\mathsf{\Gamma}_+} \frac{\ths(\lambda_i^{(\sigma)}+\xi_j^{(1)})}{\ths(\lambda_i^{(\sigma)}+\xi_j^{(1)}+\eta)}\Bigg\}
  \nonumber\\
 &\qquad\quad \times
  \left[f_{\boldsymbol{\epsilon,\epsilon'}}(\alpha,\beta,\gamma-\delta)\right]^{-1}
  \frac{\prod_{j=1}^m\ths\big(\xi_j^{(1) }+\frac\eta 2+\eta \gamma\big)}{\prod_{n\in\mathsf{B}_{\boldsymbol{\epsilon,\epsilon'}}} \ths\big(\lambda _{n}^{( \sigma) }+\frac\eta 2+\eta \gamma\big)}
  \frac{\prod_{n\in\mathsf{B}_{\boldsymbol{\epsilon,\epsilon'}}} \ths\big(\lambda_n^{( \sigma) }+\frac\eta 2-\eta \alpha\big)}{\prod_{j=1}^m\ths\big(\xi_j^{(1) }+\frac\eta 2-\eta \alpha\big)}.
\end{align}
Here we have again used \eqref{relation-gamma+delta-alpha+beta}.
Finally, noticing that
\begin{align}
 \frac{\gamma-\delta+\alpha+\beta+N-1}2-M=\gamma,
\end{align}
we can formulate the following result, which is the analog of Theorem~2.2 of \cite{NicT23}:

\begin{theorem}\label{th-act-boundary}
For $\alpha$ and $\beta$ satisfying \eqref{cond-alpha+beta}, the action of the operator \eqref{op-E_m} on a generalised boundary Bethe state is given as
\begin{multline}\label{act-boundary}
   \barE_{m}^{\boldsymbol{\epsilon'},\boldsymbol{\epsilon}}(\alpha,\beta)\
   \underline{\widehat{\mathcal B}}_{-,M}(\{\lambda_i\}_{i=1}^M|\alpha-\beta+1)\, 
   \ket{\eta,\alpha+\beta+N-1-2M}
   \\
   = \sum_{\mathsf{B}_{\boldsymbol{\epsilon,\epsilon'}}} 
    \mathcal{\widehat F}_{\mathsf{B}_{\boldsymbol{\epsilon,\epsilon'}}}(\{\lambda_j\}_{j=1}^M,\{\xi_j^{(1)}\}_{j=1}^m | \beta)\
    \\
    \times
    \underline{\widehat{\mathcal B}}_{-,M+\tilde m_{\boldsymbol{\epsilon,\epsilon'}}}(\{\lambda_i\}_{\substack{i=1\\ i\notin\mathsf{B}_{\boldsymbol{\epsilon,\epsilon'}}}}^{M+m}|\alpha-\beta+1-2\tilde m_{\boldsymbol{\epsilon,\epsilon'}})\, 
   \ket{\eta,\alpha+\beta+N-1-2M}.
\end{multline}
Here, we have defined $\lambda_{M+j}:=\xi _{m+1-j}^{( 1) }$ for $j\in \{1,\ldots ,m\}$, and used the notations \eqref{i_p-s}-\eqref{i_p-s'} and \eqref{m_epsilon}. The sum in \eqref{act-boundary} runs over all possible sets of integers $\mathsf{B}_{\boldsymbol{\epsilon,\epsilon'}}=\{\text{\textsc{b}}_{1},\ldots ,\text{\textsc{b}}_{s+s^{\prime }}\}$ defined as in \eqref{def-setB}. 
The coefficient in \eqref{act-boundary} is
\begin{align}\label{coeff-act-boundary}
& \mathcal{\widehat{F}}_{\mathsf{B}_{\boldsymbol{\epsilon,\epsilon'}}}(\{\lambda_j\}_{j=1}^{M},\{\xi_j^{(1)}\}_{j=1}^m |\beta )
   =     \prod_{n=1}^m\frac{\ths(\xi_n+\eta\alpha)}{\ths(\eta b_n)\ths(\xi_n+\eta(\alpha+1))}\,    
      \sum_{\substack{\sigma_j=\pm \\ j\in\mathsf{\Lambda}_+}}
      \frac{\prod_{j=1}^{s+s^{\prime }}d(\lambda_{\text{\textsc{b}}_j}^{\sigma })\, \ths\!\big(\lambda_{\text{\textsc{b}}_j}^{\sigma }+\tfrac\eta 2-\alpha\eta\big)}
      {\prod_{j=1}^m d(\xi_j^{(1)})\, \ths\!\big(\xi_j^{(1)}+\tfrac\eta 2-\alpha\eta\big)}
      \nonumber \\
   & \qquad \times 
   \frac{H^-_{\boldsymbol{\sigma}}(\{\lambda_i\}_{i\in\mathsf{\Lambda}_+}|{\boldsymbol{\epsilon_{\alpha^-}}}) }
  {H^-_{\boldsymbol{1}}(\{\xi_i^{(1)}\}_{i\in\mathsf{\Gamma}_+}|{\boldsymbol{\epsilon_{\alpha^-}}}) }\,
  \prod_{i\in \mathsf{\Lambda}_- }\prod_{\epsilon =\pm }\left\{ \prod_{j\in \mathsf{\Lambda}_+}
  \frac{\ths (\lambda_j^{\sigma }+\epsilon \lambda_i+\eta )}{\ths (\lambda_j^{\sigma }+\epsilon\lambda_i)}
  \prod_{j\in \mathsf{\Gamma}_+}\frac{\ths (\xi_j^{(1)}+\epsilon \lambda_i)}{\ths(\xi_j^{(0)}+\epsilon \lambda_i)}\right\}  
      \nonumber \\
   & \qquad \times 
   \prod_{i\in \mathsf{\Lambda}_+}\left\{ \prod_{j\in \mathsf{\Gamma}_+}
   \frac{\ths (\xi_j^{(1)}-\lambda_i^{\sigma })}{\ths(\xi_j^{(0)}-\lambda_i^{\sigma })}\ 
   \frac{\prod_{j\in \mathsf{\Lambda}_+} \ths (\lambda_j^{\sigma}-\lambda_i^{\sigma }-\eta )}
          {\prod_{j\in \mathsf{\Lambda}_+\setminus \{i\}} \ths(\lambda_j^{\sigma }-\lambda_i^{\sigma })}\right\} 
   \prod_{1\leq i<j\leq s+s^{\prime }}
   \frac{\ths (\lambda_{\text{\textsc{b}}_i}^{\sigma }-\lambda_{\text{\textsc{b}}_j}^{\sigma })}
          {\ths (\lambda_{\text{\textsc{b}}_i}^{\sigma }-\lambda_{\text{\textsc{b}}_j}^{\sigma }-\eta )}  
       \nonumber \\
            &\qquad\times
     \prod_{p=1}^{s}\left[ 
     \ths(\xi_{i_p}^{(1)}-\lambda_{\text{\textsc{b}}_p}^{\sigma }+\eta(1+b_{i_p}))\,
     \frac{\prod_{k=i_p+1}^m \ths(\lambda_{\text{\textsc{b}}_p}^{\sigma }-\xi_k^{(1)}-\eta)}
            {\prod_{k=i_p}^m \ths(\lambda_{\text{\textsc{b}}_p}^{\sigma }-\xi_k^{(1)})} 
     \right]    
     \nonumber\\
     &\qquad\times
     \prod_{p=s+1}^{s+s'}\left[ 
     \ths(\xi_{i_p}^{(1)}-\lambda_{\text{\textsc{b}}_p}^{\sigma }-\eta(1-\bar b_{i_p}))\,
     \frac{\prod_{k=i_p+1}^m \ths(\xi_k^{(1)}-\lambda_{\text{\textsc{b}}_p}^{\sigma }-\eta)}
            {\prod_{\substack{k=i_p \\ k\not= M+m+1-{\text{\textsc{b}}_p}}}^m \ths(\xi_k^{(1)}-\lambda_{\text{\textsc{b}}_p}^{\sigma })} 
     \right]    ,
\end{align}
where we have used the notations \eqref{def-part-lambda}-\eqref{def-part-xi}. The sum in \eqref{coeff-act-boundary} runs over all $\sigma_j\in\{+,-\}$ for $j\in\mathsf{\Lambda}_+$, and we have used the shortcut notation $\lambda_{i}^{\sigma }=\sigma_{i}\lambda_{i}$ for $i\in \mathsf{B}_{\boldsymbol{\epsilon,\epsilon'}}$, with the convention $\sigma _{i}=1$ if $i>M$.
Finally, we recall that the parameters $b_n$ and $\bar b_n$ are given by \eqref{Gauge.Basis-1}-\eqref{Gauge.Basis-2}, and that the boundary-bulk coefficient $H^-_{\boldsymbol{\sigma}}(\{\lambda\}|{\boldsymbol{\epsilon_{\alpha^-}}}) $ is given by \eqref{H_sigma-bis}.
\end{theorem}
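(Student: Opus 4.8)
The plan is to carry out the three-step strategy announced just above, which is the elliptic counterpart of the proof of Theorem~2.2 of \cite{NicT23}. First I would use the hypothesis \eqref{cond-alpha+beta} to fix auxiliary gauge parameters $\gamma,\delta$ satisfying simultaneously \eqref{cond-c-=0}, \eqref{cond-b-=0} and \eqref{relation-gamma+delta-alpha+beta}, and apply Proposition~\ref{prop-new-BB} with these values: this rewrites the generalised boundary Bethe state on the left-hand side of \eqref{act-boundary} as a sum over $\boldsymbol{\sigma}\in\{\pm1\}^M$ of bulk Bethe states $\underline{B}_M(\{\lambda_i^{(\sigma)}\}_{i=1}^M|\gamma-\delta-1,\alpha-\beta)\,\ket{\eta,\gamma+\delta+M}$ with coefficients $H_{\boldsymbol{\sigma}}$ given by \eqref{H_sigma}. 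Because \eqref{relation-gamma+delta-alpha+beta} gives $\gamma+\delta+M=\alpha+\beta+N-M-1$, each of these is exactly of the form \eqref{bulk-Bethe state} with $x=\gamma-\delta$ and $\{\mu_i\}=\{\lambda_i^{(\sigma)}\}$.

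Next I would apply Proposition~\ref{prop-act-bulk} with $x=\gamma-\delta$ to act with $\barE_m^{\boldsymbol{\epsilon'},\boldsymbol{\epsilon}}(\alpha,\beta)$ on each of these bulk states; the identity $\tfrac{\gamma-\delta+\alpha+\beta+N-1}2-M=\gamma$ ensures that the reconstruction parameters \eqref{c_n}--\eqref{d_n} are the ones needed. This produces, for each $\boldsymbol{\sigma}$, a sum over the index sets $\mathsf{B}_{\boldsymbol{\epsilon,\epsilon'}}$ of \eqref{def-setB} of bulk states $\underline{B}_{M+\tilde m_{\boldsymbol{\epsilon,\epsilon'}}}(\{\lambda_i^{(\sigma)}\}_{i\in\mathsf{A}_{\boldsymbol{\epsilon,\epsilon'}}}|\gamma-\delta-1,\alpha-\beta-2\tilde m_{\boldsymbol{\epsilon,\epsilon'}})\,\ket{\eta,\alpha+\beta+N-M-1+\tilde m_{\boldsymbol{\epsilon,\epsilon'}}}$ with coefficients $\mathcal{F}_{\mathsf{B}_{\boldsymbol{\epsilon,\epsilon'}}}$ given by \eqref{Def-F_B}.

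Then I would reconstruct the result as boundary Bethe states and resum over $\boldsymbol{\sigma}$. After introducing the partitions \eqref{def-part-lambda}--\eqref{def-part-xi}, so that $\{\lambda_i\}_{i\in\mathsf{A}_{\boldsymbol{\epsilon,\epsilon'}}}=\{\lambda_i\}_{i\in\mathsf{\Lambda}_-}\cup\{\xi_j^{(1)}\}_{j\in\mathsf{\Gamma}_+}$, the crucial step is the factorisation identity \eqref{ratio-H}, which I would establish by direct rearrangement of the products in \eqref{H_sigma} and \eqref{H_sigma-bis}. It isolates from $H_{\boldsymbol{\sigma}}(\{\lambda_i\}_{i=1}^M|\alpha-\beta,\gamma,\delta)$ exactly the factor $H_{\boldsymbol{\sigma}}(\{\lambda_i\}_{i\in\mathsf{A}_{\boldsymbol{\epsilon,\epsilon'}}}|\alpha-\beta-2\tilde m_{\boldsymbol{\epsilon,\epsilon'}},\gamma,\delta)$ needed to reconstitute, via Proposition~\ref{prop-new-BB} read backwards (with the same $\gamma,\delta$, which by \eqref{relation-gamma+delta-alpha+beta} still match the rank $M+\tilde m_{\boldsymbol{\epsilon,\epsilon'}}$), the boundary state $\underline{\widehat{\mathcal B}}_{-,M+\tilde m_{\boldsymbol{\epsilon,\epsilon'}}}(\{\lambda_i\}_{i\notin\mathsf{B}_{\boldsymbol{\epsilon,\epsilon'}}}|\alpha-\beta+1-2\tilde m_{\boldsymbol{\epsilon,\epsilon'}})\,\ket{\eta,\alpha+\beta+N-1-2M}$ after the summation over $\{\sigma_i\}_{i\in\mathsf{\Lambda}_-}$, times a residual factor involving only the indices in $\mathsf{\Lambda}_+$ and $\mathsf{\Gamma}_\pm$. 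That residual factor, together with $\mathcal{F}_{\mathsf{B}_{\boldsymbol{\epsilon,\epsilon'}}}$ and the remaining sum over $\{\sigma_i\}_{i\in\mathsf{\Lambda}_+}$, should assemble into the coefficient $\mathcal{\widehat F}_{\mathsf{B}_{\boldsymbol{\epsilon,\epsilon'}}}$ of \eqref{coeff-act-boundary}. In the last step I would use $\xi_j^{(1)}=\xi_j-\tfrac\eta 2$ and $\xi_j^{(0)}=\xi_j+\tfrac\eta 2$ to rewrite the factors $d(\xi_j^{(1)})$ and $\ths(\xi_j^{(1)}\pm\,\cdot\,)$ that appear, and once more the relation $\tfrac{\gamma-\delta+\alpha+\beta+N-1}2-M=\gamma$ to eliminate $\gamma-\delta$, so that the final coefficient no longer depends on the auxiliary parameters $\gamma,\delta$.

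I expect the main obstacle to be exactly the factorisation \eqref{ratio-H}: all the representation-theoretic input is already packaged in Propositions~\ref{prop-new-BB} and \ref{prop-act-bulk}, so what remains is the bookkeeping of the dynamical-parameter shifts through their composition, and one must check that the pairwise factors $\ths(\lambda_a^{(\sigma)}+\lambda_b^{(\sigma)}+\eta)/\ths(\lambda_a^{(\sigma)}+\lambda_b^{(\sigma)})$ of \eqref{H_sigma-bis} and the $\bar Y,Y$ factors of \eqref{H_sigma} split across the $\mathsf{\Lambda}_\pm$ and $\mathsf{\Gamma}_\pm$ blocks in such a way that the ``$\mathsf{A}$-part'' is precisely the rank-$(M+\tilde m_{\boldsymbol{\epsilon,\epsilon'}})$ instance of \eqref{H_sigma}, the remainder recombining with the matching factors in $\mathcal{F}_{\mathsf{B}_{\boldsymbol{\epsilon,\epsilon'}}}$. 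I do not anticipate any genuinely new difficulty compared with the XXZ case \cite{NicT23}: the trigonometric building blocks get replaced by the elliptic theta functions $\ths=\ths_1$, and the exponential prefactors by the vectors \eqref{vectY}--\eqref{vectYbar}.
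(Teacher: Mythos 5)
Your proposal follows exactly the route the paper itself takes in Section~\ref{sec-act}: boundary-bulk decomposition via Proposition~\ref{prop-new-BB} with $\gamma,\delta$ fixed by \eqref{cond-c-=0}, \eqref{cond-b-=0} and \eqref{relation-gamma+delta-alpha+beta}, action on the resulting bulk states via Proposition~\ref{prop-act-bulk} with $x=\gamma-\delta$, then the factorisation \eqref{ratio-H} and the resummation over the signs attached to $\mathsf{\Lambda}_-$ to reassemble boundary Bethe states, with the final elimination of $\gamma,\delta$ through $\tfrac{\gamma-\delta+\alpha+\beta+N-1}2-M=\gamma$. This is correct and coincides with the paper's derivation of Theorem~\ref{th-act-boundary}, including the identification of \eqref{ratio-H} as the only nontrivial bookkeeping step.
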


\section{On correlation functions}
\label{sec-corr}

As already mentioned, we suppose that the constraint \eqref{const-TQ} is satisfied for some integer $M$ and some choice of signs \eqref{choice-eps+-}, with $\epsilon_{\alpha_1^+}\epsilon_{\alpha_2^+}\epsilon_{\alpha_3^+}=\epsilon_{\alpha_1^-}\epsilon_{\alpha_2^-}\epsilon_{\alpha_3^-}=1$, and that the ground state of the model is among the states which can be described in terms of a solution $Q$ of the form \eqref{Q-form} of the corresponding functional $TQ$-equation \eqref{hom-TQ}. 

Let us denote by $\ket{Q}$, respectively by $\bra{Q}$,  the ground state written as a separate state of the form \eqref{eigen-r-Skl}, respectively of the form \eqref{eigen-l-Skl}, in the Vertex-IRF SoV basis in which $\alpha$ and $\beta$ satisfy \eqref{cond-c+=0}-\eqref{cond-b+=0} and \eqref{cond-ref-states}, and in which $\mathsf{g}_-$ is fixed as in \eqref{fix-g-}.
Our ultimate aim would be to compute the zero-temperature correlation function of some product of local operators $O_{1\to m}$ acting on the first $m$ sites of the chain, i.e. the mean value, in the ground state $\ket{Q}$, of the quasi-local operator $O_{1\to m}\in\End(\otimes_{n=1}^m\mathcal{H}_n)$:
\begin{equation}\label{moy-O}
   \moy{O_{1\to m}}=\frac{\bra{Q}\, O_{1\to m}\, \ket{Q}}{\moy{Q\,|\,Q}}.
\end{equation}
In general, this is a very complicated task, due to the combinatorial complexity induced by the use of the Vertex-IRF transformation for the explicit construction of the state $\ket{Q}$\footnote{and also due to the fact that we do not have a simple solution of the quantum inverse problem directly in terms of the boundary Yang-Baxter algebra.}. However, as in our previous works concerning the XXZ open spin chain \cite{NicT22,NicT23}, we will be able to compute some elementary building blocks for these correlation functions, i.e. matrix elements of the form
\begin{equation}\label{moy-Em}
   \moy{\barE_{m}^{\boldsymbol{\epsilon'},\boldsymbol{\epsilon}}(\alpha,\beta)}=\frac{\bra{Q}\, \barE_{m}^{\boldsymbol{\epsilon'},\boldsymbol{\epsilon}}(\alpha,\beta)\, \ket{Q}}{\moy{Q\,|\,Q}},
\end{equation}
in which $\barE_{m}^{\boldsymbol{\epsilon'},\boldsymbol{\epsilon}}(\alpha,\beta)$ stands for the tensor product of gauge transformed local operators as defined in \eqref{op-E_m}. We recall that the latter form a basis of $\End(\otimes_{n=1}^m\mathcal{H}_n)$ for $\boldsymbol{\epsilon},\boldsymbol{\epsilon'}\in \{1,2\}^{m}$, so that any quasi-local operator $O_{1\to m}\in\End(\otimes_{n=1}^m\mathcal{H}_n)$ can be expressed as a linear combination of such operators $\barE_{m}^{\boldsymbol{\epsilon'},\boldsymbol{\epsilon}}(\alpha,\beta)$.

\subsection{Finite size elementary blocks}

The strategy to compute \eqref{moy-Em} is completely similar to what we have done in \cite{NicT22,NicT23} in the XXZ case. We briefly recall it here.
%

We first express the right-separate state $\ket{Q}$ as a gauge boundary Bethe state, as in \eqref{separate-Bethe}:
\begin{equation}\label{block-1}
   \moy{\barE_{m}^{\boldsymbol{\epsilon'},\boldsymbol{\epsilon}}(\alpha,\beta)}=\mathsf{c}_{Q,\alpha,\beta}^{(R)} \, 
   \frac{\bra{Q}\, \barE_{m}^{\boldsymbol{\epsilon'},\boldsymbol{\epsilon}}(\alpha,\beta)\, 
   \underline{\widehat{\mathcal B}}_{-,M}(\{\lambda_j\}_{j=1}^M|\alpha-\beta+1)\, \ket{\eta, \alpha+\beta+N-2M-1}}{\moy{Q\,|\,Q}},
\end{equation}
with coefficient $\mathsf{c}_{Q,\alpha,\beta}^{(R)}$ given by \eqref{c-SoV-ABA-r}.
This enables us to use the result of the action of $\barE_{m}^{\boldsymbol{\epsilon'},\boldsymbol{\epsilon}}(\alpha,\beta)$ on the gauge boundary Bethe states 
obtained in Theorem~\ref{th-act-boundary}. We recall that the latter was computed by decomposing the gauge boundary Bethe states into bulk ones, by computing the action of $\barE_{m}^{\boldsymbol{\epsilon'},\boldsymbol{\epsilon}}(\alpha,\beta)$ on bulk gauge Bethe states, and by reconstructing the result in terms of gauge boundary Bethe states. This was possible in a not so intricate way (similar to what was done in the diagonal case \cite{KitKMNST07}) thanks to the adequate choice of the gauge parameters in \eqref{Gauge.Basis-1}-\eqref{Gauge.Basis-2}.
It gives
\begin{multline}\label{block-2}
   \moy{\barE_{m}^{\boldsymbol{\epsilon'},\boldsymbol{\epsilon}}(\alpha,\beta)}=\mathsf{c}_{Q,\alpha,\beta}^{(R)} \, 
   \sum_{\mathsf{B}_{\boldsymbol{\epsilon,\epsilon'}}} 
    \mathcal{\widehat F}_{\mathsf{B}_{\boldsymbol{\epsilon,\epsilon'}}}(\{ \lambda_j\}_{j=1}^M,\{\xi_j^{(1)}\}_{j=1}^m | \beta)\
    \\
    \times
   \frac{\bra{Q}\,  \underline{\widehat{\mathcal B}}_{-,M+\tilde m_{\boldsymbol{\epsilon,\epsilon'}}}(\{ \lambda_i\}_{\substack{i=1\\ i\notin\mathsf{B}_{\boldsymbol{\epsilon,\epsilon'}}}}^{M+m}|\alpha-\beta+1-2\tilde m_{\boldsymbol{\epsilon,\epsilon'}})\, \ket{\eta, \alpha+\beta+N-2M-1}}{\moy{Q\,|\,Q}},
\end{multline}
with coefficient given by \eqref{coeff-act-boundary}.

The next step is to reconstruct the resulting gauge boundary Bethe states into separate states, using \eqref{equality-ref-state} and \eqref{separate-Bethe-r}:
\begin{align}\label{separate-back}
   &\underline{\widehat{\mathcal B}}_{-,M+\tilde m_{\boldsymbol{\epsilon,\epsilon'}}}(\{ \lambda_i\}_{\substack{i=1\\ i\notin\mathsf{B}_{\boldsymbol{\epsilon,\epsilon'}}}}^{M+m}|\alpha-\beta+1-2\tilde m_{\boldsymbol{\epsilon,\epsilon'}})\, \ket{\eta, \alpha+\beta+N-2M-1}
   \nonumber\\
   &\quad
   =\underline{\widehat{\mathcal B}}_{-,M+\tilde m_{\boldsymbol{\epsilon,\epsilon'}}}(\{ \lambda_i\}_{\substack{i=1\\ i\notin\mathsf{B}_{\boldsymbol{\epsilon,\epsilon'}}}}^{M+m}|\alpha-\beta+1-2\tilde m_{\boldsymbol{\epsilon,\epsilon'}})\,
    \ket{\Omega_{\alpha,\beta-2M+1}}
    \nonumber\\
    &\quad
    =\frac{1}{\mathsf{c}_{\bar Q_{\mathsf{B}_{\boldsymbol{\epsilon,\epsilon'}}},\alpha,\beta+2\tilde m_{\boldsymbol{\epsilon,\epsilon'}}}^{(R)}}
     \sum_{\mathbf{h}\in\{0,1\}^N}\prod_{n=1}^N \bar Q_{\mathsf{B}_{\boldsymbol{\epsilon,\epsilon'}}}(\xi_n^{(h_n)})\ V(\xi_1^{(h_1)},\ldots,\xi_N^{(h_N)})\ \ket{\mathbf{h},\alpha,\beta+1+2\tilde m_{\boldsymbol{\epsilon,\epsilon'}}},
\end{align}
in which $\bar Q_{\mathsf{B}_{\boldsymbol{\epsilon,\epsilon'}}}$ is the function
\begin{equation}\label{Qbar}
   \bar Q_{\mathsf{B}_{\boldsymbol{\epsilon,\epsilon'}}}(\lambda)
   =\prod_{j=1}^{M+2\tilde m_{\boldsymbol{\epsilon,\epsilon'}}}
   \ths(\lambda-\bar \lambda_j)\ths(\lambda+\bar \lambda_j),
\end{equation}
with
\begin{equation}\label{lambda-bar}
   \{\bar \lambda_1,\ldots,\bar \lambda_{M+2\tilde m_{\boldsymbol{\epsilon,\epsilon'}}}\}
   =\{\lambda_1,\ldots, \lambda_{M+m}\}\setminus\{\lambda_j\}_{j\in\mathsf{B}_{\boldsymbol{\epsilon,\epsilon'}}}.
\end{equation}
Note however that, when $\tilde m_{\boldsymbol{\epsilon,\epsilon'}}\not= 0$, the separate state \eqref{separate-back} is expressed in a SoV basis in which the parameter $\beta$ is shifted with respect to the original one. This is a problem that we already encountered in our previous works  \cite{NicT22,NicT23}  concerning the XXZ case: at the moment, we do not know how to compute in a compact way the scalar product of the resulting separate state \eqref{separate-back} with the original one $\bra{Q}$ in cases in which $\tilde m_{\boldsymbol{\epsilon,\epsilon'}}\not= 0$. This means that, as in \cite{NicT22,NicT23}, we have to restrict our study to the elementary blocks \eqref{moy-Em} for which $\tilde m_{\boldsymbol{\epsilon,\epsilon'}}= 0$, i.e. for which
\begin{equation}\label{cond-block}
   \sum_{r=1}^m(\epsilon'_r-\epsilon_r)=0.
\end{equation}
Then, under the condition \eqref{cond-block}, the expression \eqref{block-2} for the elementary block becomes
\begin{align}\label{block-3}
   \moy{\barE_{m}^{\boldsymbol{\epsilon'},\boldsymbol{\epsilon}}(\alpha,\beta)}
   &=  \sum_{\mathsf{B}_{\boldsymbol{\epsilon,\epsilon'}}} 
    \mathcal{\widehat F}_{\mathsf{B}_{\boldsymbol{\epsilon,\epsilon'}}}(\{ \lambda_j\}_{j=1}^M,\{\xi_j^{(1)}\}_{j=1}^m | \beta)\
    \frac{\mathsf{c}_{Q,\alpha,\beta}^{(R)}}{\mathsf{c}_{\bar Q_{\mathsf{B}_{\boldsymbol{\epsilon,\epsilon'}}},\alpha,\beta}^{(R)}}\,
   \frac{\moy{Q\, |\, \bar Q_{\mathsf{B}_{\boldsymbol{\epsilon,\epsilon'}}}} }{\moy{Q\,|\,Q}}
   \nonumber\\
   &= \sum_{\mathsf{B}_{\boldsymbol{\epsilon,\epsilon'}}} 
    \mathcal{\widehat F}_{\mathsf{B}_{\boldsymbol{\epsilon,\epsilon'}}}(\{ \lambda_j\}_{j=1}^M,\{\xi_j^{(1)}\}_{j=1}^m | \beta)\
   \prod_{j=1}^M\frac{\ths(2\bar\lambda_j-\eta)}{\ths(2\lambda_j-\eta)}\,
     \frac{\moy{Q\, |\, \bar Q_{\mathsf{B}_{\boldsymbol{\epsilon,\epsilon'}}}} }{\moy{Q\,|\,Q}},
\end{align}
and we can now use the determinant representation obtained in \cite{NicT24} for the ratio of scalar products of separate states:
\begin{align}\label{ratio-det}
   \prod_{j=1}^M\frac{\ths(2\bar\lambda_j-\eta)}{\ths(2\lambda_j-\eta)}\,
   \frac{\moy{Q\, |\, \bar Q_{\mathsf{B}_{\boldsymbol{\epsilon,\epsilon'}}}} }{\moy{Q\,|\,Q}}
   =\frac{V(\lambda_1,\ldots,\lambda_M)}{V(\bar\lambda_1,\ldots,\bar\lambda_M)}\,\prod_{j=1}^M\frac{\ths(2\lambda_j+\eta)}{\ths(2\bar\lambda_j+\eta)}\,
   \frac{\det_M\big[\mathcal{S}_Q(\boldsymbol{\bar\lambda},\boldsymbol{\lambda})\big] }{\det_M\big[ \mathcal{S}_Q(\boldsymbol{\lambda},\boldsymbol{\lambda})\big] }.
\end{align}
Here $\mathcal{S}_Q(\boldsymbol{\bar\lambda},\boldsymbol{\lambda})$ is the $M\times M$ matrix with elements
\begin{align}\label{mat-SQ}
  \left[\mathcal{S}_Q(\boldsymbol{\bar\lambda},\boldsymbol{\lambda})\right]_{i,j}
  &= \mathbf{A}_{\boldsymbol{\varepsilon}}(\bar\lambda_i)\, Q(\bar\lambda_i-\eta) \big[ t(\bar\lambda_i+\lambda_j-\eta/2)-t(\bar\lambda_i-\lambda_j-\eta/2)\big]
  \nonumber\\
  &\qquad
  -\mathbf{A}_{\boldsymbol{\varepsilon}}(-\bar\lambda_i)\, Q(\bar\lambda_i+\eta)\big[ t(\bar\lambda_i+\lambda_j+\eta/2)-t(\bar\lambda_i-\lambda_j+\eta/2)\big],
\end{align}
with in particular
\begin{align}\label{mat-Gaudin}
 &\left. \left[\mathcal{S}_Q(\boldsymbol{\bar\lambda},\boldsymbol{\lambda})\right]_{i,j}  \right|_{\bar\lambda_i=\lambda_i}
  =  \mathbf{A}_{\boldsymbol{\varepsilon}}(-\lambda_i)\, Q(\lambda_i+\eta)
  \nonumber\\
  &\hspace{2cm}  
     \times
     \bigg[-\delta_{i,j}\,\frac\partial{\partial\mu}\left(\log\frac{\mathbf{A}_{\boldsymbol{\varepsilon}}(\mu)\, Q(\mu-\eta)}{\mathbf{A}_{\boldsymbol{\varepsilon}}(-\mu)\, Q(\mu+\eta)}\right)_{\mu=\lambda_i}
     + K(\lambda_i-\lambda_j)-K(\lambda_i+\lambda_j)\bigg].
\end{align}
In these expressions, the functions $t$ and $K$ are defined as
\begin{align}\label{def-t}
   &t(\lambda)=\frac{\ths'(\lambda-\frac\eta 2)}{\ths(\lambda-\frac\eta 2)}-\frac{\ths'(\lambda+\frac\eta 2)}{\ths(\lambda+\frac\eta 2)},
   \\
 \label{def-K}
   &K(\lambda)= t(\lambda+\eta/2)+t(\lambda-\eta/2)=\frac{\ths'(\lambda-\eta)}{\ths(\lambda-\eta)}-\frac{\ths'(\lambda+\eta)}{\ths(\lambda+\eta)}.
\end{align}
%
If we now decompose $\{\lambda_i \}_{1\le i\le M}$ and $\{\bar \lambda_i \}_{1\le i\le M}$ as
\begin{align}
   \{\lambda_i \}_{1\le i\le M}=\{\lambda_i \}_{i\in\mathsf{\Lambda}_-}\cup\{\lambda_i \}_{i\in\mathsf{\Lambda}_+},
   \qquad
   \{\bar \lambda_i \}_{1\le i\le M}=\{\lambda_i \}_{i\in \mathsf{\Lambda}_-}\cup\{\xi_{j_i}^{(1)}\}_{i\in\mathsf{\Lambda}_+},  
\end{align}
we can rewrite \eqref{ratio-det} as
\begin{multline}\label{ratio-det-2}
   \prod_{j=1}^M\frac{\ths(2\bar\lambda_j-\eta)}{\ths(2\lambda_j-\eta)}\,
   \frac{\moy{Q\, |\, \bar Q_{\mathsf{B}_{\boldsymbol{\epsilon,\epsilon'}}}} }{\moy{Q\,|\,Q}}
   =\prod_{\substack{a,b\in\mathsf{\Lambda}_+ \\ a<b}}\frac{\ths(\lambda_a-\lambda_b)\ths(\lambda_a+\lambda_b)}{\ths(\xi^{(1)}_{j_a}-\xi^{(1)}_{j_b})\ths(\xi^{(1)}_{j_a}+\xi^{(1)}_{j_b})}
   \prod_{\substack{a\in\mathsf{\Lambda}_- \\ b\in\mathsf{\Lambda}_+}}\frac{\ths(\lambda_a-\lambda_b)\ths(\lambda_a+\lambda_b)}{\ths(\lambda_a-\xi^{(1)}_{j_b})\ths(\lambda_a+\xi^{(1)}_{j_b})}
   \\
   \times \prod_{\substack{a\in\mathsf{\Lambda}_+}}\frac{\ths(2\lambda_a+\eta)\, \mathbf{A}_{\boldsymbol{\varepsilon}}(-\xi^{(1)}_{j_a})\, Q(\xi^{(1)}_{j_a}+\eta)}
   {\ths(2\xi^{(1)}_{j_a}+\eta)\, \mathbf{A}_{\boldsymbol{\varepsilon}}(-\lambda_a)\, Q(\lambda_a+\eta)}\,
   \frac{\det_M\big[\mathcal{M}(\boldsymbol{\bar\lambda},\boldsymbol{\lambda})\big] }{\det_M\big[ \mathcal{N}(\boldsymbol{\lambda})\big] },
\end{multline}
with
\begin{align}
   &\big[ \mathcal{N}(\boldsymbol{\lambda})\big]_{a,b}= \delta_{a,b}\, \frac\partial{\partial\mu}\left(\log\frac{\mathbf{A}_{\boldsymbol{\varepsilon}}(\mu)\, Q(\mu-\eta)}{\mathbf{A}_{\boldsymbol{\varepsilon}}(-\mu)\, Q(\mu+\eta)}\right)_{\mu=\lambda_a}
   +K(\lambda_a+\lambda_b)-K(\lambda_a-\lambda_b),
   \label{Mat-N}\\
   &\big[\mathcal{M}(\boldsymbol{\bar\lambda},\boldsymbol{\lambda})\big]_{a,b}
   =t(\xi_{j_a}^{(1)}+\lambda_b+\eta/2)-t(\xi_{j_a}^{(1)}-\lambda_b+\eta/2)
   =t(\xi_{j_a}+\lambda_b)-t(\xi_{j_a}-\lambda_b).
   \label{Mat-M}
\end{align}

Hence we can write for the finite chain:
%
\begin{theorem}\label{th-mat-el-finite}
Let us suppose that the constraint \eqref{const-TQ} is satisfied for some integer $M$ and some choice of signs \eqref{choice-eps+-}, with $\epsilon_{\alpha_1^+}\epsilon_{\alpha_2^+}\epsilon_{\alpha_3^+}=\epsilon_{\alpha_1^-}\epsilon_{\alpha_2^-}\epsilon_{\alpha_3^-}=1$. 
Let $Q$ be a solution of the form \eqref{Q-form} of the functional $TQ$-equation \eqref{hom-TQ}, and let $\ket{Q}$ and $\bra{Q}$ be the corresponding separate state of the form \eqref{eigen-r-Skl}, respectively of the form \eqref{eigen-l-Skl}, in the Vertex-IRF SoV basis in which $\alpha$ and $\beta$ satisfy \eqref{cond-c+=0}-\eqref{cond-b+=0} and \eqref{cond-ref-states}, and in which $\mathsf{g}_-$ is fixed as in \eqref{fix-g-}.

Then, for any $\boldsymbol{\epsilon},\boldsymbol{\epsilon'}\in \{1,2\}^{m}$ satisfying \eqref{cond-block}, the mean value \eqref{moy-Em} in the state $\ket{Q}$  of the tensor product of gauge transformed local operators $\barE_{m}^{\boldsymbol{\epsilon'},\boldsymbol{\epsilon}}(\alpha,\beta)$ \eqref{op-E_m}, where $a_n,b_n,\bar a_n,\bar b_n$ are expressed in terms of $\alpha,\beta,\boldsymbol{\epsilon},
\boldsymbol{\epsilon'}$ as in \eqref{Gauge.Basis-1}-\eqref{Gauge.Basis-2}, is given by 
\begin{equation}\label{corr-finite}
  \moy{\barE_{m}^{\boldsymbol{\epsilon'},\boldsymbol{\epsilon}}(\alpha,\beta)}
   =\sum_{\text{\textsc{b}}_{1}=1}^{M}\ldots
\sum_{\text{\textsc{b}}_{s}=1}^{M}\sum_{\text{\textsc{b}}_{s+1}=1}^{M+m}%
\ldots \sum_{\text{\textsc{b}}_{m}=1}^{M+m}
\frac{H_{\{\text{\textsc{b}}_{j}\}}(\{\lambda \}|\alpha,\beta )}
       {\prod\limits_{1\leq l<k\leq m}\!\!\!\!\theta(\xi _{k}-\xi _{l})\,\theta (\xi _{k}+\xi _{l})},
\end{equation}
with 
\begin{multline} \label{finite-H}
H_{\{\text{\textsc{b}}_{j}\}}(\{\lambda \}|\alpha,\beta )
=\prod_{n=1}^{m}\frac{\theta (\xi _{n}+\eta\alpha)}{\theta (\eta b_{n})\theta (\xi_{n}+\eta(\alpha+1))}
\sum\limits_{\sigma _{\text{\textsc{b}}_{j}}}
\frac{(-1)^{s}\prod\limits_{i=1}^{m}\sigma _{\text{\textsc{b}}_{i}}\prod\limits_{i=1}^{m}
         \prod\limits_{j=1}^{m}\theta (\lambda _{\text{\textsc{b}}_{i}}^{\sigma }+\xi _{j}+\eta /2)}
       {\prod\limits_{1\leq i<j\leq m}\theta (\lambda _{\text{\textsc{b}}_{i}}^{\sigma }-\lambda _{\text{\textsc{b}}_{j}}^{\sigma }-\eta )
            \theta (\lambda _{\text{\textsc{b}}_{i}}^{\sigma}+\lambda _{\text{\textsc{b}}_{j}}^{\sigma }+\eta )}  
            \\
\times \prod\limits_{p=1}^{s}\bigg\{\theta (\lambda _{\text{\textsc{b}}_{p}}^{\sigma }-\xi _{i_{p}}^{(1)}-\eta(1+b_{i_{p}}))
\prod\limits_{k=1}^{i_{p}-1}\theta (\lambda _{\text{\textsc{b}}_{p}}^{\sigma }-\xi _{k}^{(1)})\prod\limits_{k=i_{p}+1}^{m}\!\!\theta
(\lambda _{\text{\textsc{b}}_{p}}^{\sigma }-\xi _{k}^{\left( 0\right) })\bigg\} 
            \\
\times \!\!\prod\limits_{p=s+1}^{m}\bigg\{\theta (\lambda _{\text{\textsc{b}}%
_{p}\,}^{\sigma }-\xi _{i_{p}}^{(1)}+\eta (1-\bar{b}_{i_{p}}))\prod%
\limits_{k=1}^{i_{p}-1}\theta (\lambda _{\text{\textsc{b}}_{p}}^{\sigma
}-\xi _{k}^{(1)})\prod\limits_{k=i_{p}+1}^{m}\!\!\theta (\lambda _{\text{%
\textsc{b}}_{p}}^{\sigma }-\xi _{k}^{(1)}+\eta )\bigg\} \\
\times \prod\limits_{k=1}^{m}\prod\limits_{\substack{ l=1 \\ l\neq i_{+}}}^{3}\frac{\theta (\xi _{k}+\epsilon _{\alpha _{l}^{+}}\alpha _{l}^{+})}{\theta (\lambda _{\text{\textsc{b}}_{k}}^{\sigma }+\eta /2+\epsilon _{\alpha_{l}^{+}}\alpha _{l}^{+})}\ \det_{m}\Omega .
\end{multline}
The subscript $i_+$ in \eqref{finite-H} is fixed as in \eqref{signs+}-\eqref{alpha-beta+} (i.e. such that $\alpha\eta= -\epsilon_{\alpha^+_{i_+}}\alpha^+_{i_+}$),   we have moreover used the notations \eqref{i_p-s}-\eqref{i_p-s'}, and the sum is performed over all $\sigma _{\text{\textsc{b}}_{j}}\in \{+,-\}$ for \textsc{b}$_{j}\leq M$, and $\sigma _{\text{\textsc{b}}_{j}}=1$ for \textsc{b}$_{j}>M$.
Finally,  the $m\times m$ matrix $\Omega $ is given in terms of \eqref{Mat-N} and \eqref{Mat-M} by
\begin{alignat}{2}
& \Omega _{lk}=-\delta _{N+m+1-\textsc{b}_{l},k},\quad & & \text{for \textsc{b}}%
_{l}>M, \label{mat-reduced-1}\\
& \Omega _{lk}=\sum_{a=1}^{M}\left[ \mathcal{N}^{-1}\right] _{\text{\textsc{b}}_{l},a}%
\mathcal{M}_{a,k}, 
\quad & & \text{for \textsc{b}}_{l}\leq M.\label{mat-reduced}
\end{alignat}
\end{theorem}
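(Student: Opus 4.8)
The plan is to assemble the ingredients already prepared in \eqref{block-1}--\eqref{ratio-det-2} and to carry out the remaining combinatorial and determinant bookkeeping, following closely the XXZ computation of \cite{NicT22,NicT23}. I would start from the expression \eqref{block-3}, which holds under the restriction \eqref{cond-block} that $\tilde m_{\boldsymbol{\epsilon,\epsilon'}}=0$: it is obtained by writing the ground state $\ket{Q}$ as a gauge boundary Bethe state via \eqref{separate-Bethe}, acting on it with $\barE_{m}^{\boldsymbol{\epsilon'},\boldsymbol{\epsilon}}(\alpha,\beta)$ by means of Theorem~\ref{th-act-boundary}, and reconstructing each resulting gauge boundary Bethe state as a separate state via \eqref{equality-ref-state}--\eqref{separate-Bethe-r}. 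Into \eqref{block-3} I would substitute the explicit coefficient \eqref{coeff-act-boundary} delivered by Theorem~\ref{th-act-boundary} and the determinant representation \eqref{ratio-det-2} for the ratio of scalar products of separate states (Theorem~4.1 of \cite{NicT24}, which applies here since it is valid for separate states built on the present Vertex-IRF SoV basis and for any $Q$ of the form \eqref{Q-form}, not only for solutions of the $TQ$-equation).

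The next step is purely algebraic. Using the constraint \eqref{const-TQ} with the sign choice \eqref{choice-eps+-}, the relation \eqref{relation-gamma+delta-alpha+beta} between $\gamma,\delta$ and $\alpha,\beta$, and the definitions \eqref{def-A-gauge}--\eqref{def-A_eps} of $\mathbf{A}_{\boldsymbol{\varepsilon}}$, I would rewrite all the $\mathbf{A}_{\boldsymbol{\varepsilon}}$, $\mathsf{a}_\pm$, $a$, $d$ and $H^-_{\boldsymbol{\sigma}}$ prefactors occurring in \eqref{coeff-act-boundary} and in \eqref{ratio-det-2}. Most of the Vandermonde-type products and of the $H^-$ ratios relating the index sets $\mathsf{\Lambda}_\pm,\mathsf{\Gamma}_\pm$ then cancel against one another; the residual dependence on the `$-$' boundary parameters gets traded, through \eqref{choice-eps+-}, for the `$+$' ones, yielding the factor $\prod_{k}\prod_{l\neq i_+}\theta(\xi_k+\epsilon_{\alpha_l^+}\alpha_l^+)/\theta(\lambda_{\text{\textsc{b}}_k}^{\sigma}+\tfrac\eta2+\epsilon_{\alpha_l^+}\alpha_l^+)$ in \eqref{finite-H}. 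The prefactor $\prod_n \theta(\xi_n+\eta\alpha)/[\theta(\eta b_n)\theta(\xi_n+\eta(\alpha+1))]$ comes straight from the first line of \eqref{coeff-act-boundary}, while the products over $p\leq s$ and $p>s$, together with the global sign $(-1)^s$ and the remaining theta-function numerator, come from the last lines of \eqref{coeff-act-boundary} after using $\xi_k^{(0)}=\xi_k+\tfrac\eta2$ and $\xi_k^{(1)}=\xi_k-\tfrac\eta2$.

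The final, and I expect most delicate, step is to show that what is left --- the ratio $\det_M[\mathcal{M}(\boldsymbol{\bar\lambda},\boldsymbol{\lambda})]/\det_M[\mathcal{N}(\boldsymbol{\lambda})]$ of \eqref{ratio-det-2}, together with the sum over the signs $\sigma_j$ for $j\in\mathsf{\Lambda}_+$ and the leftover Vandermonde and $\theta(2\lambda+\eta)/\theta(2\xi^{(1)}+\eta)$ factors --- collapses into the single $m\times m$ determinant $\det_m\Omega$ with $\Omega$ given by \eqref{mat-reduced-1}--\eqref{mat-reduced}. Since $\tilde m_{\boldsymbol{\epsilon,\epsilon'}}=0$, the set $\mathsf{B}_{\boldsymbol{\epsilon,\epsilon'}}$ has exactly $m$ elements and $|\mathsf{\Lambda}_+|+|\mathsf{\Gamma}_-|=m$; the rows of $\Omega$ indexed by $\text{\textsc{b}}_l>M$ are, up to sign, unit vectors, so expanding $\det_m\Omega$ along them reduces it to a minor of size $|\mathsf{\Lambda}_+|$ of $\mathcal{N}^{-1}\mathcal{M}$, which in turn, by a Schur-complement argument (equivalently the matrix-determinant lemma for a low-rank modification of $\mathcal{N}$), equals $\det_M[\mathcal{M}]/\det_M[\mathcal{N}]$ up to the Vandermonde factors already extracted in \eqref{ratio-det-2} --- this is the elliptic analog of the corresponding reduction in \cite{NicT23}. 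Carrying this out and collecting all prefactors gives \eqref{corr-finite}--\eqref{finite-H}. The main obstacle is precisely this last reduction and the attendant sign bookkeeping (the $\sigma_j$ sums, the $(-1)^s$, and the orientations of the $\mathsf{\Gamma}_\pm$ index sets); the elliptic theta-function identities needed along the way are only the quasi-periodicity relations and the addition formula \eqref{YbarY}, already used repeatedly in Sections~\ref{sec-gauge}--\ref{sec-act}, so no genuinely new analytic input is required compared with the trigonometric case.
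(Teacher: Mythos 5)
Your proposal follows essentially the same route as the paper, whose own proof of Theorem~\ref{th-mat-el-finite} is precisely the chain of steps \eqref{block-1}--\eqref{block-3} (rewrite $\ket{Q}$ via \eqref{separate-Bethe}, act with Theorem~\ref{th-act-boundary}, reconstruct separate states under \eqref{cond-block}) combined with the scalar-product formula \eqref{ratio-det}--\eqref{ratio-det-2} of \cite{NicT24}, after which the prefactors are collected and the determinant ratio is recast as $\det_m\Omega$ exactly by the unit-row expansion you describe (as in the XXZ case \cite{NicT23}). The only caveat is a point of phrasing: the Slavnov-type formula requires one of the two separate states to be associated with a $TQ$-solution — here $\bra{Q}$ — while only the second argument $\bar Q_{\mathsf{B}_{\boldsymbol{\epsilon,\epsilon'}}}$ may be an arbitrary function of the form \eqref{Q-form}, which is how it is used in your argument, so the proof stands.
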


\subsection{Elementary blocks in the half-infinite chain limit}

Let us now consider the half-infinite chain limit of the matrix elements \eqref{corr-finite} in which $\ket{Q}$ is the ground state of the chain. For simplicity, we restrict here our study to the domain in which
\begin{equation}\label{basic-domain}
   \omega,\eta\in i\mathbb{R},\qquad \text{with}\quad 0<i\eta <\pi\Im(\omega),
\end{equation}
for which we have in our notations
\begin{equation}\label{Jxyz-basicdomain}
    |J_x|<J_y<J_z.
\end{equation}
In this domain, real boundary fields can be obtained (modulo the quasi-periods $\pi$ and $\pi\omega$) from boundary parameters of the form 
\begin{equation}
   \alpha_1^\pm=i\beta_1^\pm,\qquad \alpha_2^\pm=i\beta_2^\pm+\tfrac\pi 2,\qquad \alpha_3^\pm=\beta_3^\pm\pm\tfrac{\pi\omega} 2,
   \qquad \text{with} \quad \beta_1^\pm,\beta_2^\pm,\beta_3^\pm\in \mathbb{R}.
\end{equation}
%
Note however that the other domains can be treated completely similarly.

As in our previous article concerning the XXZ chain \cite{NicT23}, we also restrict our study to configurations of the boundary fields for which the ground state is in the sector $M=\lfloor \tfrac N 2\rfloor -k$ (with $M$ satisfying the constraint \eqref{const-TQ} and $k$ remaining finite in the thermodynamic limit), and characterized in the homogeneous and thermodynamic limit by an infinite number of Bethe real roots which condensate on the interval $(0,\tfrac\pi 2)$ towards some density function $\rho(\lambda)$. The latter is solution of the integral equation
\begin{equation}\label{int-rho}
\rho (\lambda )+\frac{i}{2\pi}\int_{0}^{\tfrac\pi 2 }\big[K(\lambda -\mu )+K(\lambda+\mu )\big]\,\rho (\mu )\,d\mu 
=\frac{i t(\lambda )}{\pi},  
\end{equation}
in terms of the functions \eqref{def-t} and \eqref{def-K}. This integral equation can be extended by parity on the symmetric interval $(-\tfrac\pi 2,\tfrac\pi 2)$,
\begin{equation} \label{int-rho-ext}
\rho (\lambda )+\frac{i}{2\pi}\int_{\tfrac\pi 2}^{\tfrac\pi 2 } K(\lambda -\mu )\,\rho (\mu )\,d\mu 
=\frac{i t(\lambda )}{\pi}, 
\end{equation}
and can be solved by Fourier series, which gives
\begin{equation}\label{rho}
   \rho(\lambda) 
   = \frac{1}{ \pi} \sum_{k \in \mathbb{Z}} \frac{e^{2 i k \lambda}}{\cosh(k|\eta|)}
   =\frac{1}{\pi} 
   \frac{\theta'_1(0,q)}{\theta_2(0,q)}
   \frac{\theta_3(\lambda,q)}{\theta_4(\lambda,q)},
   \qquad q=e^{-i\eta},
\end{equation}
where $\theta_i(\lambda,q)$, $i\in\{1,2,3,4\}$, are the Theta functions of nome $q$. Note that, as in the XXZ case, this density function is just twice the density function for the periodic chain \cite{FadT79}.

As in the XXZ case \cite{KapS96,GriDT19}, and depending on the configuration of the boundary magnetic fields, the set of Bethe roots for the ground state may also contain some boundary roots $\check\lambda_{\ell}^\pm$, i.e. some isolated complex Bethe roots which tend, in the large $N$ limit, to one of the zeroes or poles of the boundary factor in the Bethe equations:
\begin{equation}\label{BR}
    \check\lambda_{\ell}^\pm 
    = -\tfrac\eta 2-\epsilon _{\alpha_\ell^\pm}\alpha _\ell^\pm + \check{\epsilon}_{\ell}^\pm, 
\end{equation}
with $\check{\epsilon}_{\ell}^\pm$ being exponentially small in $N$.
These boundary roots are of importance for the computation of the correlation functions since one single boundary root may provide contributions of order $1$ in the half-infinite chain limit, see \cite{KitKMNST07,GriDT19,NicT22,NicT23} and below.

Taking the half-infinite chain limit $N\rightarrow \infty $ of the expression \eqref{corr-finite} is then completely similar as in the XXZ case \cite{KitKMNST07,NicT22,NicT23}.
In this limit, the ratio of determinants can be computed as in \cite{KitKMNST07,NicT22,NicT23}. 
For a real root $\lambda _j$, the element $[\mathcal{N}(\boldsymbol{\lambda})]_{j,k}$ of the matrix \eqref{Mat-N} behaves as
\begin{equation}\label{mat-N-thermo}
  [\mathcal{N}(\boldsymbol{\lambda})]_{j,k} 
  =2i\pi N\,\delta _{j,k}\,\Big[ \rho (\lambda _{j})+O\Big(\frac{1}{N}\Big)\Big]
  +\big[K(\lambda_{j}+\lambda _{k})-K(\lambda _{j}-\lambda _{k})\big],
\end{equation}
so that
\begin{equation}
\sum_{\substack{ p=1 \\ \lambda _{p}\in (0,\frac\pi 2 )}}^{M}\mathcal{N}_{j,p}\,%
\frac{\rho (\lambda _{p}-\xi _{k})-\rho (\lambda _{p}+\xi _{k})}{2N\rho
(\lambda _{p})}\underset{N\rightarrow \infty }{\longrightarrow }[t(\xi
_{k}+\lambda _{j})-t(\xi _{k}-\lambda _{j})],
\end{equation}
in which we have used the integral equation \eqref{int-rho-ext}.
If instead $\lambda_j=\check\lambda_{\ell}^\sigma$ ($\sigma=\pm$) is a boundary root of the form \eqref{BR}\footnote{For simplicity, we consider here only the generic case in which the boundary parameters $\epsilon_ {\alpha_i^\pm}\alpha _i^\pm$ are pairwise distinct. In cases in which two (or more) of these boundary parameters coincide, \eqref{mat-N-thermo-BR} becomes
\begin{equation} \label{mat-N-thermo-BR-mult}
   [\mathcal{N}(\boldsymbol{\lambda})]_{j,k}=\frac{\check{n}_\ell^\sigma}{\check{\epsilon}_\ell^{\sigma }}\,\left[ \delta _{j,k}+O(N\check{\epsilon}_{\sigma })\right] ,
\end{equation}
in which $\check{n}_\ell^\sigma$ is the number of boundary roots coinciding with $\epsilon_ {\alpha_\ell^\sigma}\alpha_\ell^\sigma$, and 
\begin{equation}
\Omega_{a,b}\underset{N\rightarrow \infty }{\sim }
 i\pi \,\frac{\check{\epsilon}_\ell^\sigma }{\check{n}_\ell^\sigma}
       \big[\rho (\lambda_{\text{\textsc{b}}_a}-\xi _{j_{b}})-\rho (\lambda _{\text{\textsc{b}}_a}+\xi _{j_{b}})\big]
\end{equation}
if $\lambda _{\text{\textsc{b}}_a}=\check\lambda_{\ell}^\sigma$.
},
\begin{equation} \label{mat-N-thermo-BR}
   [\mathcal{N}(\boldsymbol{\lambda})]_{j,k}
   =-\frac{1}{\check{\epsilon}_\ell^{\sigma }}\,\left[ \delta _{j,k}+O(N\check{\epsilon}_{\sigma })\right] .
\end{equation}
%
Therefore, the elements $\Omega_{a,b}$ of the matrix \eqref{mat-reduced} for $\text{\textsc{b}}_a\leq M$ behaves when $N\to \infty$ as
\begin{equation}
\Omega_{a,b}\underset{N\rightarrow \infty }{\sim } \mathcal{R}(\lambda _{\text{\textsc{b}}_a},\xi _{j_{b}}),
\end{equation}
in which
\begin{equation}
\mathcal{R}(\lambda _{\text{\textsc{b}}_a},\xi _{j_{b}})=
\begin{cases}
\displaystyle\frac{\rho (\lambda _{\text{\textsc{b}}_a}-\xi _{j_{b}})-\rho (\lambda _{\text{\textsc{b}}_a}+\xi_{j_{b}})}{2N\rho (\lambda _{\text{\textsc{b}}_a})}
\quad  & \text{if }\lambda _{\text{\textsc{b}}_a}\in(0,\tfrac\pi 2 ),
\vspace{1mm} \\ 
\displaystyle -i\pi \,\check{\epsilon}_\ell^\sigma 
       \big[\rho (\lambda_{\text{\textsc{b}}_a}-\xi _{j_{b}})-\rho (\lambda _{\text{\textsc{b}}_a}+\xi _{j_{b}})\big]
       \quad  & \text{if }\lambda _{\text{\textsc{b}}_a}=\check\lambda_{\ell}^\sigma.
\end{cases}
\end{equation}
Note that we have
\begin{equation}\label{sym-det}
   \sigma_{\text{\textsc{b}}_a}\mathcal{R}(\lambda _{\text{\textsc{b}}_a},\xi _{j_{b}})=\mathcal{R}(\lambda^\sigma_{\text{\textsc{b}}_a},\xi _{j_{b}}).
\end{equation}

 In the half-infinite chain limit, the sum over real Bethe roots in \eqref{corr-finite} (i.e. the sum over the indices $\textsc{b}_{j}\le M$ with $\lambda_{\textsc{b}_{j}}$ real) becomes integrals over the density function according to the rule
%
\begin{equation}
    \frac{1}{N}\sum_{\substack{\text{\textsc{b}}_n=1 \\ \lambda_{\text{\textsc{b}}_n}\in (0,\frac\pi 2)}}^M \sum_{\sigma_{\text{\textsc{b}}_n}=\pm} 
    f(\lambda_{\text{\textsc{b}}_n}^{\sigma })
    \underset{N\rightarrow \infty }{\longrightarrow }
    \int_0^{\frac\pi 2 }\rho (\lambda)\sum_{\sigma ={\pm }} 
    f(\lambda ^{\sigma })\, d\lambda 
    =\int_{-\frac\pi 2}^{\frac\pi 2}f(\lambda )\,\rho (\lambda )\,d\lambda  .
\end{equation}
Moreover, noticing that
\begin{equation}
2i\pi \,\text{Res}\,\rho (\lambda -\xi )_{\,\vrule height13ptdepth1pt\>{\lambda =\xi -\eta /2}\!}=2,
\end{equation}
we can write the sum over the indices $\textsc{b}_{j}>M$ in \eqref{corr-finite} as contour integrals around the points $\xi_n^{(1)}$ (with index $-1$ due to the sign in \eqref{mat-reduced-1}).
Finally, if one of the roots $\lambda_{\text{\textsc{b}}_j}$ is a boundary root of the form $\check\lambda_{\ell}^\sigma$ \eqref{BR}, the corresponding row in the matrix $\Omega$ is proportional to $\check{\epsilon}_\ell^\sigma$ which is exponentially small in $N$. This means that the corresponding term in the sum in \eqref{corr-finite} do not contribute to the thermodynamic limit, except if this vanishing in $\check{\epsilon}_\ell^\sigma$ is compensated by a prefactor behaving in $1/\check\epsilon_{\ell}^\sigma$. This is the case if $(\ell,\sigma)\in\{(k,+)\}_{k\not= i_+}$, due to the pole in $-\epsilon_{\alpha_\ell^+}\alpha_\ell^+ -\eta /2$ ($\ell\not= i_+$) in the last line of the prefactor \eqref{finite-H}:
\begin{equation}
\frac{1}{\theta (\check{\lambda}_+ +\epsilon_{\alpha_\ell^+}\alpha_\ell^+ +\eta /2)}
\underset{N\rightarrow \infty }{\sim }
\frac{1}{\check\epsilon_{\ell}^+\ths'(0)}.
\end{equation}
In that case, the final contribution is of order 1, and can be written as a contour integral around the point $-\epsilon_{\alpha_\ell^+}\alpha_\ell^+ -\eta /2$  ($\ell\not= i_+$) with an index $-1$.

Hence, the ground state matrix elements of the quasi-local operators \eqref{op-E_m} satisfying the
condition \eqref{cond-block} can be written in the thermodynamic limit as 
\begin{multline}\label{result-thermo}
   \moy{\barE_{m}^{\boldsymbol{\epsilon'},\boldsymbol{\epsilon}}(\alpha,\beta)}
   =\prod_{n=1}^{m}\frac{\theta (\xi _{n}+\eta\alpha)}{\theta (\eta b_{n})\theta (\xi_{n}+\eta(1+\alpha))}\,
   \frac{(-1)^{s}}{\prod\limits_{j<i}\ths (\xi _{i}-\xi _{j})\prod\limits_{i\leq j}\ths(\xi _{i}+\xi _{j})}   
   \\
   \times 
   \int_{\mathcal{C}}\prod_{j=1}^{s}d\lambda _{j}\ \int_{\mathcal{C}_{\boldsymbol{\xi}}}\prod_{j=s+1}^{m}\!\!d\lambda _{j}\ 
   H_{m}(\{\lambda_n\}_{n=1}^{M};\{\xi _{k}\}_{k=1}^{m})\ \det_{1\leq j,k\leq m}\big[\Phi
(\lambda _{j},\xi _{k})\big],
\end{multline}
in which
\begin{equation}\label{mat-Phi}
\Phi (\lambda _{j},\xi _{k})=\frac 12 \big[\rho (\lambda _{j}-\xi_{k})-\rho (\lambda _{j}+\xi _{k})\big],  
\end{equation}
and 
\begin{multline}\label{H-thermo}
H_{m}(\{\lambda _n\}_{n=1}^{M};\{\xi _{k}\}_{k=1}^{m})
=\frac{\prod\limits_{j=1}^{m}\prod\limits_{k=1}^{m}\ths (\lambda _{j}+\xi_{k}+\eta /2)}
   {\!\!\!\!\prod\limits_{1\leq i<j\leq m}\!\!\!\!\ths (\lambda_{i}-\lambda _{j}-\eta )\,\ths (\lambda _{i}+\lambda _{j}+\eta )}
 \\
\times \prod\limits_{p=1}^{s}\bigg\{\theta (\lambda _{p}-\xi
_{i_{p}}^{(1)}-\eta (1+b_{i_{p}}))\prod\limits_{k=1}^{i_{p}-1}\theta
(\lambda _{p}-\xi _{k}^{(1)})\prod\limits_{k=i_{p}+1}^{m}\!\!\theta (\lambda
_{p}-\xi _{k}^{(1)}-\eta )\bigg\} \\
\times \!\!\prod\limits_{p=s+1}^{m}\bigg\{\theta (\lambda _{p}-\xi
_{i_{p}}^{(1)}+\eta (1-\bar{b}_{i_{p}}))\prod\limits_{k=1}^{i_{p}-1}\theta
(\lambda _{p}-\xi _{k}^{(1)})\prod\limits_{k=i_{p}+1}^{m}\!\!\theta (\lambda
_{p}-\xi _{k}^{(1)}+\eta )\bigg\} \\
\times 
 \prod\limits_{k=1}^{m}\prod\limits_{\substack{ \ell=1 \\ \ell\neq i_+}}^3
 \frac{\theta (\xi_k+\epsilon_{\alpha_\ell^+}\alpha_\ell^+)}{\theta (\lambda_k+\eta /2+\epsilon _{\alpha_\ell^+}\alpha _\ell^+)}.
\end{multline}
The integration contours are defined in the following way. Let $\Lambda_{\text{BR}}$ be the sub-set of complex boundary roots contained in the set of Bethe roots for the ground state, let $\bar\Lambda_{\text{BR}}$ the set of their corresponding limiting values in the thermodynamic limit, and let $\bar\Lambda_{\text{BR}^+,i_+}=\bar\Lambda_{\text{BR}}\cap\{-\eta /2-\epsilon _{\alpha_\ell^+}\alpha _\ell^+\}_{\ell\not= i_+}$. Then
\begin{equation}
   \mathcal{C}=[-\tfrac\pi 2,\tfrac\pi 2] \cup \Gamma^-(\bar\Lambda_{\text{BR}^+,i_+}),
\end{equation}
in which $\Gamma^-(\bar\Lambda_{\text{BR}^+,i_+})$ encircles, with index $-1$, all the elements of $\bar\Lambda_{\text{BR}^+,i_+}$ (and no other poles), whereas
\begin{equation}\label{C-xi}
   \mathcal{C}_{\boldsymbol{\xi}} =\mathcal{C}\cup \Gamma^- (\xi_1^{(1)},\ldots,\xi_m^{(1)}),
\end{equation}
where again $\Gamma^- (\xi_1^{(1)},\ldots,\xi_m^{(1)})$ denotes a contour surrounding the points of $\xi_1^{(1)},\ldots,\xi_m^{(1)}$  with index $-1$, all other poles of the integrand being outside.

\section{Conclusion}
\label{sec-conclusion}

In this paper, we have tackled the longstanding open problem of computing correlation functions for the XYZ quantum spin chain from first principle, i.e. starting from its integrable characterization on the finite lattice. More precisely, we have computed some of the elementary building blocks for these correlation  functions at zero-temperature.

This has been done for the open chain with boundary fields, imposing one single constraint on the six boundary parameters. We have thus extended to the XYZ case our approach and results previously obtained for the XXX/XXZ open spin chains under similar types of boundary conditions. The main motivations to impose this boundary constraint are also the same as in the XXZ chain. First, it allows for some of the eigenvalues and eigenstates of the transfer matrix to be described by solutions to usual Bethe equations, this description being conjectured to be complete through the combination of two related sectors  \cite{NepR03,NepR04,YanZ06}. For our purpose of computing zero-temperature correlation functions, we in fact only need the ground state to be described in such a way, and we suppose that it is in a sector close to half-filling. Second, the original scalar products of separate states in the SoV framework admit a reformulation in terms of generalized Slavnov’s determinants if one of the separate states is an eigenstate associated to a solution of these Bethe equations \cite{NicT24}. Finally, as for the XXZ case \cite{NicT22,NicT23}, we had to restrict ourselves to elementary building blocks involving local operators whose action on generalized Bethe states produce linear combinations of similar types of states, i.e. with the same number of gauged $B$-operators. This is due to the fact that only these types of states can easily be rewritten as separate states on the original SoV basis, which allows us to use our scalar product formula to compute matrix elements of local operators. 

An interesting feature of our multiple integral representation for the correlation functions is that, once the basis of local operators is appropriately chosen, the results for the XXX, XXZ and XYZ case have a universal structure. In particular, the integrand factorizes in a size $m$-determinant for a $m$-sites local operator, which has the same form when expressed in terms of the thermodynamic density of the ground state Bethe roots for these three models, times a function that has the same factorized form when expressed in terms of rational, hyperbolic and elliptic functions, respectively, for the XXX, XXZ and XYZ case. This is naturally expected if one considers that, starting from the XYZ model, one should obtain the XXZ and XXX models under special limits. The other interesting point is that these integrands only explicitly depend on two of the boundary parameters at the first site of the chain. Hence, all the dependence on the other boundary parameters is hidden in the solutions of the Bethe equations used to describe the ground state, which, in our results, is reflected in the type of contours we should choose for the multiple integrals in the thermodynamic limit.  

Several interesting questions remain to be studied. 

One of them is the exact description of the ground state, and in particular of the different types of complex Bethe roots it involves,  according to the different configurations of the boundary fields. This problem could in principle be addressed by enlarging the detailed analytical study of the ground state thermodynamic limit performed in \cite{GriDT19} in the diaonal XXZ case.
It would also be desirable to have an analytical proof of the completeness conjecture of \cite{NepR03,NepR04,YanZ06}.

 Another important question is the computation of the missing elementary building blocks, i.e. those involving local operators that change the number of $B$-operators when acting on generalized Bethe states. This is a limitation that we already encountered in the XXZ case, and that we should therefore try to overcome first in this case. From a technical point of view, this may require an extension of our current scalar product formula. Some insight on the results may be achieved by the consideration of the simpler XXX case, for which no such limitations on local operators appear \cite{Nic21}.


Finally, it would of course be very interesting to extend the computation of correlation functions to the case of completely general and unconstrained boundary fields. This remains a challenging open problem even for the XXX and XXZ spin chains. The question in this case is to have a description of the ground state which would be appropriate for the consideration of the scalar products and correlation functions  in the thermodynamic limit. Since a description of the spectrum in terms of usual $TQ$-equation is still missing in this case, one could possibly try to build on the solutions of the so-called inhomogeneous $TQ$-equations \cite{CaoYSW13a,CaoYSW13b,CaoYSW14}, which in the XYZ case requires in fact the use of two $Q$-functions. The fact that the SoV method applies for the most general boundary conditions allows to prove the completeness of such characterizations \cite{KitMN14}, but it does not seem to lead to a convenient description of the ground state in terms of well-organized Bethe roots in the thermodynamic limit. Alternatively, one could possibly try to rely on the new description of the spectrum proposed more recently in \cite{QiaCYSW21,XinCYW24}. The latter is however very different from the usual description in terms of Bethe roots, and it is still a completely open question on whether it could be used for the computation of correlation functions.

\section*{Acknowledgments}

G. N. is supported by CNRS and Laboratoire de Physique, ENS-Lyon. 
V. T. is supported by CNRS.

\appendix

\section{Useful properties of the bulk gauged Yang-Baxter generators}
\label{app-bulk-gauge}

We gather here some properties of the elements of the bulk gauge monodromy matrix \eqref{def-Mgauge},
\begin{align}\label{M-gauge}
  M(\lambda|(\alpha,\beta),(\gamma,\delta))
  &=S^{-1}(-\lambda-\eta/2|\alpha,\beta)\, M(\lambda)\, S(-\lambda-\eta/2|\gamma,\delta)
  \nonumber\\
  &=\frac{1}{\det S(-\lambda-\eta/2|\alpha,\beta)}
       \begin{pmatrix}
       A(\lambda|\alpha-\beta,\gamma+\delta) & B(\lambda|\alpha-\beta,\gamma-\delta)\\
       C(\lambda|\alpha+\beta,\gamma+\delta) & D(\lambda|\alpha+\beta,\gamma-\delta)
       \end{pmatrix}
       \nonumber\\
  &=\frac{1}{\det S(-\lambda-\eta/2|\alpha,\beta)}    
       \begin{pmatrix}
       \Theta (\lambda |\alpha -\beta ,\gamma +\delta ) & \Theta (\lambda |\alpha-\beta ,\gamma -\delta ) \\ 
      -\Theta (\lambda |\alpha +\beta ,\gamma +\delta ) & -\Theta (\lambda |\alpha+\beta ,\gamma -\delta )
\end{pmatrix},
\end{align}
in which
\begin{equation}\label{def-op-Theta}
    \Theta (\lambda |x,y)=\bar Y_x(-\lambda-\eta/2)\, M(\lambda)\, Y_y(-\lambda-\eta/2).
\end{equation}
%

\subsection{Commutation relations}

We have, for any values of the spectral parameters $\lambda$ and $\mu$ and of the gauge parameters $x,y,z$:
\begin{equation}\label{Comm-Theta1}
\Theta (\lambda |x+1,y-1)\, \Theta (\mu |x,y)
=\Theta (\mu |x+1,y-1)\, \Theta(\lambda |x,y),
\end{equation}
and 
\begin{multline}\label{Comm-Theta2}
\Theta (\lambda |x+1,y+1)\, \Theta (\mu |x,z)
 =\frac{\ths (\eta \,\frac{y-z}{2})\, \ths (\lambda -\mu +\eta )}{\ths (\eta (\frac{y-z}{2}-1))\,\ths (\lambda -\mu )}\,
\Theta (\mu |x+1,z+1)\, \Theta (\lambda |x,y)  
 \\
 -\frac{\ths (\eta) \, \ths (\lambda -\mu +\frac{y-z}{2}\eta )}{\ths (\eta (\frac{y-z}{2}-1))\, \ths (\lambda-\mu )}\,
\Theta (\lambda |x+1,z+1)\, \Theta (\mu|x,y),
\end{multline}
%
%
\begin{multline}\label{Comm-Theta3}
\Theta (\lambda |x-1,y-1)\, \Theta (\mu |z,y)
 =\frac{\ths (\eta \frac{x-z}{2})\, \ths (\lambda -\mu -\eta )}{\ths (\eta (\frac{x-z}{2}+1))\, \ths (\lambda-\mu )}\,
\Theta (\mu |z-1,y-1)\, \Theta (\lambda |x,y)   \\
 +\frac{\ths( \eta) \, \ths (\lambda -\mu+\frac{x-z}{2}\eta )}{ \ths (\eta (\frac{x-z}{2}+1))\, \ths (\lambda-\mu )}\, 
\Theta (\lambda |z-1,y-1)\,\Theta (\mu|x,y),
\end{multline}
which can be rewritten more explicitly in terms of the $A,B,C,D$ elements of \eqref{M-gauge} as
\begin{alignat}{2}
  &A(\lambda |x +1,y -1)\, A(\mu |x ,y )  &&=A(\mu | x +1,y -1)\, A(\lambda |x ,y ) ,  
  \label{Comm-AA}\\
  &B(\lambda |x +1,y -1)\, B(\mu |x ,y )  &&=B(\mu |x +1,y -1)\,  B(\lambda |x ,y ) ,
  \label{Comm-BB}\\
  &C(\lambda |x +1,y -1)\,  C(\mu |x , y )  &&=C(\mu |x +1,y -1)\,  C(\lambda |x ,y ) ,
  \label{Comm-CC}\\
  &D(\lambda |x +1,y -1)\,  D(\mu |x,y )   &&=D(\mu |x +1,y -1)\,  D(\lambda |x ,y ).
  \label{Comm-DD}
\end{alignat}
and
\begin{multline}\label{Comm-AB}
   A(\lambda |x+1,y+1)\, B(\mu |x,z)
   =\frac{\ths(\eta\frac{y-z}2)\,\ths(\lambda-\mu+\eta)}{\ths(\eta(\frac{y-z}2-1))\,\ths(\lambda-\mu)}\,
   B(\mu|x+1,z+1)\, A(\lambda |x,y)
   \\
   -\frac{\ths(\eta)\,\ths(\lambda-\mu+\eta\frac{y-z}2)}{\ths(\eta(\frac{y-z}2-1))\,\ths(\lambda-\mu)}\,
   B(\lambda |x+1,z+1)\, A(\mu |x,y),
\end{multline}
\begin{multline}\label{Comm-DB}
   D(\lambda|x-1,y-1)\, B(\mu|z,y)
   =\frac{\ths(\lambda-\mu-\eta)\,\ths(\eta\frac{x-z}2)}{\ths(\lambda-\mu)\,\ths(\eta(\frac{x-z}2+1))}\,
   B(\mu|z-1,y-1)\, D(\lambda|x,y)
   \\
   +\frac{\ths(\eta)\,\ths(\lambda-\mu+\eta\frac{x-z}2)}{\ths(\eta(\frac{x-z}2+1))\,\ths(\lambda-\mu)}\,
   B(\lambda|z-1,y-1)\, D(\mu|x,y).
\end{multline}

\subsection{Action on the reference state $\ket{\eta,x}$}

\begin{align}
   &\Theta (\lambda |x,x-N)\, \ket{\eta ,x} =0,
   \label{actTheta-ref1}\\
   &\Theta (\lambda |y,x-N)\, \ket{\eta ,x}
    = \bar Y_y(-\lambda-\eta/2)\, Y_x(-\lambda-\eta/2)\, d(\lambda)\,  \ket{\eta ,x-1}, 
    \label{actTheta-ref2}\\
    &\Theta (\lambda |x,y)\, \ket{\eta ,x}
     = -  \bar Y_{x-N}(-\lambda-\eta/2)\, Y_y(-\lambda-\eta/2)\, a(\lambda)\,    \ket{\eta ,x+1}.
     \label{actTheta-ref3}
\end{align}
i.e.
\begin{align}   
   &C(\lambda|x,x-N)\, \ket{\eta,x}=0, \label{actC-ref}\\
   &A(\lambda|y,x-N)\, \ket{\eta,x}
   =\bar Y_{y}(-\lambda-\eta/2)\, Y_{x}(-\lambda-\eta/2)\, d(\lambda)\,\ket{\eta,x-1},\label{actA-ref}\\
   &D(\lambda|x,y)\, \ket{\eta,x}
   =-  \bar Y_{x-N}(-\lambda-\eta/2)\, Y_{y}(-\lambda-\eta/2)\, a(\lambda)\,\ket{\eta,x+1}.\label{actD-ref}
\end{align}
%

\subsection{Action on a Bethe state}

The action of the gauged bulk monodromy operators on the generalised bulk Bethe states takes the following form:
\begin{multline}\label{actA-Bethe}
  A(\lambda_{M+1} |x ,y +M)\ \underline{B}_M(\{\lambda_j\}_{j=1}^M|x -1,z +1)\,  \ket{\eta ,y+N}
  \\
    =   \sum_{a=1}^{M+1} d(\lambda_a)\,
    \bar Y_{x-M}(-\lambda_a-\eta/2)\, Y_{y+N}(-\lambda_a-\eta/2)\,
  \frac{\ths(\lambda_{M+1}-\lambda_a+\eta\frac{y-z-2+M}{2})}{\ths (\eta\frac{y-z-2-M}{2}) }\,
  \\  \times
  \frac{\prod_{b=1}^M \ths (\lambda _{a}-\lambda _{b}+\eta)}{\prod_{\substack{b=1 \\ b\neq a}}^{M+1} \ths (\lambda _{a}-\lambda _{b})} \
  \underline{B}_M ( \{\lambda_j\}_{j=1}^{M+1}\setminus\{\lambda_a\}| x, z +2)\,
  \ket{\eta ,y +N-1} , 
\end{multline}
with
\begin{equation}\label{coeff-act-A}
   \bar Y_{x-M}(-\lambda_a-\eta/2)\, Y_{y+N}(-\lambda_a-\eta/2)
   =\ths(\lambda_a+\tfrac{x+y+N+1-M}2\eta)\ths(\tfrac{x-y-N-M}2\eta),
\end{equation}
and
\begin{multline}\label{actD-Bethe}
  D(\lambda_{M+1} | y-M,z)\
  \underline{B}_M(\{\lambda_j\}_{j=1}^M |x -1,z +1)\,
  \ket{\eta ,y}
  \\
  =- \sum_{a=1}^{M+1} a(\lambda_a)\,
    \bar Y_{y-N}(-\lambda_a-\eta/2)\, Y_{z+M}(-\lambda_a-\eta/2)\,
  \frac{\ths(\lambda_{M+1}-\lambda_a+\eta\frac{y-x-M+2}{2})}{\ths(\eta\frac{y-x+M+2}{2})}\,
  \\
  \times \frac{\prod_{j=1}^M\ths(\lambda_a-\lambda_j-\eta)}{\prod_{\substack{j=1 \\ j\neq a}}^{M+1}\ths(\lambda_a-\lambda_j)} \
  \underline{B}_M(\{\lambda_j\}_{j=1}^{M+1}\setminus\{\lambda_a\}| x -2,z)\,
  \ket{\eta ,y+1} ,
\end{multline}
with
\begin{equation}\label{coeff-act-D}
   \bar Y_{y-N}(-\lambda_a-\eta/2)\, Y_{z+M}(-\lambda_a-\eta/2)
   =\ths(\lambda_a+\tfrac{y+z-N+1+M}2\eta)\ths(\tfrac{y-z-N-M}2\eta).
\end{equation}
Note that the expression of the actions \eqref{actA-Bethe} and \eqref{actD-Bethe} nearly formally coincides with those used in the trigonometric case (see \cite{NicT22}) up to the substitution of the $\sinh$ functions by $\ths$ functions, except for the coefficients  \eqref{coeff-act-A} and \eqref{coeff-act-D} which are issued from  Vertex-IRF matrix.

\subsection{Reconstruction of local operators}

As in \cite{NicT22}, we have the following reconstruction formulas for the operators \eqref{gauge-op-n} in terms of the matrix elements of the bulk monodromy matrix \eqref{def-Mgauge}:
\begin{align}
  E_{n}^{i,j}(\xi _{a}|(\alpha ,\beta ),(\gamma ,\delta ))
  &=\prod_{k=1}^{n-1}t(\xi _k-\eta /2)\
  M_{j,i}(\xi _{n}-\eta /2|(\alpha ,\beta ),(\gamma ,\delta ))\,
  \prod_{k=1}^{n} \big[ t(\xi_k-\eta/2)\big]^{-1},
  \label{reconstr-1}
\end{align}
and
\begin{align}
  E_{n}^{i,j}(\xi _{a}|(\gamma ,\delta ),(\alpha ,\beta ))
    &=(-1)^N \prod_{k=1}^n t(\xi_k-\eta /2)\ 
      \frac{\hat{M}_{j,i}(-\eta /2-\xi _{n}|(\alpha ,\beta),(\gamma ,\delta ))}{\det_q M(\xi_n)}\,
      \prod_{k=1}^{n-1} \big[ t(\xi_k-\eta/2)\big]^{-1}
  \nonumber\\
  &=(-1)^{i-j}\,\frac{\det S(-\xi_n|\alpha,\beta)}{\det S(-\xi_n|\gamma,\delta)}\,  
       \prod_{k=1}^n t(\xi_k-\eta /2)\nonumber\\
   &\quad\times
       \frac{M_{3-i,3-j}(\xi_n+\eta/2|(\alpha-1,\beta),(\gamma-1,\delta))}{\det_q M(\xi_n)}\,
      \prod_{k=1}^{n-1} \big[ t(\xi_k-\eta/2)\big]^{-1},
  \label{reconstr-2}
\end{align}
where $t(\lambda )=\tr[M(\lambda )],$ is the bulk transfer matrix.
These expressions can be derived from the solution of the eight-vertex bulk inverse problem \cite{MaiT00} similarly as in Proposition~4.1 of \cite{NicT22}.

\subsection{Other useful identities}

These reconstructions formulas can be used to derive the following identities for the gauged monodromy matrix elements (see Corollary~4.1 of \cite{NicT22})
\begin{align}
  &M_{\epsilon ,\epsilon ^{\prime }}(\xi _{n}-\eta /2|(\alpha ,\beta ),(\gamma,\delta ))\,
  M_{\epsilon,\bar{\epsilon}^{\prime }}(\xi _{n}+\eta/2|(\alpha -1,\beta ),(\gamma' -1,\delta' )) =0, 
  \label{MM=0-1}\\
  &M_{\epsilon ^{\prime },\epsilon }(\xi _{n}+\eta /2|(\gamma -1,\delta),(\alpha -1,\beta ))\,
  M_{\bar{\epsilon}^{\prime },\epsilon}(\xi_{n}-\eta /2|(\gamma' ,\delta' ),(\alpha ,\beta )) =0,
  \label{MM=0-2}
\end{align}
and
\begin{align}
  &M_{\epsilon,\epsilon'}(\xi _{n}-\eta /2|(\alpha ,\beta),(\gamma ,\delta ))\,
  M_{3-\epsilon,\bar\epsilon'}(\xi_n+\eta/2|(\alpha-1,\beta),(\gamma'-1,\delta'))
  =(-1)^{\bar\epsilon-\epsilon}  \frac{\det S(-\xi_n|\alpha',\beta')}{\det S(-\xi_n|\alpha,\beta)}
  \nonumber\\
  &\hspace{2cm}\times
  M_{\bar\epsilon,\epsilon'}(\xi _{n}-\eta /2|(\alpha' ,\beta'),(\gamma ,\delta ))\,
  M_{3-\bar\epsilon,\bar\epsilon'}(\xi_n+\eta/2|(\alpha'-1,\beta'),(\gamma'-1,\delta')),
  \label{MM=MM-1}
  \\
  &
  M_{\epsilon',3-\epsilon}(\xi_n+\eta/2|(\gamma-1,\delta),(\alpha-1,\beta))\,
  M_{\bar\epsilon',\epsilon}(\xi_n-\eta/2|(\gamma',\delta'),(\alpha,\beta))
  =(-1)^{\bar\epsilon-\epsilon}\,\frac{\det S(-\xi_n|\alpha,\beta)}{\det S(-\xi_n|\alpha',\beta')}\,
  \nonumber\\
  &\hspace{2cm}\times
  M_{\epsilon',3-\bar\epsilon}(\xi_n+\eta/2|(\gamma-1,\delta),(\alpha'-1,\beta'))\,
  M_{\bar\epsilon',\bar\epsilon}(\xi_n-\eta/2|(\gamma',\delta'),(\alpha',\beta')),
  \label{MM=MM-2}
\end{align}
which are valid for any $\epsilon,\bar\epsilon,\epsilon',\bar\epsilon'\in\{1,2\}$, and any choice of parameters $\alpha,\beta,\alpha',\beta',\gamma,\delta,\gamma',\delta'$ such that the corresponding Vertex-IRF matrices are invertible.

In particular, we have the following useful identity: 
\begin{multline}\label{Alternative-q-det}
 C(\xi _{n}-\eta /2|(\alpha ,\beta),(\gamma ,\delta ))\,
  B(\xi_n+\eta/2|(\alpha-1,\beta),(\gamma'-1,\delta'))
  \\  
  = - \frac{\det S(-\xi_n|\alpha',\beta')}{\det S(-\xi_n|\alpha,\beta)}\,
  A(\xi _{n}-\eta /2|(\alpha' ,\beta'),(\gamma ,\delta ))\,
  D(\xi_n+\eta/2|(\alpha'-1,\beta'),(\gamma'-1,\delta')).
\end{multline}

The cancellation identities \eqref{MM=0-1}, \eqref{MM=0-2} can also be extended to the case in which the two operators belong to a larger product of operators: the product of operators
\begin{equation}\label{cancel-prod}
   \prod_{k=n\to m}\hspace{-2mm} M_{\epsilon_k,\epsilon'_k}(\xi_k-\eta/2|(\gamma_k,\delta_k),(\alpha_k,\beta_k))\,
   \prod_{k=m\to n}\hspace{-2mm} M_{\bar\epsilon_k,\bar\epsilon'_k}(\xi_k+\eta/2|(\gamma_k-1,\delta_k),(\alpha'_k-1,\beta'_k))
\end{equation}
vanishes as soon as there exists some $k\in\{n,\ldots,m\}$ such that $\epsilon_k=\bar\epsilon_k$.

\section{Trigonometric limit of correlation functions}
\label{app-trigo}

Here, we consider the trigonometric limit $\Im\omega\to +\infty$, and explain how, in this limit, one can recover the results of \cite{NicT23} concerning the elementary building block of correlation functions in the open XXZ spin chain.

This trigonometric limit has already been studied in Appendix A of \cite{NicT24}: there, we have shown that, in this limit,  the scalar products of the separate states in the XYZ open spin chain coincide with  those of the XXZ open spin chain \cite{KitMNT18}. We shall use here the same notations as in Appendix A of \cite{NicT24}.
There, we have implemented the following identifications of the boundary and crossing parameters 
\begin{equation}\label{bparam-trigo}
   \alpha_1^\pm=-i\varphi_\mp,\qquad 
   \alpha_2^\pm=i\psi_\mp -\epsilon\frac\pi 2,\qquad 
   \alpha_3^\pm=i\tau_\mp+\epsilon\frac\pi 2+\frac{\pi\omega}2, \qquad
\eta =i\tilde{\eta},
\end{equation}
for some arbitrary sign $\epsilon$,
so that, when $\Im\omega\to +\infty$ with all other parameters remaining finite, the XYZ Hamiltonian \eqref{Ham} tends to the XXZ one:
\begin{equation}\label{Ham-XXZ}
    H_\text{XXZ} =\sum_{n=1}^{N-1} \Big[ \sigma _n^{x} \sigma _{n+1}^{x}
    +\sigma_n^{y}\sigma _{n+1}^{y}
    +\cosh\tilde\eta \, \sigma _n^{z} \sigma _{n+1}^{z}\Big]
    +\sum_{a\in\{x,y,z\}}\Big[\tilde h_-^a\,\sigma_1^a+\tilde h_+^a\sigma_N^a\Big],
\end{equation}
with
\begin{align}
  &\tilde h_\pm^x 
                  =\frac{\sinh\tilde \eta\, \cosh\tau_\pm}{\sinh\varphi_\pm\,\cosh\psi_\pm},\qquad
  \tilde h_\pm^y 
                  =i\frac{i \sinh\tilde \eta\,\sinh\tau_\pm}{\sinh\varphi_\pm\,\cosh\psi_\pm},\qquad
  \tilde h_\pm^z 
                  =\sinh\tilde \eta\, \coth\varphi_\pm\,\tanh\psi_\pm.
\end{align}
In this trigonometric limit,  the 8-vertex R-matrix \eqref{R-mat}, the 8-vertex boundary K-matrices \eqref{mat-K} and so the 8-vertex transfer matrix all degenerate (up to some possible irrelevant prefactors) to the corresponding ones in the
6-vertex case, with XXZ parameters $\tilde{\eta},\varphi _{\pm },\psi_{\pm },\tau _{\pm }$ and spectral parameter $u=-i\lambda$ (we refer the reader to \cite{NicT24} for more details). 
Moreover, under the choice of signs $\epsilon_3^+=-\epsilon_3^-$, 
$\epsilon_1^\pm=\epsilon_{\varphi_\mp}$, $\epsilon_2^\pm=\pm\epsilon_{\psi_\mp}$, 
the constraint \eqref{const-TQ} can be written in terms of \eqref{bparam-trigo} as
\begin{equation}\label{const-XXZ}
   \epsilon_3^+(\tau_+-\tau_-)+\sum_{\sigma=\pm}\left[ \epsilon_{\varphi_\sigma}\varphi_\sigma+\sigma\epsilon_{\psi_\sigma}(\psi_\sigma+i\epsilon\frac\pi 2)\right]+(N-2M-1)\tilde\eta=0,
\end{equation}
with $\epsilon_{\varphi_+}\epsilon_{\varphi_-}\epsilon_{\psi_+}\epsilon_{\psi_-}=1$, and one recovers in the trigonometric limit $\Im\omega\to +\infty$ the formulation of the $TQ$-equation in the XXZ case as written  in \cite{NicT23}.

To prove that our expressions \eqref{corr-finite} and \eqref{result-thermo} for correlation functions degenerate into their XXZ analog (5.7) and (6.8) of \cite{NicT23} in the trigonometric limit, we then have to study the trigonometric limit of both sides of \eqref{corr-finite} and \eqref{result-thermo}.
It means that we have to show, under the adequate choice of the gauge parameters $\alpha$ and $\beta$, that on the one hand the elements of the local basis of operators defined in \eqref{op-E_m} for the XYZ case  converge to the corresponding elements of the XXZ basis as defined in (4.2) of \cite{NicT23}, and that on the other hand the expression of their corresponding matrix elements (i.e. the right hand sides of \eqref{corr-finite} and \eqref{result-thermo}) converge to those obtained in the XXZ case in \cite{NicT23}.

In order to show the first assertion, let us recall the definition of the Vertex-IRF matrix in the 8-vertex and 6-vertex cases respectively:
\begin{align}
&S_{(\eta )}^{(8V)}(\lambda |\alpha ,\beta )=%
\begin{pmatrix}
\theta _{2}(\lambda -(\alpha +\beta )\eta |2\omega ) & \theta _{2}(\lambda
-(\alpha -\beta )\eta |2\omega ) \\ 
\theta _{3}(\lambda -(\alpha +\beta )\eta |2\omega ) & \theta _{3}(\lambda
-(\alpha -\beta )\eta |2\omega )%
\end{pmatrix},
\\
&S_{(\tilde{\eta})}^{(6V)}(u|\alpha ,\beta )=%
\begin{pmatrix}
e^{u-\tilde{\eta}(\alpha +\beta )} & e^{u-\tilde{\eta}(\alpha -\beta )} \\ 
1 & 1%
\end{pmatrix}.
\end{align}
Then, the following identity holds (with $u=-i\lambda$, $\tilde\eta=-i\eta$, $\alpha,\beta$ remaining finite):
\begin{equation}
\lim_{\Im \omega \rightarrow +\infty }S_{(\eta )}^{(8V)}(\lambda |\alpha -\frac{\pi \omega }{2\eta},\beta )
=S_{(\tilde{\eta})}^{(6V)}(u|\alpha ,\beta ),
\qquad
\end{equation}
 as a trivial consequence the following limits:
\begin{equation}
\lim_{\Im \omega \rightarrow +\infty }\thd _{2}(x+\pi \omega /2)=e^{-ix},\qquad
\lim_{\Im \omega \rightarrow +\infty }\thd_{3}(x+\pi \omega /2)=1.
\end{equation}
Let us now fix the 8-vertex gauge parameters, particularizing the condition \eqref{alpha-beta+} to the following choice (with $\epsilon_3^+=1$):
\begin{equation}
\eta \alpha ^{(8V)}=-\alpha _{3}^{+}=i\tilde{\eta}\alpha ^{(6V)}-\frac{\pi\omega }{2},\qquad 
\eta \beta ^{(8V)}=-\epsilon_1^+(\alpha _{1}^{+}+\alpha _{2}^{+})=i\tilde{\eta}\beta ^{(6V)}.
\end{equation}
Then, the value for the 6-vertex gauge parameters read 
\begin{equation}\label{gauge-6V}
\tilde{\eta}\alpha ^{(6V)}=-\tau _{-}+i\epsilon\frac{\pi }{2},\qquad 
\tilde{\eta}\beta ^{(6V)}=\epsilon_{\varphi_-}(\varphi _{-}-\psi _{-}-i\epsilon\frac{\pi }{2}),
\end{equation}
once we impose the boundary rewriting \eqref{bparam-trigo} with $\epsilon_1^+=\epsilon_{\varphi_-}$.
If we choose moreover here $\epsilon=-\epsilon_{\varphi_-}$,  then \eqref{gauge-6V} exactly coincides with (2.43)-(2.44) of \cite{NicT23} when fixing there\footnote{which is a necessary condition to impose there for the Vertex-IRF matrix to be invertible.} $\epsilon _{\varphi_-}=\epsilon _{-}$.
Hence, under this choice of signs,
\begin{equation}
\lim_{\Im \omega \rightarrow +\infty }S_{(\eta )}^{(8V)}(\lambda |\alpha
^{(8V)},\beta ^{(8V)})=S_{(\tilde{\eta})}^{(6V)}(u|\alpha ^{(6V)},\beta
^{(8V)}),
\end{equation}
and then we obtain the following identity for the trigonometric limit of the local
operators:
\begin{equation}
\lim_{\Im \omega \rightarrow +\infty }\left[ E_{n}^{\epsilon _{n}^{\prime
},\epsilon _{n}}(\xi _{n}|(a_{n},b_{n}),(\bar{a}_{n},\bar{b}_{n}))\right]
^{\left( 8V\right) }=\left[ E_{n}^{\epsilon _{n}^{\prime },\epsilon
_{n}}(\xi _{n}|(a_{n},b_{n}),(\bar{a}_{n},\bar{b}_{n}))\right] ^{\left(
6V\right) },  \label{Trigo-Op-gauged}
\end{equation}
where the local operator on the left hand side is the 8-vertex one, defined
by \eqref{gauge-op-n} in terms of the 8-vertex Vertex-IRF matrix with the
parameters $(a_{n},b_{n})$ and $(\bar{a}_{n},\bar{b}_{n})$ defined by %
\eqref{Gauge.Basis-1} and \eqref{Gauge.Basis-2}, under the choice:%
\begin{equation}
(a,b)=(\alpha ^{(8V)},\beta ^{(8V)}),
\end{equation}
and the local operator on the right hand side is the 6-vertex one, also
defined by \eqref{gauge-op-n} but now in terms of the 6-vertex $S$-matrix
and with the parameters $(a_{n},b_{n})$ and $(\bar{a}_{n},\bar{b}_{n})$
defined by \eqref{Gauge.Basis-1} and \eqref{Gauge.Basis-2}, but under the
choice:
\begin{equation}
(a,b)=(\alpha ^{(6V)},\beta ^{(6V)}).
\end{equation}
This means that the local operators
generated by this trigonometric limit $\Im \omega \rightarrow +\infty $
exactly coincides with those defined in (4.1), (4.3) and (4.4) of \cite%
{NicT23}.

It is now easy to show that, with these conventions, the trigonometric limit of the expressions \eqref{corr-finite} and \eqref{result-thermo} coincide with  the results (5.7)
and (6.8) of \cite{NicT23}, respectively\footnote{Note that, from our choice of domain \eqref{basic-domain}-\eqref{Jxyz-basicdomain},  the trigonometric limit of \eqref{result-thermo} naturally gives the correlation functions of the XXZ chain in the antiferromagnetic regime.}.

\bibliographystyle{SciPost_bibstyle}
\bibliography{/Users/vterras/Documents/Dropbox/Bib_files/biblio.bib}

\end{document}